 \newtheorem{thm}{Theorem}[section]
 \newtheorem{cor}[thm]{Corollary}
 \newtheorem{prop}[thm]{Proposition}
 \theoremstyle{definition}
 \newtheorem{defn}[thm]{Definition}
 \theoremstyle{remark}
 \newtheorem{rem}[thm]{Remark}
 \newtheorem{ex}[thm]{Example}
 \numberwithin{equation}{section}
\DeclareMathOperator{\sech}{sech}
\begin{document}
%-------------------------------------------------------------------------
% editorial commands: to be inserted by the editorial office
%
%\firstpage{1}
%\volume{228}
%\Copyrightyear{2004}
%\DOI{003-0001}
%
%
%\seriesextra{Just an add-on}
%\seriesextraline{This is the Concrete Title of this Book\br H.E. R and S.T.C. W, Eds.}
%
% for journals:
%
%\firstpage{1}
%\issuenumber{1}
%\Volumeandyear{1 (2004)}
%\Copyrightyear{2004}
%\DOI{003-xxxx-y}
%\Signet
%\commby{inhouse}
%\submitted{March 14, 2003}
%\received{March 16, 2000}
%\revised{June 1, 2000}
%\accepted{July 22, 2000}
%
%
%
%---------------------------------------------------------------------------
%Insert here the title, affiliations and abstract:
%

\title[Transmutations and SPPS in eigenvalue problems]{Transmutations and spectral parameter\\ power series in eigenvalue problems }

\author{Vladislav V. Kravchenko}
\address{%
Department of Mathematics, CINVESTAV del IPN, Unidad Queretaro,\\
Libramiento Norponiente No. 2000, Fracc. Real de Juriquilla,\\
Queretaro, Qro. C.P. 76230 MEXICO}
\email{vkravchenko@math.cinvestav.edu.mx}
\thanks{Research was supported by CONACYT, Mexico.
Research of second named author was supported by DFFD, Ukraine (GP/F32/030)
and by SNSF, Switzerland (JRP IZ73Z0 of SCOPES 2009--2012).}

\author{Sergii M. Torba}
\address{%
Department of Mathematics, CINVESTAV del IPN, Unidad Queretaro,\\
Libramiento Norponiente No. 2000, Fracc. Real de Juriquilla,\\
Queretaro, Qro. C.P. 76230 MEXICO}
\email{storba@math.cinvestav.edu.mx}

%----------classification, keywords, date
\subjclass{Primary 34B24, 34L16, 65L15, 81Q05, 81Q60; Secondary 34L25, 34L40}

\keywords{Sturm-Liouville operator, Sturm-Liouville problem, complex eigenvalue, transmutation operator, transformation operator, Schr\"{o}dinger operator, spectral parameter power series, Darboux transformation, quantum well, scalar potential}

\date{February 29, 2012}
%----------additions
\dedicatory{Dedicated to 70th birthday anniversary of Prof. Dr. Vladimir S. Rabinovich.}

\begin{abstract}
We give an overview of recent developments in Sturm-Liouville theory
concerning operators of transmutation (transformation) and spectral parameter
power series (SPPS). The possibility to write down the dispersion
(characteristic) equations corresponding to a variety of spectral problems
related to Sturm-Liouville equations in an analytic form is an attractive
feature of the SPPS method. It is based on a computation of certain systems of
recursive integrals. Considered as families of functions these systems are
complete in the $L_{2}$-space and result to be the images of the nonnegative
integer powers of the independent variable under the action of a corresponding
transmutation operator. This recently revealed property of the Delsarte
transmutations opens the way to apply the transmutation operator even when its
integral kernel is unknown and gives the possibility to obtain further
interesting properties concerning the Darboux transformed Schr\"{o}dinger operators.

We introduce the systems of recursive integrals and the SPPS approach, explain
some of its applications to spectral problems with numerical illustrations,
give the definition and basic properties of transmutation operators, introduce
a parametrized family of transmutation operators, study their mapping
properties and construct the transmutation operators for Darboux transformed
Schr\"{o}dinger operators.

\end{abstract}

\maketitle

\section{Introduction}

Transmutation operators also called operators of transformation are a widely
used tool in the theory of linear differential equations (see, e.g.,
\cite{Gilbert}, \cite{Carroll}, \cite{LevitanInverse}, \cite{Marchenko},
\cite{Trimeche} and the recent review \cite{Sitnik}). It is well known that
under certain quite general conditions the transmutation operator transmuting
the operator $A=-\frac{d^{2}}{dx^{2}}+q(x)$ into $B=-\frac{d^{2}}{dx^{2}}$ is
a Volterra integral operator with good properties. Its integral kernel can be
obtained as a solution of a certain Goursat problem for the Klein-Gordon
equation with a variable coefficient. In particular, the elementary solutions
of the equation $Bv=\lambda v$ are transformed into the solutions of the
equation $Au=\lambda u$. If the integral kernel of the transmutation operator
is unknown, and usually this is the case, there is no way to apply it to an
arbitrary smooth function. This obstacle strongly restricts the application of
the transmutation operators confining it to purely theoretical purposes.

Recently, in \cite{CKT} a relation of the transmutation operators with another
fundamental object of the Sturm-Liouville theory was revealed. Sometimes this
object is called the $L$-basis \cite{Fage} where $L$ refers to a corresponding
linear ordinary differential operator. The $L$-basis is an infinite family of
functions $\left\{  \varphi_{k}\right\}  _{k=0}^{\infty}$ such that
$L\varphi_{k}=0$ for $k=0,1$, $L\varphi_{k}=k(k-1)\varphi_{k-2}$, for
$k=2,3,\ldots$ and all $\varphi_{k}$ satisfy certain prescribed initial
conditions. In \cite{KrCV08}, \cite{APFT}, \cite{KrPorter2010} it was shown
that the $L$-basis naturally arises in a representation of the solutions of
the Sturm-Liouville equation in terms of powers of the spectral parameter. The
approach based on such representation is called the spectral parameter power
series (SPPS) method. The functions $\varphi_{k}$ which constitute the
$L$-basis appear as the expansion coefficients in the SPPS. In \cite{KrCV08},
\cite{APFT} and \cite{KrPorter2010} convenient representations for their
practical computation were proposed which converted the SPPS method into an
efficient and highly competitive technique for solving a variety of spectral
and scattering problems related to Sturm-Liouville equations (see
\cite{CKKO2009}, \cite{CKOR}, \cite{KKRosu}, \cite{KiraRosu2010},
\cite{KrPorter2010}, \cite{KV2011}). The above mentioned relation between the
transmutation operators and the functions $\varphi_{k}$ called in the present
paper the recursive integrals consists in the fact established in \cite{CKT}
that for every system $\left\{  \varphi_{k}\right\}  _{k=0}^{\infty}$ there
exists a transmutation operator $\mathbf{T}$ such that $\mathbf{T}%
[x^{k}]=\varphi_{k}$, i.e., the functions $\varphi_{k}$ are the images of the
usual powers of the independent variable. Moreover, it was shown how this
operator can be constructed and how it is related to the \textquotedblleft
canonical\textquotedblright\ transformation operator considered, e.g., in
\cite[Chapter 1]{Marchenko}. This result together with the practical formulas
for calculating the functions $\varphi_{k}$ makes it possible to apply the
transmutation technique even when the integral kernel of the operator is
unknown. Indeed, now it is easy to apply the transmutation operator to any
function approximated by a polynomial.

Deeper understanding of the mapping properties of the transmutation operators
led us in \cite{KrT2012} to the explicit construction of the transmutation
operator for a Darboux transformed Schr\"{o}dinger operator by a known
transmutation operator for the original Schr\"{o}dinger operator as well as to
several interesting relations between the two transmutation operators. These
relations also allowed us to prove the main theorem on the transmutation
operators under a weaker condition than it was known before (not requiring the
continuous differentiablity of the potential in the Schr\"{o}dinger operator).

In the present paper we overview the recent results related to the SPPS
approach explaining and illustrating its main advantage, the possibility to
write down in an analytic form the characteristic equation of the spectral
problem. This equation can be approximated in different ways, and its
solutions give us the eigenvalues of the problem. In other words the
eigenvalue problem reduces to computation of zeros of a certain complex
analytic function given by its Taylor series whose coefficients are obtained
as simple linear combinations of the values of the functions $\varphi_{k}$ at
a given point. We discuss different applications of the SPPS method and give
the results of some comparative numerical calculations.

Following \cite{CKT} and \cite{KrT2012} we introduce a parametrized family of
transmutation operators and study their mapping properties,we give an explicit
representation for the kernel of the transmutation operator corresponding to
the Darboux transformed potential in terms of the transmutation kernel for its
superpartner (Theorem \ref{Th T2Volterra}). Moreover, this result leads to
interesting commutation relations between the two transmutation operators
(Corollary \ref{Cor Commutation Relations}) which in their turn allow us to
obtain a transmutation operator for the one-dimensional Dirac system with a
scalar potential as well as to prove the main property of the transmutation
operator under less restrictive conditions than it has been proved until now.
We give several examples of explicitly constructed kernels of transmutation
operators. It is worth mentioning that in the literature there are very few
explicit examples and even in the case when $q$ is a constant such kernel was
presented recently in \cite{CKT}. The results discussed in the present paper
allow us to enlarge considerably the list of available examples and give a
relatively simple tool for constructing Darboux related sequences of the
transmutation kernels.

\section{Recursive integrals: a question on the completeness}

Let $f\in C^{2}(a,b)\cap C^{1}[a,b]$ be a complex valued function and
$f(x)\neq0$ for any $x\in[a,b]$. The interval $(a,b)$ is assumed being
finite. Let us consider the following functions%
\begin{multline}
X^{(0)}(x)\equiv1,\qquad X^{(n)}(x)=n\int_{x_{0}}^{x}X^{(n-1)}(s)\left(
f^{2}(s)\right)  ^{(-1)^{n}}\,\mathrm{d}s,\\ x_{0}\in[a,b],\quad
n=1,2,\ldots. \label{Xn}%
\end{multline}

We pose the following questions. \textbf{Is the family of functions $\left\{
X^{(n)}\right\}  _{n=0}^{\infty}$ complete let us say in 
$L_{2}(a,b)$? What about the completeness of $\left\{  X^{(2n)}%
\right\}  _{n=0}^{\infty}$ or $\left\{  X^{(2n+1)}\right\}
_{n=0}^{\infty}$?}

The following example shows that both questions are meaningful and natural.
\begin{ex}
Let $f\equiv1$, $a=0$, $b=1$. Then it is easy to see that choosing $x_{0}=0$
we have
$X^{(0)}(x)=1,\ X^{(1)}(x)=x,\ X^{(2)}(x)=x^{2},\ X^{(3)}(x)=x^{3},\ldots$.
Thus, the family of functions $\left\{  X^{(n)}\right\}  _{n=0}^{\infty}$ is
complete in $L_{2}(0,1)$. Moreover, both $\left\{  X^{(2n)}\right\}
_{n=0}^{\infty}$ and $\left\{  X^{(2n+1)}\right\}  _{n=0}^{\infty}$ are
complete in $L_{2}(0,1)$ as well.

If instead of $a=0$ we choose $a=-1$ then \textbf{$\left\{  X^{(n)}\right\}
_{n=0}^{\infty}$ is still complete in $L_{2}(-1,1)$ but neither $\left\{
X^{(2n)}\right\}  _{n=0}^{\infty}$ nor $\left\{  X^{(2n+1)}\right\}
_{n=0}^{\infty}$.}
\end{ex}

Together with the family of functions $\left\{  X^{(n)}\right\}
_{n=0}^{\infty}$ we consider also another similarly defined family of
functions $\big\{  \widetilde{X}^{(n)}\big\}  _{n=0}^{\infty}$,
\begin{multline}
\widetilde{X}^{(0)}\equiv1,\qquad\widetilde{X}^{(n)}(x)=n\int_{x_{0}}%
^{x}\widetilde{X}^{(n-1)}(s)\left(  f^{2}(s)\right)  ^{(-1)^{n-1}}%
\,\mathrm{d}s,\\ x_{0}\in[a,b],\quad n=1,2,\ldots. \label{Xtiln}%
\end{multline}

\begin{rem}
As we show below the introduced families of functions are closely related to
the one-dimensional Schr\"{o}dinger equations of the form $u^{\prime\prime
}-qu=\lambda u$ where $q$ is a complex-valued continuous function. Slightly
more general families of functions can be studied in relation to
Sturm-Liouville equations of the form $(py^{\prime})^{\prime}+qy=\lambda ry$.
Their definition based on a corresponding recursive integration procedure is
given in \cite{APFT}, \cite{KrPorter2010}, \cite{KKRosu}.
\end{rem}

We introduce the infinite system of functions $\left\{  \varphi_{k}\right\}
_{k=0}^{\infty}$ defined as follows
\begin{equation}
\varphi_{k}(x)=%
\begin{cases}
f(x)X^{(k)}(x), & k\text{\ odd,}\\
f(x)\widetilde{X}^{(k)}(x), & k\text{\ even.}%
\end{cases}
\label{phik}%
\end{equation}

The system (\ref{phik}) is closely related to the notion of the $L$-basis
introduced and studied in \cite{Fage}. Here the letter $L$ corresponds to a
linear ordinary differential operator.

Together with the system of functions (\ref{phik}) we define the functions
$\{\psi_{k}\}_{k=0}^{\infty}$ using the \textquotedblleft second
half\textquotedblright\ of the recursive integrals \eqref{Xn} and
\eqref{Xtiln},
\begin{equation}
\psi_{k}(x)=%
\begin{cases}
\dfrac{\widetilde{X}^{(k)}(x)}{f(x)}, & k\text{\ odd,}\\
\dfrac{X^{(k)}(x)}{f(x)}, & k\text{\ even.}
\end{cases}
\label{psik}%
\end{equation}
The following result obtained in \cite{KrCV08} (for additional details and
simpler proof see \cite{APFT} and \cite{KrPorter2010}) establishes the
relation of the system of functions $\left\{  \varphi_{k}\right\}
_{k=0}^{\infty}$ and $\left\{  \psi_{k}\right\}
_{k=0}^{\infty}$ to the Sturm-Liouville equation.

\begin{thm}
[\cite{KrCV08}]\label{ThGenSolSturmLiouville} Let $q$ be a continuous complex
valued function of an independent real variable $x\in[a,b]$ and
$\lambda$ be an arbitrary complex number. Suppose there exists a solution $f$
of the equation
\begin{equation}
f^{\prime\prime}-qf=0\label{SLhom}%
\end{equation}
on $(a,b)$ such that $f\in C^{2}(a,b)\cap C^{1}[a,b]$ and $f(x)\neq0$\ for any
$x\in[a,b]$. Then the general solution $u\in C^{2}(a,b)\cap C^{1}[a,b]$
of the equation
\begin{equation}
u^{\prime\prime}-qu=\lambda u\label{SLlambda}%
\end{equation}
on $(a,b)$ has the form%
\[
u=c_{1}u_{1}+c_{2}u_{2}%
\]
where $c_{1}$ and $c_{2}$ are arbitrary complex constants,
\begin{equation}
u_{1}=\sum_{k=0}^{\infty}\frac{\lambda^{k}}{(2k)!}\varphi_{2k}\qquad
\text{and}\qquad u_{2}=\sum_{k=0}^{\infty}\frac{\lambda^{k}}{(2k+1)!}%
\varphi_{2k+1}\label{u1u2}%
\end{equation}
and both series converge uniformly on $[a,b]$ together with the series of the
first derivatives which have the form%
\begin{multline}\label{du1du2}
u_{1}^{\prime}=f^{\prime}+\sum_{k=1}^{\infty}\frac{\lambda^{k}}{(2k)!}\left(
\frac{f^{\prime}}{f}\varphi_{2k}+2k\,\psi_{2k-1}\right)  \qquad\text{and}\\
u_{2}^{\prime}=\sum_{k=0}^{\infty}\frac{\lambda^{k}}{(2k+1)!}\left(
\frac{f^{\prime}}{f}\varphi_{2k+1}+\left(  2k+1\right)  \psi_{2k}\right).
\end{multline}
The series of the second derivatives converge uniformly on any segment\linebreak
$[a_{1},b_{1}]\subset(a,b)$.
\end{thm}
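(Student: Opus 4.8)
The plan is to verify directly that the proposed series $u_1$ and $u_2$ satisfy equation \eqref{SLlambda}, by exploiting the recursive definitions \eqref{Xn} and \eqref{Xtiln} to compute derivatives term by term, and then to establish the uniform convergence that legitimizes this termwise differentiation. The central computational observation is that the functions $\varphi_k$ and $\psi_k$ are linked by first-order differential relations coming straight from the Fundamental Theorem of Calculus applied to the recursive integrals. Concretely, I would first record that for the building blocks one has $\bigl(X^{(n)}\bigr)' = n\,X^{(n-1)}\,(f^2)^{(-1)^n}$ and $\bigl(\widetilde X^{(n)}\bigr)' = n\,\widetilde X^{(n-1)}\,(f^2)^{(-1)^{n-1}}$. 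Feeding these into the definitions \eqref{phik} and \eqref{psik}, together with $f'' = qf$ from \eqref{SLhom}, yields the key pair of recursions
\begin{equation*}
\varphi_k' = \frac{f'}{f}\,\varphi_k + k\,\psi_{k-1}, \qquad \psi_k' = -\frac{f'}{f}\,\psi_k + k\,\varphi_{k-1},
\end{equation*}
valid for $k \ge 1$, where the alternation of $f^{\pm 2}$ conspires with the even/odd split in \eqref{phik}--\eqref{psik} to produce these clean formulas regardless of the parity of $k$. These are exactly the relations needed, and in fact the expressions in \eqref{du1du2} are nothing but the first of them summed against the series coefficients.

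Granting for the moment that termwise differentiation is justified, the verification is then algebraic. Differentiating the first recursion once more and using the second to eliminate $\psi_{k-1}'$ gives $\varphi_k'' = q\,\varphi_k + k(k-1)\,\varphi_{k-2}$ for $k\ge 2$, which is precisely the $L$-basis property mentioned in the introduction. Plugging the series $u_1 = \sum_k \lambda^k \varphi_{2k}/(2k)!$ into $u_1'' - q u_1$ and applying this identity, the factor $(2k)(2k-1)$ cancels the factorial and shifts the index, leaving $\sum_{k=1}^\infty \lambda^k \varphi_{2(k-1)}/(2(k-1))! = \lambda\, u_1$; the same index-shift argument handles $u_2$. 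Thus both series formally solve \eqref{SLlambda}, and since $u_1(x_0),u_1'(x_0),u_2(x_0),u_2'(x_0)$ are determined by the initial values of $f$ and $\varphi_1 = f X^{(1)}$ in a way that makes the Wronskian $W(u_1,u_2)(x_0) = f^2(x_0)\neq 0$ nonvanishing, they are linearly independent and hence span the solution space.

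The main obstacle, and the only genuinely analytic point, is establishing the uniform convergence that licenses all of the above. The plan here is to derive a priori growth bounds on $\varphi_k$ and $\psi_k$ by induction from the recursive integrals. Setting $M = \max_{[a,b]} |f^2|$ and $m = \min_{[a,b]} |f^2| > 0$ (both finite and positive by the hypotheses $f\in C^1[a,b]$ and $f\neq 0$ on the compact interval), one shows from \eqref{Xn}--\eqref{Xtiln} that $|X^{(n)}(x)|, |\widetilde X^{(n)}(x)| \le (M/m)^{?}\,(b-a)^n$ up to the appropriate combinatorial factor absorbed by the $n!$ in the denominators; the upshot is a bound of the form $|\varphi_{2k}(x)|/(2k)! \le C\,\rho^{2k}/(2k)!$ for constants $C,\rho$ independent of $x\in[a,b]$. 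Since $\sum_k |\lambda|^k \rho^{2k}/(2k)!$ converges (it is dominated by $\cosh(\sqrt{|\lambda|}\,\rho)$), the Weierstrass $M$-test gives uniform convergence of $u_1$, $u_2$, and by the analogous bounds on $\psi_k$, of the differentiated series \eqref{du1du2}. The restriction to segments $[a_1,b_1]\subset(a,b)$ for the second derivatives is exactly where $C^2(a,b)$ rather than $C^2[a,b]$ enters: the identity $\varphi_k'' = q\varphi_k + k(k-1)\varphi_{k-2}$ requires $q$ (equivalently $f''/f$) to be controlled, which is guaranteed only on closed subintervals of the open interval, so uniform bounds on $\|q\|_\infty$ and hence on the second-derivative series are available only there.
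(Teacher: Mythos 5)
Your proposal is correct and follows essentially the same route as the proof the paper relies on (the paper itself does not reproduce a proof but cites \cite{KrCV08}, with the ``simpler proof'' in \cite{APFT} and \cite{KrPorter2010}): the first-order relations $\varphi_k'=\frac{f'}{f}\varphi_k+k\,\psi_{k-1}$ and $\psi_k'=-\frac{f'}{f}\psi_k+k\,\varphi_{k-1}$, the resulting $L$-basis identity $\varphi_k''-q\varphi_k=k(k-1)\varphi_{k-2}$, the index shift giving $u_{1,2}''-qu_{1,2}=\lambda u_{1,2}$, and the Weierstrass $M$-test via bounds of the type $|X^{(n)}|,|\widetilde X^{(n)}|\le C^n|x-x_0|^n$ are exactly the standard argument. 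Two trivial slips worth fixing: the Wronskian at $x_0$ is $u_1(x_0)u_2'(x_0)-u_1'(x_0)u_2(x_0)=f(x_0)\cdot\frac{1}{f(x_0)}=1$, not $f^2(x_0)$ (harmless, since nonvanishing is all you use); and your explanation of why the second-derivative series is only claimed uniformly convergent on $[a_1,b_1]\subset(a,b)$ is off, because $q$ is assumed continuous on the closed interval $[a,b]$ and hence $\|q\|_\infty$ is controlled there --- the restriction stems from $f$, and therefore the $\varphi_k$, being assumed twice differentiable only on the open interval.
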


The representation (\ref{u1u2}) offers the linearly independent solutions of
(\ref{SLlambda}) in the form of spectral parameter power series (SPPS). The
possibility to represent solutions of the Sturm-Liouville equation in such
form is by no means a novelty, though it is not a widely used tool (in fact,
besides the work reviewed below and in \cite{KKRosu} we are able to mention
only \cite[Sect. 10]{Bellman}, \cite{DelsarteLions1956} and the recent paper
\cite{KostenkoTeschl}) and to our best knowledge for the first time it was
applied for solving spectral problems in \cite{KrPorter2010}. The reason of
this underuse of the SPPS lies in the form in which the expansion coefficients
were sought. Indeed, in previous works the calculation of coefficients was
proposed in terms of successive integrals with the kernels in the form of
iterated Green functions (see \cite[Sect. 10]{Bellman}). This makes any
computation based on such representation difficult, less practical and even
proofs of the most basic results like, e.g., the uniform convergence of the
spectral parameter power series for any value of $\lambda\in\mathbb{C}$
(established in Theorem \ref{ThGenSolSturmLiouville}) are not an easy task.
For example, in \cite[p. 16]{Bellman} the parameter $\lambda$ is assumed to be
small and no proof of convergence is given.%\looseness=-1

The way of how the expansion coefficients in (\ref{u1u2}) are calculated
according to (\ref{Xn}), (\ref{Xtiln}) is relatively simple and
straightforward, this is why the estimation of the rate of convergence of the
series (\ref{u1u2}) presents no difficulty, see \cite{KrPorter2010}. Moreover,
in \cite{CamposKr} a discrete analogue of Theorem \ref{ThGenSolSturmLiouville}
was established and the discrete analogues of the series (\ref{u1u2}) resulted
to be finite sums.

Another crucial feature of the introduced representation of the expansion
coefficients in (\ref{u1u2}) consists in the fact that not only these
coefficients (denoted by $\varphi_{k}$ in (\ref{phik})) are required for
solving different spectral problems related to the Sturm-Liouville equation.
Indeed, the functions $\widetilde{X}^{(2k+1)}$ and $X^{(2k)}$, $k=0,1,2,\ldots
$ do not participate explicitly in the representation (\ref{u1u2}). Nevertheless,
together with the functions $\varphi_{k}$ they appear in the representation \eqref{du1du2} of the derivatives of the solutions and therefore also in
characteristic equations corresponding to the spectral problems.

In the present work we also overview another approach developed in
\cite{KrCMA2011}, \cite{CKM}, \cite{CKT} and \cite{KrT2012} where the formal
powers (\ref{Xn}) and (\ref{Xtiln}) were considered as infinite families of
functions intimately related to the corresponding Sturm-Liouville operator. As
we show this leads to a deeper understanding of the transmutation operators
\cite{Gilbert}, \cite{Carroll} also known as transformation operators
\cite{LevitanInverse}, \cite{Marchenko}. Indeed, the functions $\varphi
_{k}(x)$ result to be the images of the powers $x^{k}$ under the action of a
corresponding transmutation operator \cite{CKT}. This makes it possible to
apply the transmutation operator even when the operator itself is unknown (and
this is the usual situation -- very few explicit examples are available) due
to the fact that its action on every polynomial is known. This result was used
in \cite{CKM} and \cite{CKT} \ to prove the completeness (Runge-type
approximation theorems) for families of solutions of two-dimensional
Schr\"{o}dinger and Dirac equations with variable complex-valued coefficients.

\begin{rem}
\label{RemInitialValues}It is easy to see that by definition the solutions
$u_{1}$ and $u_{2}$ from (\ref{u1u2}) satisfy the following initial
conditions
\begin{align*}
u_{1}(x_{0})  &  =f(x_{0}), & u_{1}^{\prime}(x_{0})  &  =f^{\prime}(x_{0}),\\
u_{2}(x_{0})  &  =0, & u_{2}^{\prime}(x_{0})  &  =1/f(x_{0}).
\end{align*}

\end{rem}

\begin{rem}
\label{RemarkNonVanish} It is worth mentioning that in the regular case the
existence and construction of the required $f$ presents no difficulty. Let $q$
be real valued and continuous on $[a,b]$. Then (\ref{SLhom}) possesses two
linearly independent regular solutions\/ $v_{1}$ and $v_{2}$ whose zeros
alternate. Thus one may choose $f=v_{1}+iv_{2}$. Moreover, for the
construction of $v_{1}$ and $v_{2}$ in fact the same SPPS method may be used
\cite{KrPorter2010}.
\end{rem}

Theorem \ref{ThGenSolSturmLiouville} together with the results on the
completeness of Sturm-Liouville eigenfunctions and generalized eigenfunctions
\cite{Marchenko} implies the validity of the following two statements. For
their detailed proofs we refer to \cite{KrCMA2011} and \cite{KMoT} respectively.

\begin{thm}
[\cite{KrCMA2011}]Let $(a,b)$ be a finite interval and $f\in C^{2}(a,b)\cap
C^{1}[a,b]$ be a complex valued function such that $f(x)\neq0$ for any
$x\in[a,b]$.

If $x_{0}=a$ (or $x_{0}=b$) then each of the four systems of functions
$\big\{  X^{(2n)}\big\}_{n=0}^{\infty}$, $\big\{
X^{(2n+1)}\big\}_{n=0}^{\infty}$, $\big\{  \widetilde{X}^{(2n)}\big\}
_{n=0}^{\infty}$, $\big\{  \widetilde{X}^{(2n+1)}\big\}  _{n=0}^{\infty}$
is complete in $L_{2}(a,b)$.

If $x_{0}$ is an arbitrary point of the interval $(a,b)$ then each of the
following two combined systems of functions $\big\{  \widetilde{X}%
^{(2n)}\big\}  _{n=0}^{\infty}\cup\big\{  X^{(2n+1)}\big\}
_{n=0}^{\infty}$ and $\big\{  \widetilde{X}^{(2n+1)}\big\}  _{n=0}^{\infty
}\cup\big\{  X^{(2n)}\big\}  _{n=0}^{\infty}$ is complete in $L_{2}(a,b)$.
\end{thm}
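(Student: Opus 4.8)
The plan is to derive every completeness assertion from Theorem~\ref{ThGenSolSturmLiouville} and from the completeness of the root functions (eigenfunctions together with the associated generalized eigenfunctions) of regular Sturm--Liouville problems, established in \cite{Marchenko}. First I would reduce the $X$- and $\widetilde X$-systems to the systems $\{\varphi_k\}$ and $\{\psi_k\}$ of \eqref{phik} and \eqref{psik}. Since $f\in C^1[a,b]$ and $f(x)\ne0$ on the compact interval $[a,b]$, there are constants $0<m\le|f(x)|\le M<\infty$, so the multiplication operator $M_f\colon g\mapsto fg$ is a bounded bijection of $L_2(a,b)$ with bounded inverse; being a homeomorphism it carries dense linear spans to dense linear spans. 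From \eqref{phik} and \eqref{psik} we have $\varphi_{2k+1}=fX^{(2k+1)}$, $\varphi_{2k}=f\widetilde X^{(2k)}$, $X^{(2k)}=f\psi_{2k}$ and $\widetilde X^{(2k+1)}=f\psi_{2k+1}$, so each system in the statement is the image under $M_f$ or $M_f^{-1}$ of one of $\{\varphi_{2k}\}$, $\{\varphi_{2k+1}\}$, $\{\psi_{2k}\}$, $\{\psi_{2k+1}\}$, while the two combined systems correspond to $\{\varphi_k\}_{k\ge0}$ and $\{\psi_k\}_{k\ge0}$. Thus it suffices to prove completeness of these $\varphi$- and $\psi$-systems.

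Next I would remove the $\psi$-systems by passing to the superpartner potential. Put $\tilde q:=2(f'/f)^2-q$; a direct differentiation shows that $1/f$ solves $(1/f)''-\tilde q\,(1/f)=0$, and one checks $\tilde q\in C[a,b]$ and $1/f\in C^2(a,b)\cap C^1[a,b]$ with $1/f\ne0$, so $(\tilde q,1/f)$ is again admissible. Comparing the recursions \eqref{Xn} and \eqref{Xtiln}, the substitution $f\mapsto 1/f$ replaces $(f^2)^{(-1)^n}$ by $(f^2)^{(-1)^{n-1}}$ and hence interchanges $X^{(n)}$ with $\widetilde X^{(n)}$; consequently the $\varphi$-system built from $(\tilde q,1/f)$ via \eqref{phik} is exactly $\{\psi_k\}$. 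Therefore the completeness statements for the $\psi$-systems are the completeness statements for the $\varphi$-systems of $\tilde q$, and it remains to treat the $\varphi$-systems for an arbitrary admissible pair $(q,f)$.

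The heart of the matter is that orthogonality to a $\varphi$-system is equivalent to orthogonality to a family of solutions of \eqref{SLlambda}. By Theorem~\ref{ThGenSolSturmLiouville} the series $u_1(x,\lambda)=\sum_{k}\frac{\lambda^k}{(2k)!}\varphi_{2k}(x)$ and $u_2(x,\lambda)=\sum_{k}\frac{\lambda^k}{(2k+1)!}\varphi_{2k+1}(x)$ converge uniformly in $x\in[a,b]$, so for $g\in L_2(a,b)$ the function $\lambda\mapsto\int_a^b g\,\overline{u_2(\cdot,\lambda)}=\sum_{k}\frac{\bar\lambda^k}{(2k+1)!}\int_a^b g\,\overline{\varphi_{2k+1}}$ is entire in $\bar\lambda$ (the interchange being justified by uniform convergence on a finite interval), and likewise for $u_1$ with the $\varphi_{2k}$. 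Hence $g\perp\varphi_{2k+1}$ for all $k$ iff $g\perp u_2(\cdot,\lambda)$ for all $\lambda\in\mathbb C$, $g\perp\varphi_{2k}$ for all $k$ iff $g\perp u_1(\cdot,\lambda)$ for all $\lambda$, and $g\perp\varphi_k$ for all $k$ iff $g$ is orthogonal to every solution $c_1u_1+c_2u_2$ for every $\lambda$. Now I would attach a boundary value problem. In the endpoint case $x_0=a$, Remark~\ref{RemInitialValues} gives $u_2(a,\lambda)\equiv0$ and $f(a)u_1'(a,\lambda)-f'(a)u_1(a,\lambda)\equiv0$, i.e. $u_2$ (resp. $u_1$) satisfies a fixed, $\lambda$-independent separated condition at $a$; imposing any regular separated condition at $b$ produces a Sturm--Liouville problem whose eigenfunctions are precisely $u_2(\cdot,\lambda_j)$ (resp. $u_1(\cdot,\lambda_j)$) at the eigenvalues $\lambda_j$. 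In the interior case one uses instead that $g$ annihilates the whole solution space and works with, say, the Dirichlet problem on $(a,b)$. Since the entire function $\lambda\mapsto\int_a^b g\,\overline{u(\cdot,\lambda)}$ vanishes identically, so do all of its $\bar\lambda$-derivatives, whence $g$ is orthogonal not only to every eigenfunction but to every generalized eigenfunction as well; the completeness of the root system \cite{Marchenko} then forces $g=0$.

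The step I expect to be the main obstacle is this last one: keeping the bookkeeping of the boundary value problem correct when $q$ is complex and the problem is non-self-adjoint. One must select the auxiliary condition at $b$ so that the problem is regular and its root functions are complete, verify that the SPPS solutions $u_1,u_2$ really exhaust its eigenfunctions and associated functions, and confirm that orthogonality to all generalized eigenfunctions does follow from the identical vanishing of $\lambda\mapsto\int_a^b g\,\overline{u(\cdot,\lambda)}$ together with all its derivatives. The interior case additionally requires explaining, in accordance with the Example, why a single system $\{\varphi_{2k}\}$ or $\{\varphi_{2k+1}\}$ need not be complete for $x_0\in(a,b)$: there $u_2$ satisfies no endpoint condition, so only the union of the two systems, equivalent to orthogonality to the full two-dimensional solution space, recovers completeness.
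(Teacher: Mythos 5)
Your proposal is correct and takes essentially the same route as the paper: the paper obtains this theorem precisely from Theorem~\ref{ThGenSolSturmLiouville} (the SPPS representation) combined with the completeness of Sturm--Liouville eigenfunctions and generalized eigenfunctions from \cite{Marchenko}, deferring the detailed bookkeeping to \cite{KrCMA2011}. Your reductions --- multiplication by $f$, the substitution $f\mapsto 1/f$ interchanging the $X$- and $\widetilde{X}$-systems, and the entire-function argument converting orthogonality to the $\varphi_k$ into orthogonality to root functions of an attached regular boundary value problem --- are exactly the intended implementation of that sketch.
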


\begin{thm}
[\cite{KMoT}]\label{ThComplMaxNorm}Let $f$ satisfy the conditions of the
preceding theorem and $\left\{  \varphi_{k}\right\}  _{k=0}^{\infty}$ be the
system of functions defined by \eqref{phik} with $x_{0}$ being an arbitrary
point of the interval $[a,b]$. Then for any complex valued continuous,
piecewise continuously differentiable function $h$ defined on $[a,b]$ and for
any $\varepsilon>0$ there exists such $N\in\mathbb{N}$ and such complex
numbers $\alpha_{k}$, $k=0,1,\ldots,N$ that 
\[\max_{x\in[a,b]}\left\vert
h(x)-\sum_{k=0}^{N}\alpha_{k}\varphi_{k}\right\vert <\varepsilon.\]
\end{thm}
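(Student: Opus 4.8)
My plan is to reduce the statement to the classical Weierstrass approximation theorem by means of the transmutation operator $\mathbf{T}$ that realizes the system $\{\varphi_k\}$ as the image of the monomials. As recalled in the introduction following \cite{CKT}, for the system \eqref{phik} there is a transmutation operator $\mathbf{T}$ with $\mathbf{T}[x^k]=\varphi_k$ for every $k$, and $\mathbf{T}$ is of Volterra type (a second-kind operator with continuous kernel). The crucial analytic input, which I would isolate as a preliminary lemma, is that such an operator acts boundedly on $C[a,b]$ equipped with the maximum norm and is boundedly invertible there, its inverse being again a Volterra operator of the second kind; this is the standard resolvent construction for Volterra equations with continuous kernels, and it is exactly the step that turns an $L_2$ statement into a uniform one.

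Granting this, the argument is short. First I would set $g=\mathbf{T}^{-1}h$, which belongs to $C[a,b]$ since $h$ does and $\mathbf{T}^{-1}$ preserves continuity. Next, by the Weierstrass theorem the monomials $\{x^k\}_{k=0}^{\infty}$ are dense in $C[a,b]$ in the maximum norm, so for the given $\varepsilon>0$ I can choose $N$ and complex coefficients $\alpha_0,\dots,\alpha_N$ with $\max_{x\in[a,b]}\bigl|g(x)-\sum_{k=0}^{N}\alpha_k x^k\bigr|<\varepsilon/\|\mathbf{T}\|$, where $\|\mathbf{T}\|$ is the operator norm on $C[a,b]$. Applying $\mathbf{T}$ and using its linearity together with $\mathbf{T}[x^k]=\varphi_k$ gives $\mathbf{T}\bigl[\sum_{k=0}^{N}\alpha_k x^k\bigr]=\sum_{k=0}^{N}\alpha_k\varphi_k$, while boundedness yields $\max_{x\in[a,b]}\bigl|h-\sum_{k=0}^{N}\alpha_k\varphi_k\bigr|\le\|\mathbf{T}\|\cdot\max_{x\in[a,b]}\bigl|g-\sum_{k=0}^{N}\alpha_k x^k\bigr|<\varepsilon$, which is the desired estimate. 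I note that this route uses only the continuity of $h$; the piecewise continuous differentiability assumed in the statement is not needed here, and becomes relevant only in the alternative approach below.

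The alternative, closer to the route indicated before the theorem, is to invoke Theorem \ref{ThGenSolSturmLiouville} together with Marchenko's completeness theorems for eigenfunctions and generalized eigenfunctions \cite{Marchenko}. Here one fixes a boundary value problem for $u''-qu=\lambda u$, writes each (generalized) eigenfunction as a uniformly convergent spectral parameter power series in the $\varphi_k$ via \eqref{u1u2}, and truncates these expansions. A piecewise $C^1$ function $h$ admits an eigenfunction expansion with uniform convergence, and substituting the truncated series of each eigenfunction produces a finite linear combination of the $\varphi_k$; this is precisely where the extra smoothness of $h$ enters, since one needs uniform rather than merely $L_2$ convergence of the expansion.

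The main obstacle is the same in spirit in either route, namely upgrading completeness from the $L_2$ sense to uniform approximation. In $L_2$ the result is already essentially contained in the preceding theorem: since $\{\varphi_k\}=f\cdot\bigl(\{\widetilde{X}^{(2n)}\}\cup\{X^{(2n+1)}\}\bigr)$ and multiplication by the bounded nonvanishing factor $f$ is an isomorphism of $L_2(a,b)$, the $L_2$-completeness of $\{\varphi_k\}$ follows immediately. The work is therefore concentrated in the passage to the maximum norm: in the transmutation route it sits entirely in the boundedness of $\mathbf{T}^{\pm1}$ on $C[a,b]$, whereas in the eigenfunction route it sits in the uniform convergence of the expansion, which is exactly why the hypothesis is imposed for continuous, piecewise continuously differentiable $h$.
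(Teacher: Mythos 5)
Your primary argument takes a genuinely different route from the paper's. The paper obtains this theorem (see the sentence immediately preceding its statement) from Theorem \ref{ThGenSolSturmLiouville} combined with Marchenko's completeness theorems for eigenfunctions and generalized eigenfunctions \cite{Marchenko}, with details in \cite{KMoT} --- that is, what you present as the ``alternative'' eigenfunction-expansion route is the paper's actual proof, sketched by you at roughly the same level of detail as the paper itself. Your main route (reduce to Weierstrass via $h=\mathbf{T}g$, approximate $g$ by a polynomial, apply the bounded operator $\mathbf{T}$ with $\mathbf{T}[x^k]=\varphi_k$) is the mechanism the paper itself advertises after Theorem \ref{Th Transmutation of Powers} for proving completeness ``in various situations,'' and the estimate $\max|h-\sum_k\alpha_k\varphi_k|\le\|\mathbf{T}\|\max|g-\sum_k\alpha_k x^k|$ is correct as far as it goes.

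The genuine gap is that the transmutation operator you invoke is not available under the hypotheses of the theorem being proved. Theorem \ref{Th Transmutation of Powers} supplies an operator of the form \eqref{Tmain}--\eqref{Kmain} with $\mathbf{T}_h[x^k]=\varphi_k$ only when (i) the segment is symmetric about $x_0$ (the kernel and the Goursat problem \eqref{GoursatH1}--\eqref{GoursatH2} are tied to the midpoint, with $f(0)=1$), and (ii) the potential is continuous on the \emph{closed} segment. Issue (i) --- here $[a,b]$ is arbitrary and $x_0$ is an arbitrary point of it --- is repairable: extend $f$ as a non-vanishing $C^2$ function to a symmetric segment centered at $x_0$ containing $[a,b]$, extend $h$ continuously, and restrict at the end; but you never perform this step. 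Issue (ii) is more serious: the theorem only assumes $f\in C^{2}(a,b)\cap C^{1}[a,b]$, $f\neq0$, so the associated potential $q=f''/f$ is continuous merely on the open interval and need not even be integrable up to the endpoints (take, e.g., $f(x)=1+\int_0^x t\sin(1/t)\,dt$ on $[0,1]$, for which $f''\notin L_1$), whereas the whole kernel theory behind $\mathbf{K}(x,t;h)$, hence your ``preliminary lemma,'' presupposes at least $q\in C[-a,a]$. So under the stated hypotheses the continuous-kernel Volterra operator you rely on may simply not exist on $[a,b]$. This is also why the statement carries the piecewise-$C^1$ assumption on $h$: the proof that works in this generality is the eigenfunction route, where that smoothness buys uniform (rather than $L_2$) convergence of the expansion; the fact that your argument seemed to delete a hypothesis of the theorem should have been a warning sign. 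Your route is valid --- and indeed yields the stronger conclusion for merely continuous $h$, as in \cite{CKT} --- under the additional assumption that $q=f''/f$ extends continuously (or at least integrably) to the closed interval.
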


\section{Dispersion relations for spectral problems and approximate solutions}

The SPPS representation (\ref{u1u2}) for solutions of the Sturm-Liouville
equation (\ref{SLlambda}) is very convenient for writing down the dispersion
(characteristic) relations in an analytical form. This fact was used in
\cite{CKOR}, \cite{KKRosu}, \cite{KiraRosu2010}, \cite{KrPorter2010},
\cite{KV2011} for approximating solutions of different eigenvalue problems.
Here in order to explain this we consider two examples: the Sturm-Liouville
problem and the quantum-mechanical eigenvalue problem. As the performance of
the SPPS method in application to classical (regular and singular)
Sturm-Liouville problems was studied in detail in \cite{KrPorter2010} here we
consider the Sturm-Liouville problems with boundary conditions which depend on
the spectral parameter $\lambda$. This situation occurs in many physical
models (see, e.g., \cite{BenAmara,Chanane2008,CodeBrowne2005,CoskunBayram2005,
Fulton77,Walter} and references therein) and is considerably more difficult
from the computational point of view. Moreover, as we show in this section the
SPPS method is applicable to models admitting complex eigenvalues - an
important advantage in comparison with the best purely numerical techniques
all of them being based on the shooting method.

Consider the equation $u^{\prime\prime}-qu=\lambda u$ together with the
boundary conditions
\begin{equation}
u(a)\cos\alpha+u^{\prime}(a)\sin\alpha=0, \label{at0}%
\end{equation}%
\begin{equation}
\beta_{1}u(b)-\beta_{2}u^{\prime}(b)=\phi(\lambda)\big(  \beta_{1}^{\prime
}u(b)-\beta_{2}^{\prime}u^{\prime}(b)\big)  , \label{SL53}%
\end{equation}
where $\alpha$ is an arbitrary complex number, $\phi$ is a complex-valued
function of the variable $\lambda$ and $\beta_{1}$, $\beta_{2}$, $\beta
_{1}^{\prime}$, $\beta_{2}^{\prime}$ are complex numbers. For some special
forms of the function $\phi$ such as $\phi(\lambda)=\lambda$ or $\phi
(\lambda)=\lambda^{2}+c_{1}\lambda+c_{2}$, results were obtained
\cite{CodeBrowne2005}, \cite{Walter} concerning the regularity of the problem;
we will not dwell upon the details.\ Notice that the SPPS approach is
applicable as well to a more general Sturm-Liouville equation $(pu^{\prime
})^{\prime}+qu=\lambda ru$. For the corresponding details we refer to
\cite{KKRosu} and \cite{KrPorter2010}.

For simplicity, let us suppose that $\alpha=0$ and hence the condition
\eqref{at0} becomes $u(a)=0$. Then choosing the initial integration point in
\eqref{Xn} and \eqref{Xtiln} as $x_{0}=a$ and taking into account Remark
\ref{RemInitialValues} we obtain that if an eigenfunction exists it
necessarily coincides with $u_{2}$ up to a multiplicative constant. In this
case condition \eqref{SL53} becomes equivalent to the equality
\cite{KrPorter2010}, \cite{KKRosu}\looseness=-1

\begin{equation}
\bigl(f(b)\phi_{1}(\lambda)-f'(b)\phi_{2}(\lambda)\bigr)
\sum_{k=0}^{\infty}
\frac{\lambda^{k}}{(2k+1)!}X^{(2k+1)}(b)-
\frac{\phi_{2}(\lambda)}{f(b)}
\sum_{k=0}^{\infty}
\frac{\lambda^{k}}{(2k)!}X^{(2k)}(b)=0, \label{eqSLparam}%
\end{equation}
where $\phi_{1,2}(\lambda)=\beta_{1,2}-\beta_{1,2}^{\prime}\phi(\lambda)$.
This is the characteristic equation of the considered spectral problem.
Calculation of eigenvalues given by (\ref{eqSLparam}) is especially simple in
the case of $\phi$ being a polynomial of $\lambda$. Precisely this particular
situation was considered in all of the above-mentioned references concerning
Sturm-Liouville problems with spectral parameter dependent boundary
conditions. In any case the knowledge of an explicit characteristic equation
(\ref{eqSLparam}) for the spectral problem makes possible its accurate and
efficient solution. For this the infinite sums in (\ref{eqSLparam}) are
truncated after a certain $N\in\mathbb{N}$. The paper \cite{KrPorter2010}
contains several numerical tests corresponding to a variety of computationally
difficult problems. All they reveal an excellent performance of the SPPS
method. We do not review them here referring the interested reader to
\cite{KrPorter2010}. Instead we consider another interesting example from
\cite{KKRosu}, a Sturm-Liouville problem admitting complex eigenvalues.

\begin{ex}
\label{Example1}Consider the equation (\ref{SLlambda}) with $q\equiv0$ on the
interval $(0,\pi)$ with the boundary conditions $u(0)=0$ and $u(\pi
)=-\lambda^{2}u(\pi)$. The exact eigenvalues of the problem are $\lambda
_{n}=n^{2}$ together with the purely imaginary numbers $\lambda_{\pm}=\pm i$.
Application of the SPPS method with $N=100$ and $3000$ interpolating points
(used for representing the integrands as splines) delivered the following
results $\lambda_{1}=1$, $\lambda_{2}=4.0000000000007$, $\lambda
_{3}=9.00000000001$, $\lambda_{4}=15.99999999996$, $\lambda_{5}=25.000000002$,
$\lambda_{6}=35.99999997$, $\lambda_{7}=49.0000004$, $\lambda_{8}=63.9999994$,
$\lambda_{9}=80.9996$, $\lambda_{10}=100.02$ and $\lambda_{\pm}=\pm i$. Thus,
the complex eigenvalues are as easily and accurately detected by the SPPS
method as the real eigenvalues. Note that for a better accuracy in calculation
of higher eigenvalues of a Sturm-Liouville problem an additional simple
shifting procedure described in \cite{KrPorter2010} and based on the
representation of solutions not as series in powers of $\lambda$ but in powers
of $(\lambda-\lambda_{0})$ is helpful. We did not apply it here and hence the
accuracy of the calculated value of $\lambda_{10}$ is considerably worse than
the accuracy of the first calculated eigenvalues which in general can be
improved by means of the mentioned shifting procedure.

Figures 1-3 give us an idea about the stability of the computed eigenvalues
when $N$ increases. In Fig. 1 we plot $\lambda_{1}$ and $\lambda_{2}$ computed
with $N=14,16,\ldots,120$. Figure 2 shows $\lambda_{3}$ computed with
$N=24,30,\ldots,140$ and Figure 3 shows $\lambda_{4}$ computed with
$N=40,50,\ldots,140$ \ Similar plots can be done for calculated higher
eigenvalues. In all cases the computed eigenvalues reveal a remarkable
stability when $N$ increases.
\begin{figure}
[ptb]
\begin{center}
\includegraphics[
natheight=4.379400in,
natwidth=5.862600in,
height=3.0044in,
width=4.0136in
]%
{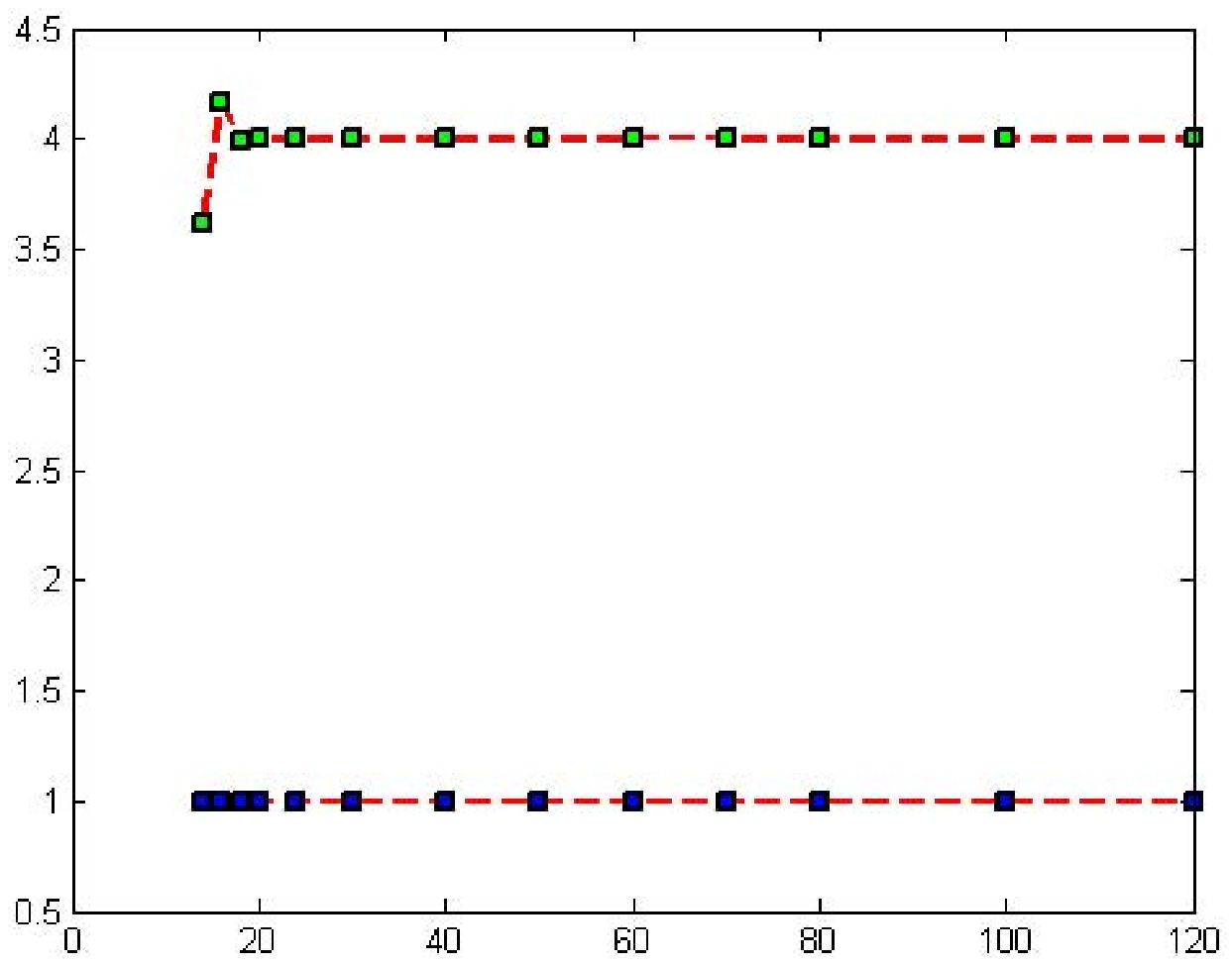}%
\caption{\small The approximate eigenvalues $\lambda_{1}$ and $\lambda_{2}$ from
Example \ref{Example1} computed using different number $N$ of formal powers.}%
\end{center}
\end{figure}

\begin{figure}
[ptb]
\begin{center}
\includegraphics[
natheight=4.379400in,
natwidth=5.862600in,
height=3.0044in,
width=4.0136in
]
{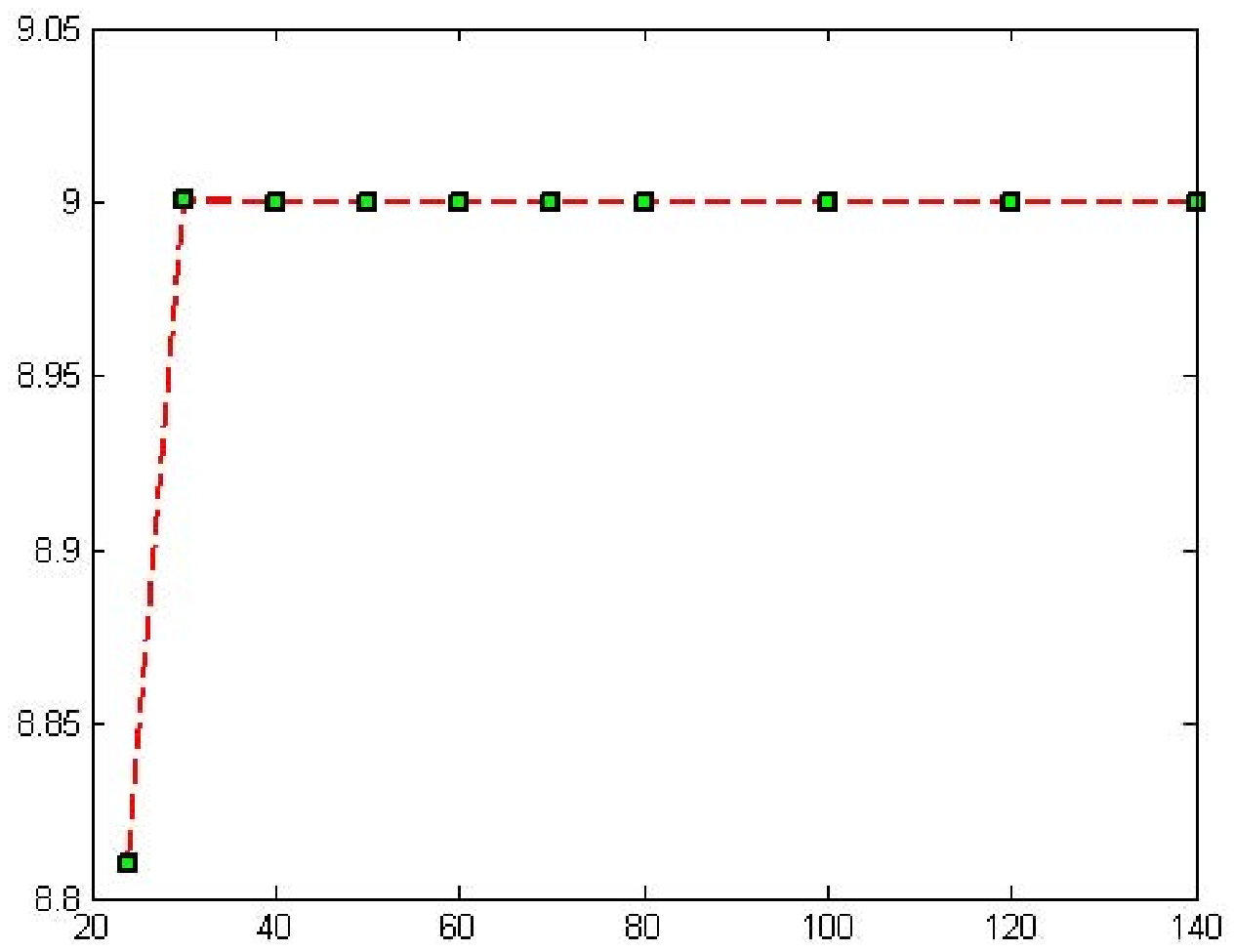}
\caption{\small The approximate values of $\lambda_{3}$ from Example \ref{Example1}
computed using different number $N$ of formal powers.}%
\end{center}
\end{figure}

\begin{figure}
[ptb]
\begin{center}
\includegraphics[
natheight=4.379400in,
natwidth=5.862600in,
height=3.0044in,
width=4.0136in
]%
{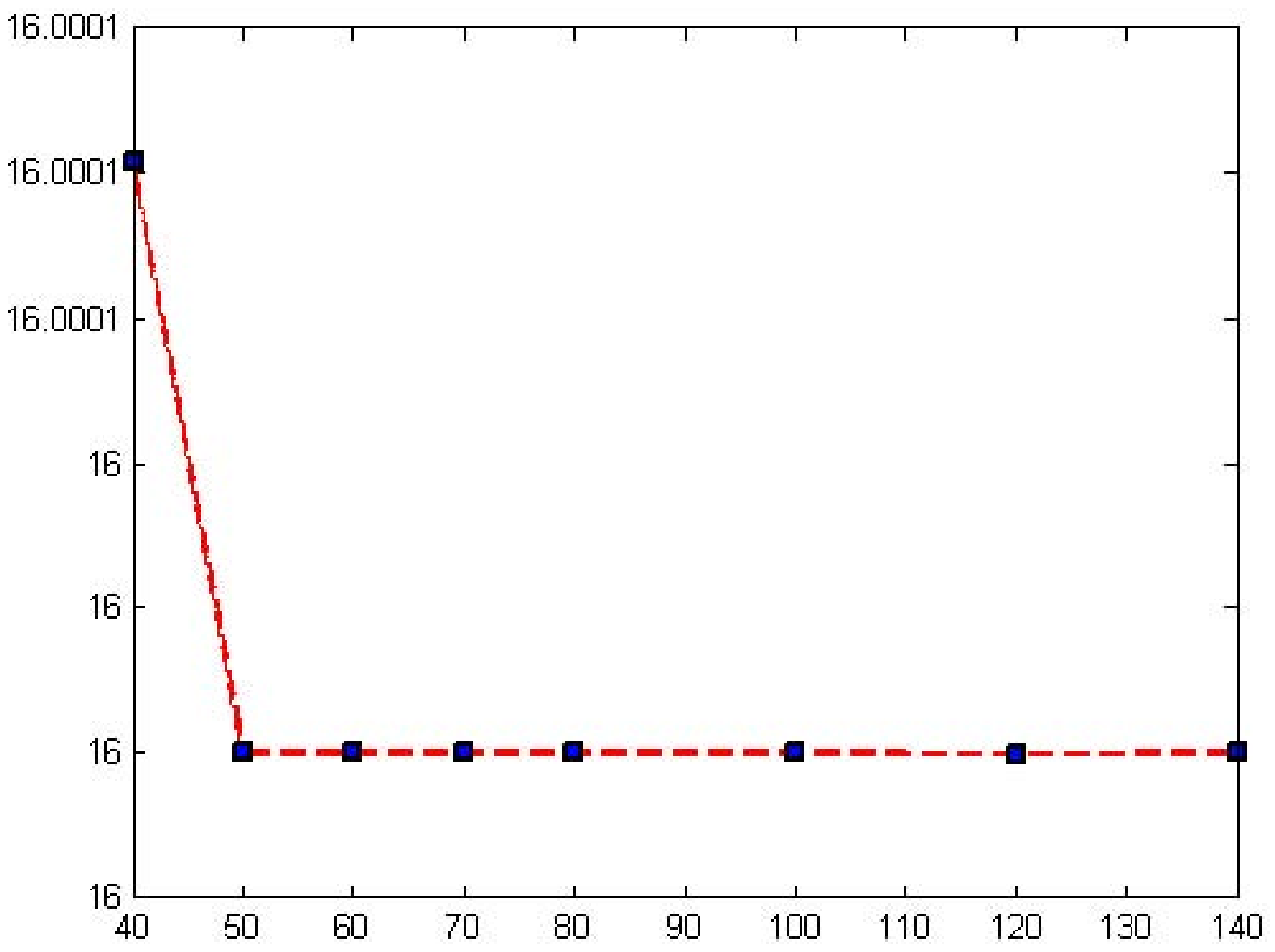}%
\caption{\small The approximate values of $\lambda_{4}$ from Example \ref{Example1}
computed using different number $N$ of formal powers.}%
\end{center}
\end{figure}

An attractive feature of the SPPS method is the possibility to easily plot the
characteristic relation. In Fig. 4 we show the absolute value of the
expression from the left-hand side of (\ref{eqSLparam}) as a function of the
complex variable $\lambda$ for the considered example. Its zeros coincide with
the eigenvalues of the problem. It is important to mention that such plot is
obtained in a fraction of a second. This is due to the fact that once the
required formal powers $X^{(n)}$ are computed (and this takes several seconds)
the calculation of the characteristic relation involves only simple algebraic
operations.
\begin{figure}
[ptb]
\begin{center}
\includegraphics[
height=2.45in,
width=4.6in
]%
{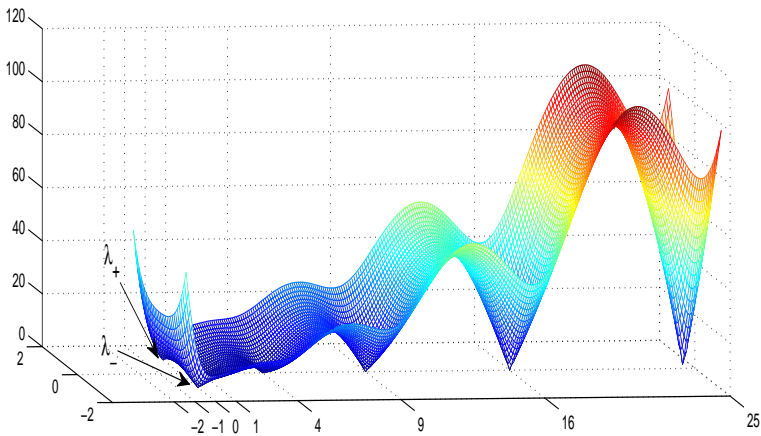}%
\caption{\small The absolute value of the expression from the left-hand side of
(\ref{eqSLparam}) as a function of the complex variable $\lambda$ for the
considered example. With the arrows we indicate the calculated complex
eigenvalues $\lambda_{\pm}$. The other zeros of the graph correspond to the
first five real eigenvalues of the problem.}%
\end{center}
\end{figure}
\end{ex}

Let us consider the one-dimensional Schr\"{o}dinger equation
\begin{equation}
Hu(x)=-u^{\prime\prime}(x)+Q(x)u(x)=\lambda u(x),\quad x\in\mathbb{R},
\label{Hu}%
\end{equation}
where
\begin{equation}
Q(x)=\begin{cases}
\alpha_{1}, & x<0,\\
q(x), & 0\leq x\leq h,\\
\alpha_{2}, & x>h,
\end{cases}  \label{Q}%
\end{equation}
$\alpha_{1}$ and $\alpha_{2}$ are complex constants and $q$ is a continuous
complex-valued function defined on the segment $[0,h]$. Thus, outside a finite
segment the potential $Q$ admits constant values, and at the end points of the
segment the potential may have discontinuities. We are looking for such values
of the spectral parameter $\lambda\in\mathbb{C}$ for which the Schr\"{o}dinger
equation possesses a solution $u$ belonging to the Sobolev space
$H^{2}(\mathbb{R)}$ which in the case of the potential of the form (\ref{Q})
means that we are looking for solutions exponentially decreasing at $\pm
\infty$. This eigenvalue problem is one of the central in quantum mechanics
for which $H$ is a self-adjoint operator in $L^{2}(\mathbb{R})$ with the
domain $H^{2}(\mathbb{R)}.$ It implies that $Q$ is a real-valued function. In
this case the operator $H$ has a continuous spectrum $\bigl[\min\left\{  \alpha
_{1},\alpha_{2}\right\}  ,+\infty\bigr)$ and a discrete spectrum located on the
set
\[
\Big[\min_{x\in[0,h]}q(x),\min\left\{  \alpha_{1},\alpha
_{2}\right\}  \Big). \label{interval}%
\]
Computation of energy levels of a quantum well described by the potential $Q$
is a problem of physics of semiconductor nanostructures (see, e.g.,
\cite{Harrison}). Other important models which reduce to the spectral problem
(\ref{Hu}) arise in studying the electromagnetic and acoustic wave propagation
in inhomogeneous waveguides (see for instance \cite{BC}, \cite{Ch}, \cite{FM},
\cite{WCC}, \cite{Breh}, \cite{OR}, \cite{MC}).

A characteristic equation for this spectral problem in terms of spectral
parameter power series was obtained in \cite{CKOR} (see also \cite{KKRosu})
where a simple numerical algorithm based on the approximation of the
characteristic equation was implemented and compared to other known numerical
techniques. Here we only give an example from \cite{CKOR}.

The usual approach to numerical solution of the considered eigenvalue problem
consists in applying the shooting method (see, e.g., \cite{Harrison}) which is
known to be unstable, relatively slow and to the difference of the SPPS
approach does not offer any explicit equation for determining eigenvalues and
eigenfunctions. In \cite{Hall} another method based on approximation of the
potential by square wells was proposed. It is limited to the case of symmetric
potentials. The approach based on the SPPS is completely different and does
not require any shooting procedure, approximation of the potential or
numerical differentiation. Derived from the exact characteristic equation its
approximation is considered, and in fact numerically the problem is reduced to
finding zeros of a polynomial $\sum_{k=0}^{N}a_{k}\mu^{k}$ in the interval
$\bigl[\min q(x),0\bigr)$, ($\mu^{2}=-\lambda$).

As an example, consider the potential $Q$ defined by the expression
$Q(x)=-\upsilon\operatorname{sech}^{2}x$, $x\in(-\infty,\infty)$. It is not of
a finite support, nevertheless its absolute value decreases rapidly when
$x\rightarrow\pm\infty$. The original problem is approximated by a problem
with a finite support potential $\widehat{Q}$ defined by the equality
\[
\widehat{Q}(x)=
\begin{cases}
0, & x<-a\\
-\upsilon\operatorname{sech}^{2}x, & -a\leq x\leq a\\
0, & x>a.
\end{cases}
\]
An attractive feature of the potential $Q$ is that its eigenvalues can be
calculated explicitly (see, e.g., \cite{Flugge}). In particular, for
$\upsilon=m(m+1)$ the eigenvalue $\lambda_{n}$ is given by the formula
$\lambda_{n}=-(m-n)^{2}$, $n=0,1,\ldots$. \bigskip

The results of application of the SPPS method for $\upsilon=12$ are given in
Table \ref{Table Sech} in comparison with the exact values and the results
from \cite{Hall}.
\begin{table}[h]
\begin{tabular}
[c]{|c|c|c|c|}\hline
$n$ & Exact values & Num.res. from \cite{Hall} & Num.res.
using SPPS ($N=180$)\\\hline
0 & $-9$ & $-9.094$ & $-8.999628656$\\\hline
1 & $-4$ & $-4.295$ & $-3.999998053$\\\hline
2 & $-1$ & $-0.885$ & $-0.999927816$\\\hline
\end{tabular}
\caption{Approximations of $\lambda_{n}$ of the
Hamiltonian \newline $H=-D^{2}-12\operatorname{sech}^{2}x$}
\label{Table Sech}%
\end{table}

The results obtained by means of SPPS are considerably more accurate, and as
was pointed out above the application of the SPPS method has much less restrictions.

\section{Transmutation operators}

\label{SubSectTransmSL} We slightly modify here the definition given by
Levitan \cite{LevitanInverse} adapting it to the purposes of the present work.
Let $E$ be a linear topological space and $E_{1}$ its linear subspace (not
necessarily closed). Let $A$ and $B$ be linear operators: $E_{1}\rightarrow
E$.

\begin{defn}
\label{DefTransmut} A linear invertible operator $T$ defined on the whole $E$
such that $E_{1}$ is invariant under the action of $T$ is called a
transmutation operator for the pair of operators $A$ and $B$ if it fulfills
the following two conditions.

\begin{enumerate}
\item Both the operator $T$ and its inverse $T^{-1}$ are continuous in $E$;
\item The following operator equality is valid
\begin{equation}
AT=TB \label{ATTB}%
\end{equation}
or which is the same
\[
A=TBT^{-1}.
\]
\end{enumerate}
\end{defn}

Very often in literature the transmutation operators are called the
transformation operators. Here we keep ourselves to the original term coined
by Delsarte and Lions \cite{DelsarteLions1956}. Our main interest concerns the
situation when $A=-\frac{d^{2}}{dx^{2}}+q(x)$, $B=-\frac{d^{2}}{dx^{2}}$,
\ and $q$ is a continuous complex-valued function. Hence for our purposes it
will be sufficient to consider the functional space $E=C[a,b]$ with the
topology of uniform convergence and its subspace $E_{1}$ consisting of
functions from $C^{2}\left[  a,b\right]  $. One of the possibilities to
introduce a transmutation operator on $E$ was considered by Lions
\cite{Lions57} and later on in other references (see, e.g., \cite{Marchenko}),
and consists in constructing a Volterra integral operator corresponding to a
midpoint of the segment of interest. As we begin with this transmutation
operator it is convenient to consider a symmetric segment $[-a,a]$ and hence
the functional space $E=C[-a,a]$. It is worth mentioning that other well known
ways to construct the transmutation operators (see, e.g.,
\cite{LevitanInverse}, \cite{Trimeche}) imply imposing initial conditions on
the functions and consequently lead to transmutation operators satisfying
(\ref{ATTB}) only on subclasses of $E_{1}$.

Thus, we consider the space $E=C[-a,a]$ and an operator of transmutation for
the defined above $A$ and $B$ can be realized in the form (see, e.g.,
\cite{LevitanInverse} and \cite{Marchenko}) of a Volterra integral operator
\begin{equation}
Tu(x)=u(x)+\int_{-x}^{x}K(x,t)u(t)dt\label{T}%
\end{equation}
where $K(x,t)=H\big(\frac{x+t}{2},\frac{x-t}{2}\big)$ and $H$ is the unique
solution of the Goursat problem%
\begin{equation}
\frac{\partial^{2}H(u,v)}{\partial u\,\partial v}%
=q(u+v)H(u,v),\label{GoursatH1}%
\end{equation}%
\begin{equation}
H(u,0)=\frac{1}{2}\int_{0}^{u}q(s)\,ds,\qquad H(0,v)=0.\label{GoursatH2}%
\end{equation}
If the potential $q$ is continuously differentiable, the kernel $K$ itself is
the solution of the Goursat problem
\[
\left(  \frac{\partial^{2}}{\partial x^{2}}-q(x)\right)  K(x,t)=\frac
{\partial^{2}}{\partial t^{2}}K(x,t),\label{Goursat1}%
\]%
\[
K(x,x)=\frac{1}{2}\int_{0}^{x}q(s)\,ds,\qquad K(x,-x)=0.\label{Goursat2}%
\]
If the potential $q$ is $n$ times continuously differentiable, the kernel
$K(x,t)$ is $n+1$ times continuously differentiable with respect to both
independent variables (see \cite{Marchenko}).

An important property of this transmutation operator consists in the way how
it maps solutions of the equation%
\begin{equation}
v^{\prime\prime}+\omega^{2}v=0 \label{SLomega1}%
\end{equation}
into solutions of the equation%
\begin{equation}
u^{\prime\prime}-q(x)u+\omega^{2}u=0 \label{SLomega2}%
\end{equation}
where $\omega$ is a complex number. Denote by $e_{0}(i\omega,x)$ the solution
of (\ref{SLomega2}) satisfying the initial conditions%
\[
e_{0}(i\omega,0)=1\qquad\text{and}\qquad e_{0}^{\prime}(i\omega,0)=i\omega.
\label{initcond}%
\]
The subindex ``$0$'' indicates that the initial conditions correspond to the
point $x=0$ and the letter ``$e$'' reminds us that the initial values coincide
with the initial values of the function $e^{i\omega x}$.

The transmutation operator (\ref{T}) maps $e^{i\omega x}$ into $e_{0}%
(i\omega,x)$,
\begin{equation}
e_{0}(i\omega,x)=T[e^{i\omega x}] \label{e0=Te}%
\end{equation}
(see \cite[Theorem 1.2.1]{Marchenko}).

Following \cite{Marchenko} we introduce the following notations%
\[
K_{c}(x,t;h)=h+K(x,t)+K(x,-t)+h\int_{t}^{x}\{K(x,\xi)-K(x,-\xi)\}d\xi
\]
where $h$ is a complex number, and
\[
K_{s}(x,t;\infty)=K(x,t)-K(x,-t).
\]

\begin{thm}
[\cite{Marchenko}]\label{TcTsMapsSolutions} Solutions $c(\omega,x;h)$ and
$s(\omega,x;\infty)$ of equation \eqref{SLomega2} satisfying the initial
conditions
\begin{equation}
c(\omega,0;h)=1,\qquad c_{x}^{\prime}(\omega,0;h)=h \label{ICcos}%
\end{equation}%
\begin{equation}
s(\omega,0;\infty)=0,\qquad s_{x}^{\prime}(\omega,0;\infty)=1 \label{ICsin}%
\end{equation}
can be represented in the form
\begin{equation}
c(\omega,x;h)=\cos\omega x+\int_{0}^{x}K_{c}(x,t;h)\cos\omega t\,dt
\label{c cos}%
\end{equation}
and
\begin{equation}
s(\omega,x;\infty)=\frac{\sin\omega x}{\omega}+\int_{0}^{x}K_{s}%
(x,t;\infty)\frac{\sin\omega t}{\omega}\,dt. \label{s sin}%
\end{equation}

\end{thm}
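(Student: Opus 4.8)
The plan is to build both representations out of the single mapping property \eqref{e0=Te}, namely $e_{0}(i\omega,x)=T[e^{i\omega x}]$, combined with the linearity of $T$ and the uniqueness theorem for the initial value problem attached to \eqref{SLomega2}. First I would note that replacing $\omega$ by $-\omega$ leaves \eqref{SLomega2} invariant, so \eqref{e0=Te} also gives $T[e^{-i\omega x}]=e_{0}(-i\omega,x)$, the solution of \eqref{SLomega2} with value $1$ and derivative $-i\omega$ at $x=0$. Taking the even and odd combinations and using linearity, $T[\cos\omega x]=\tfrac12\bigl(e_{0}(i\omega,x)+e_{0}(-i\omega,x)\bigr)$ and $T\bigl[\tfrac{\sin\omega x}{\omega}\bigr]=\tfrac1{2i\omega}\bigl(e_{0}(i\omega,x)-e_{0}(-i\omega,x)\bigr)$. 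Both are solutions of the linear homogeneous equation \eqref{SLomega2}, and evaluating them and their derivatives at $x=0$ shows that $T[\cos\omega x]$ carries the initial data $1,0$ while $T\bigl[\tfrac{\sin\omega x}{\omega}\bigr]$ carries $0,1$. By uniqueness they must coincide with $c(\omega,x;0)$ (the $h=0$ instance of \eqref{ICcos}) and with $s(\omega,x;\infty)$ from \eqref{ICsin}, respectively.

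Next I would evaluate these two images directly from \eqref{T}, folding the integral over $[-x,x]$ onto $[0,x]$ by parity. Since $\cos\omega t$ is even, the substitution $t\mapsto-t$ yields $\int_{-x}^{x}K(x,t)\cos\omega t\,dt=\int_{0}^{x}\bigl(K(x,t)+K(x,-t)\bigr)\cos\omega t\,dt$, which is exactly the $h=0$ case of \eqref{c cos} because the definition of $K_{c}$ gives $K_{c}(x,t;0)=K(x,t)+K(x,-t)$. Since $\tfrac{\sin\omega t}{\omega}$ is odd, the same substitution produces the sign change and gives $\int_{-x}^{x}K(x,t)\tfrac{\sin\omega t}{\omega}\,dt=\int_{0}^{x}\bigl(K(x,t)-K(x,-t)\bigr)\tfrac{\sin\omega t}{\omega}\,dt$, which is precisely \eqref{s sin} with $K_{s}(x,t;\infty)=K(x,t)-K(x,-t)$. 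This already settles \eqref{s sin} and the case $h=0$ of \eqref{c cos}.

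For general $h$ I would again invoke uniqueness: both $c(\omega,x;h)$ and $c(\omega,x;0)+h\,s(\omega,x;\infty)$ solve \eqref{SLomega2} and share the initial values $1$ and $h$ at $x=0$, hence they are equal. Inserting the two representations just obtained gives
\begin{multline*}
c(\omega,x;h)=\cos\omega x+\int_{0}^{x}\bigl(K(x,t)+K(x,-t)\bigr)\cos\omega t\,dt\\
+h\,\frac{\sin\omega x}{\omega}+h\int_{0}^{x}\bigl(K(x,t)-K(x,-t)\bigr)\frac{\sin\omega t}{\omega}\,dt,
\end{multline*}
and it remains to recognize the right-hand side as $\cos\omega x+\int_{0}^{x}K_{c}(x,t;h)\cos\omega t\,dt$. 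The term $h\tfrac{\sin\omega x}{\omega}=h\int_{0}^{x}\cos\omega t\,dt$ accounts for the constant summand $h$ inside $K_{c}$, and the two cosine integrals already match $K(x,t)+K(x,-t)$.

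The step I expect to be the main obstacle is converting the final sine integral back into a cosine integral so that it reproduces the double-integral term $h\int_{t}^{x}\bigl(K(x,\xi)-K(x,-\xi)\bigr)\,d\xi$ in the definition of $K_{c}$. The device is to write $\tfrac{\sin\omega t}{\omega}=\int_{0}^{t}\cos\omega\xi\,d\xi$ and then exchange the order of integration over the triangle $0\le\xi\le t\le x$ by Fubini; after relabeling the variables this turns $h\int_{0}^{x}\bigl(K(x,t)-K(x,-t)\bigr)\tfrac{\sin\omega t}{\omega}\,dt$ into exactly $h\int_{0}^{x}\bigl(\int_{t}^{x}(K(x,\xi)-K(x,-\xi))\,d\xi\bigr)\cos\omega t\,dt$. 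Assembling the constant, the symmetric cosine part, and this Fubini term under a single integral against $\cos\omega t$ gives $K_{c}(x,t;h)$ and completes \eqref{c cos}. The only points needing care are the legitimacy of the order exchange (guaranteed by continuity of $K$ on the triangle) and the bookkeeping of the parity signs.
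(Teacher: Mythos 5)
Your proof is correct: every step checks out, namely the identification $T[\cos\omega x]=c(\omega,x;0)$ and $T\bigl[\tfrac{\sin\omega x}{\omega}\bigr]=s(\omega,x;\infty)$ via \eqref{e0=Te}, linearity and ODE uniqueness, the parity folding of the integral in \eqref{T} onto $[0,x]$, the shift $c(\omega,x;h)=c(\omega,x;0)+h\,s(\omega,x;\infty)$, and the Fubini exchange that converts the sine integral into the double-integral term of $K_{c}(x,t;h)$. The paper gives no proof of its own here, citing \cite{Marchenko} instead, and your derivation from the exponential mapping property is precisely the standard argument behind that citation, so this is essentially the same approach.
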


Denote by
\begin{equation}
T_{c}u(x)=u(x)+\int_{0}^{x}K_{c}(x,t;h)u(t)dt\label{Tc}%
\end{equation}
and%
\begin{equation}
T_{s}u(x)=u(x)+\int_{0}^{x}K_{s}(x,t;\infty)u(t)dt\label{Ts}%
\end{equation}
the corresponding integral operators. As was pointed out in \cite{CKT}, they
are not transmutations on the whole subspace $E_{1}$, they even do not map all
solutions of (\ref{SLomega1}) into solutions of (\ref{SLomega2}). For example,
as we show below
\[
\left(  -\frac{d^{2}}{dx^{2}}+q(x)\right)  T_{s}[1]\neq T_{s}\left[
-\frac{d^{2}}{dx^{2}}(1)\right]  =0
\]
when $q$ is constant.

\begin{ex}
\label{ExampleTransmut} Transmutation operator for operators $A:=\frac{d^{2}%
}{dx^{2}}+c$, $c$ is a constant, and $B:=\frac{d^{2}}{dx^{2}}$. According to
\eqref{GoursatH1} and \eqref{GoursatH2}, finding the kernel of transmutation
operator is equivalent to finding the function $H(s,t)=K(s+t,s-t)$ satisfying
the Goursat problem
\[
\frac{\partial^{2}H(s,t)}{\partial s\partial t}=-cH(s,t),\quad H(s,0)=-\frac
{cs}{2},\quad H(0,t)=0.
\]
The solution of this problem is given by \cite[(4.85)]{Garab}
\[
H(s,t)=-\frac{c}{2}\int_{0}^{s}J_{0}\bigl(2\sqrt{ct(s-\xi)}\bigr)\,d\xi
=-\frac{\sqrt{cst}J_{1}(2\sqrt{cst})}{2t},
\]
where $J_{0}$ and $J_{1}$ are Bessel functions of the first kind, and the
formula is valid even if the radicand is negative. Hence,
\begin{equation}
K(x,y)=H\left(  \frac{x+y}{2},\frac{x-y}{2}\right)  =-\frac{1}{2}\frac
{\sqrt{c(x^{2}-y^{2})}J_{1}\bigl(\sqrt{c(x^{2}-y^{2})}\bigr)}{x-y}.
\label{ExampleKernel}%
\end{equation}
From \eqref{ExampleKernel} we get the `sine' kernel
\[
K_{s}(x,t;\infty)=-\frac{t\sqrt{c(x^{2}-t^{2})}J_{1}\bigl(\sqrt{c(x^{2}%
-t^{2})}\bigr)}{x^{2}-t^{2}},\label{ExampleSineKernel}%
\]
and can check the above statement about the operator $T_{s}$,
\begin{gather*}
T_{s}[1](x)=1-\int_{0}^{x}\frac{t\sqrt{c(x^{2}-t^{2})}J_{1}\bigl(\sqrt
{c(x^{2}-t^{2})}\bigr)}{x^{2}-t^{2}}\,dt=J_{0}(x\sqrt{c}),\\
\left(  \frac{d^{2}}{dx^{2}}+c\right)  T_{s}[1]=\frac{\sqrt{c}J_{1}(x\sqrt
{c})}{x}\neq0.
\end{gather*}

\end{ex}

For the rest of this section suppose that $f$ is a solution of (\ref{SLhom})
fulfilling the condition of Theorem \ref{ThGenSolSturmLiouville} on a finite
segment $\left[  -a,a\right]  $. We normalize $f$ in such a way that $f(0)=1$
and let $f^{\prime}(0)=h$ where $h$ is some complex constant. Define the
system of functions $\left\{  \varphi_{k}\right\}  _{k=0}^{\infty}$ by this
function $f$ with the use of \eqref{Xn}, \eqref{Xtiln} and \eqref{phik}. The
system of functions $\left\{  \varphi_{k}\right\}  _{k=0}^{\infty}$ is related
to the transmutation operators $T_{c}$ (with the same parameter $h$ in the
kernel) and $T_{s}$ in a way that it is the union of functions which are the
result of acting of operator $T_{s}$ on the odd powers of independent variable
and of operator $T_{c}$ on the even powers of independent variable. The
following theorem holds, see \cite{CKT} for the details of the proof.

\begin{thm}
[\cite{CKT}]\label{Th Transmutation of Powers Tc and Ts} Let $q$ be a
continuous complex valued function of an independent real variable
$x\in[-a,a]$, and $f$ be a particular solution of \eqref{SLhom} such
that $f\in C^{2}\left(  -a,a\right)  $, $f\neq0$ on $[-a,a]$ and normalized as
$f(0)=1$. Let $\varphi_{k}$, $k\in\mathbb{N}_{0}:=\mathbb{N\cup}\left\{
0\right\}  $ are functions defined by \eqref{phik}. Then the following
equalities are valid%
\[
\varphi_{k}=T_{c}[x^{k}]\text{\quad when }k\in\mathbb{N}_{0}\text{ is even or
equal to zero}\label{Tc xk}%
\]
and
\[
\varphi_{k}=T_{s}[x^{k}]\text{\quad when }k\in\mathbb{N}\text{ is
odd.}\label{Ts xk}%
\]

\end{thm}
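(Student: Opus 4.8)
The plan is to identify the spectral parameter power series solutions $u_1$ and $u_2$ of Theorem~\ref{ThGenSolSturmLiouville} with Marchenko's solutions $c(\omega,\cdot\,;h)$ and $s(\omega,\cdot\,;\infty)$ from Theorem~\ref{TcTsMapsSolutions}, and then to read off the claimed identities by comparing two power-series expansions in the spectral parameter. The crucial piece of bookkeeping is that equation \eqref{SLomega2}, namely $u''-qu+\omega^2 u=0$, is exactly \eqref{SLlambda} with $\lambda=-\omega^2$.

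First I would match the initial data. Since $f(0)=1$ and $f'(0)=h$, Remark~\ref{RemInitialValues} gives $u_1(0)=1$, $u_1'(0)=h$ and $u_2(0)=0$, $u_2'(0)=1$. These coincide with the conditions \eqref{ICcos} and \eqref{ICsin} satisfied by $c(\omega,\cdot\,;h)$ and $s(\omega,\cdot\,;\infty)$ respectively. As all four functions solve the same linear second-order equation with continuous coefficients, uniqueness for the initial value problem yields $u_1(x)=c(\omega,x;h)$ and $u_2(x)=s(\omega,x;\infty)$ on $[-a,a]$ (with $\lambda=-\omega^2$).

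Next I would expand the trigonometric functions appearing in \eqref{c cos} and \eqref{s sin} as power series in $x$ and rewrite them in terms of $\lambda$: because $\lambda=-\omega^2$, one has $\cos\omega x=\sum_{k=0}^{\infty}\frac{\lambda^k}{(2k)!}x^{2k}$ and $\frac{\sin\omega x}{\omega}=\sum_{k=0}^{\infty}\frac{\lambda^k}{(2k+1)!}x^{2k+1}$. Applying $T_c$ (respectively $T_s$) to these series term by term and invoking the identifications above, I would obtain $\sum_{k}\frac{\lambda^k}{(2k)!}T_c[x^{2k}]=u_1=\sum_{k}\frac{\lambda^k}{(2k)!}\varphi_{2k}$ and likewise $\sum_{k}\frac{\lambda^k}{(2k+1)!}T_s[x^{2k+1}]=u_2=\sum_{k}\frac{\lambda^k}{(2k+1)!}\varphi_{2k+1}$. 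Since $\omega$, hence $\lambda$, ranges over all of $\mathbb{C}$, both sides are, for each fixed $x$, convergent power series in $\lambda$ representing the same function, so their coefficients must agree; this forces $T_c[x^{2k}]=\varphi_{2k}$ and $T_s[x^{2k+1}]=\varphi_{2k+1}$, which is precisely the assertion.

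The step requiring care, and the main obstacle, is the term-by-term application of $T_c$ and $T_s$ to the infinite series. I would justify it by noting that the kernels $K_c(\cdot,\cdot\,;h)$ and $K_s(\cdot,\cdot\,;\infty)$ are continuous, hence bounded, on the compact region $\{(x,t):|t|\le|x|\le a\}$, so both operators are continuous on $C[-a,a]$ in the uniform topology; combined with the uniform convergence on $[-a,a]$ of the partial sums of $\cos\omega x$ and $\sin\omega x/\omega$ for each fixed $\omega$, continuity permits interchanging the operator with the sum (and simultaneously shows the image series converge). The legitimacy of the final coefficient comparison rests on Theorem~\ref{ThGenSolSturmLiouville}, which guarantees that the $\varphi$-series converge for every $\lambda\in\mathbb{C}$, so that equality of the two entire functions of $\lambda$ propagates to equality of the individual coefficients.
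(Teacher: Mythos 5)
Your proof is correct and follows essentially the same route as the paper's own argument (the one given in \cite{CKT}, to which the paper defers): identify the SPPS solutions $u_1,u_2$ with Marchenko's $c(\omega,\cdot\,;h)$ and $s(\omega,\cdot\,;\infty)$ by matching the initial data of Remark \ref{RemInitialValues} with \eqref{ICcos}--\eqref{ICsin}, expand $\cos\omega x$ and $\sin\omega x/\omega$ in powers of $\lambda=-\omega^2$, pass $T_c$, $T_s$ through the series using the boundedness of the Volterra operators on $C[-a,a]$, and equate coefficients of the resulting entire functions of $\lambda$. The justification you give for the term-by-term application and for the coefficient comparison is exactly the bookkeeping the cited proof relies on, so there is nothing to add.
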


As for the transmutation operator $T$, it does not map all powers of the
independent variable into the functions $\varphi_{k}$. Instead, the following
theorem holds.

\begin{thm}
[\cite{CKT}]\label{Th Transmutation of Powers K}Under the conditions of
Theorem \ref{Th Transmutation of Powers Tc and Ts} the following equalities
are valid
\begin{equation}
\varphi_{k}=T[x^{k}]\text{\quad when }k\text{ is odd} \label{Txk_odd}%
\end{equation}
and
\begin{equation}
\varphi_{k}-\frac{h}{k+1}\varphi_{k+1}=T[x^{k}]\text{\quad when }%
k\in\mathbb{N}_{0}\text{ is even or equal to zero} \label{Txk_even1}%
\end{equation}
where by $h$ we denote $f^{\prime}(0)\in\mathbb{C}$.
\end{thm}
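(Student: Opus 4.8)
Under the conditions of Theorem \ref{Th Transmutation of Powers Tc and Ts}, we have $\varphi_k = T[x^k]$ when $k$ is odd, and $\varphi_k - \frac{h}{k+1}\varphi_{k+1} = T[x^k]$ when $k$ is even (including $k=0$), where $h = f'(0)$.

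The plan is to reduce everything to the identities $\varphi_k = T_c[x^k]$ (for $k$ even) and $\varphi_k = T_s[x^k]$ (for $k$ odd) already established in Theorem \ref{Th Transmutation of Powers Tc and Ts}, by re-expressing $T[x^k]$ in terms of the operators $T_c$ and $T_s$. The first step is to break the symmetric integral defining $T$. Starting from
\[
T[x^{k}](x) = x^{k} + \int_{-x}^{x} K(x,t)\, t^{k}\, dt
\]
and substituting $t \mapsto -t$ on the portion over $[-x,0]$, I obtain
\[
T[x^{k}](x) = x^{k} + \int_{0}^{x} \bigl[K(x,t) + (-1)^{k} K(x,-t)\bigr]\, t^{k}\, dt.
\]
When $k$ is odd the bracket equals $K(x,t) - K(x,-t) = K_{s}(x,t;\infty)$, so $T[x^{k}] = T_{s}[x^{k}] = \varphi_{k}$; this settles \eqref{Txk_odd} with no further work.

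The even case is the substantive one. Here the bracket is the symmetric combination $K(x,t)+K(x,-t)$, which by the definition of $K_{c}$ can be written as
\[
K(x,t)+K(x,-t) = K_{c}(x,t;h) - h - h\int_{t}^{x} K_{s}(x,\xi;\infty)\, d\xi.
\]
Substituting this splits $T[x^{k}]$ into three pieces: the term $x^{k} + \int_{0}^{x} K_{c}(x,t;h)\, t^{k}\, dt = T_{c}[x^{k}] = \varphi_{k}$; a pure power term $-h\int_{0}^{x} t^{k}\, dt = -h\,x^{k+1}/(k+1)$; and the double integral $-h\int_{0}^{x} t^{k}\int_{t}^{x} K_{s}(x,\xi;\infty)\, d\xi\, dt$.

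The key manipulation is to interchange the order of integration in the double integral. Over the triangle $0 \le t \le \xi \le x$ this yields
\[
\int_{0}^{x} t^{k}\int_{t}^{x} K_{s}(x,\xi;\infty)\, d\xi\, dt = \frac{1}{k+1}\int_{0}^{x} K_{s}(x,\xi;\infty)\, \xi^{k+1}\, d\xi.
\]
Combining this with the pure power term, the two $h$-dependent contributions assemble into $-\frac{h}{k+1}\bigl(x^{k+1} + \int_{0}^{x} K_{s}(x,\xi;\infty)\,\xi^{k+1}\, d\xi\bigr) = -\frac{h}{k+1}\, T_{s}[x^{k+1}]$. Since $k$ is even, $k+1$ is odd, so $T_{s}[x^{k+1}] = \varphi_{k+1}$, giving $T[x^{k}] = \varphi_{k} - \frac{h}{k+1}\varphi_{k+1}$ as claimed in \eqref{Txk_even1}.

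The main obstacle is essentially bookkeeping: one must see that the $h$-dependent correction terms carried in $K_{c}$, once combined with the Fubini swap, recombine exactly into an application of $T_{s}$ to the \emph{next} power $x^{k+1}$. The only analytic points needing (routine) care are the legitimacy of the Fubini interchange, guaranteed by the continuity of $K$ on the compact triangle, and the parity match that converts $T_{s}[x^{k+1}]$ into $\varphi_{k+1}$.
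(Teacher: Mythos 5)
Your proof is correct, and each step checks out: the parity splitting of $\int_{-x}^{x}$, the identity $K(x,t)+K(x,-t)=K_{c}(x,t;h)-h-h\int_{t}^{x}K_{s}(x,\xi;\infty)\,d\xi$ read off from the definition of $K_{c}$ (with the same parameter $h=f^{\prime}(0)$ that enters Theorem \ref{Th Transmutation of Powers Tc and Ts}), the interchange of the order of integration over the triangle, and the reassembly of the two $h$-dependent terms into $-\frac{h}{k+1}T_{s}[x^{k+1}]=-\frac{h}{k+1}\varphi_{k+1}$. Note that the present paper does not reproduce a proof of this theorem but refers to \cite{CKT}; your computation is the natural direct argument from exactly the ingredients the statement presupposes, so there is no real divergence of method. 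For comparison, the paper's later machinery affords a shorter derivation: by \eqref{Th via T}, $\mathbf{T}_{h}[x^{k}]=T\big[x^{k}+\frac{h}{2}\int_{-x}^{x}t^{k}\,dt\big]$, and since $\int_{-x}^{x}t^{k}\,dt$ vanishes for odd $k$ and equals $2x^{k+1}/(k+1)$ for even $k$, Theorem \ref{Th Transmutation of Powers} (i.e.\ $\mathbf{T}_{h}[x^{k}]=\varphi_{k}$) gives \eqref{Txk_odd} immediately and then \eqref{Txk_even1} upon substituting $T[x^{k+1}]=\varphi_{k+1}$; however, those results appear later in the paper (and Theorem \ref{Th Transmutation of Powers} is itself quoted from \cite{CKT}), so your self-contained route is the appropriate one at this point of the exposition. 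The only detail worth a sentence in your write-up is that the Fubini step, stated for $0\le t\le\xi\le x$, also covers negative $x$: the iterated integrals are oriented, and the identity $\int_{0}^{x}dt\int_{t}^{x}d\xi=\int_{0}^{x}d\xi\int_{0}^{\xi}dt$ holds for either sign of $x$, so the equalities are valid on all of $[-a,a]$; this is bookkeeping, not a gap.
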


Taking into account the first of former relations the second can be written
also as follows%
\[
\varphi_{k}=T\left[  x^{k}+\frac{h}{k+1}x^{k+1}\right]  \text{\quad when }%
k\in\mathbb{N}_{0}\text{ is even or equal to zero.}\label{Txk_even2}%
\]

\begin{rem}
Let $f$ be the solution of (\ref{SLhom}) satisfying the initial conditions
\begin{equation}
f(0)=1,\quad\text{and}\quad f^{\prime}(0)=0. \label{initcond 1 0}%
\end{equation}
If it does not vanish on $[-a,a]$ then from Theorem
\ref{Th Transmutation of Powers K} we obtain that $\varphi_{k}=T[x^{k}]$ for
any $k\in\mathbb{N}_{0}$. In general, of course there is no guaranty that the
solution satisfying (\ref{initcond 1 0}) have no zero on $[-a,a]$. Hence the
operator $T$ transmutes the powers of $x$ into $\varphi_{k}$ whose
construction is based on the solution $f$ satisfying (\ref{initcond 1 0}) only
in some neighborhood of the origin. In the next section we show how to change
the operator $T$ so that the new operator map $x^{k}$ into $\varphi_{k}(x)$ on
the whole segment $[-a,a]$.
\end{rem}

Note that in Theorem \ref{Th Transmutation of Powers K} the operator $T$ does
not depend on the function $f$, so the right-hand sides of the equalities
\eqref{Txk_odd} and \eqref{Txk_even1} do not change with the change of $f$.
Consider two non-vanishing solutions $f$ and $g$ of \eqref{SLhom} normalized
as $f(0)=g(0)=1$ and let $\varphi_{k}^{f}$ and $\varphi_{k}^{g}$ be the
functions obtained from $f$ and $g$ respectively by means of \eqref{Xn},
\eqref{Xtiln} and \eqref{phik}. The relation between $\varphi_{k}^{f}$ and
$\varphi_{k}^{g}$ are given by the following proposition which may be easily
deduced from equalities \eqref{Txk_odd} and \eqref{Txk_even1}.

\begin{prop}
The following equalities hold
\[
\varphi_{k}^{f} =\varphi_{k}^{g}\text{\quad when }k\in\mathbb{N}\text{ is
odd,}%
\]
and
\[
\varphi_{k}^{f} =\varphi_{k}^{g}+\frac{h_{f}-h_{g}}{k+1}\varphi_{k+1}%
^{g}\text{\quad when }k\in\mathbb{N}_{0}\text{ is even,}%
\]
where $h_{f}=f^{\prime}(0)$ and $h_{g}=g^{\prime}(0)$.
\end{prop}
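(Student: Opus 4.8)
The plan is to read the result off directly from Theorem \ref{Th Transmutation of Powers K}, exploiting the observation recorded just above the proposition that the Volterra transmutation operator $T$ defined in \eqref{T} is built from the Goursat data \eqref{GoursatH1}--\eqref{GoursatH2} and hence depends only on the potential $q$, not on the particular nonvanishing solution $f$ used to generate the $\varphi_k$. Consequently, writing \eqref{Txk_odd} and \eqref{Txk_even1} once for $f$ and once for $g$ produces identities whose common right-hand side $T[x^k]$ can be eliminated.

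First I would treat the odd case. For $k$ odd, \eqref{Txk_odd} gives $\varphi_k^f=T[x^k]$ and, with the very same operator $T$, $\varphi_k^g=T[x^k]$; equating the two yields $\varphi_k^f=\varphi_k^g$, the first asserted equality. Next, for $k$ even (including $k=0$), \eqref{Txk_even1} gives
\[
\varphi_k^f-\frac{h_f}{k+1}\varphi_{k+1}^f=T[x^k]=\varphi_k^g-\frac{h_g}{k+1}\varphi_{k+1}^g .
\]
Since $k+1$ is then odd, the identity just proved gives $\varphi_{k+1}^f=\varphi_{k+1}^g$; inserting this common value and rearranging eliminates $T[x^k]$ and leaves
\[
\varphi_k^f=\varphi_k^g+\frac{h_f-h_g}{k+1}\varphi_{k+1}^g ,
\]
which is the second asserted equality.

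The argument is purely algebraic once the $f$-independence of $T$ is granted, so I do not anticipate a genuine obstacle. The one point deserving care is the bookkeeping of the constant $h$ in \eqref{Txk_even1}: in Theorem \ref{Th Transmutation of Powers K} this $h$ is precisely $f'(0)$, so when the theorem is applied to $g$ the constant becomes $g'(0)$, while $T$ itself is unchanged. It is exactly this asymmetry — the parameter $h$ tracks the chosen solution whereas $T$ stays fixed — that produces the correction term $\frac{h_f-h_g}{k+1}\varphi_{k+1}^g$ in the even case and its absence in the odd case.
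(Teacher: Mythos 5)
Your proposal is correct and follows exactly the route the paper intends: the text preceding the proposition notes that $T$ in Theorem \ref{Th Transmutation of Powers K} is independent of $f$, and the proposition is to be ``easily deduced'' from \eqref{Txk_odd} and \eqref{Txk_even1} by eliminating the common right-hand side $T[x^k]$, using the odd case to handle $\varphi_{k+1}$ in the even case -- precisely your argument.
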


\section{A parametrized family of transmutation operators}

In \cite{CKT} a parametrized family of operators $\mathbf{T}_{h}$,
$h\in\mathbb{C}$ was introduced, given by the integral expression
\begin{equation}
\mathbf{T}_{h}u(x)=u(x)+\int_{-x}^{x}\mathbf{K}(x,t;h)u(t)dt \label{Tmain}%
\end{equation}
where
\begin{equation}
\mathbf{K}(x,t;h)=\frac{h}{2}+K(x,t)+\frac{h}{2}\int_{t}^{x}%
\big(K(x,s)-K(x,-s)\big)\,ds. \label{Kmain}%
\end{equation}
They are related to operators $T_{s}$ and $T_{c}$ (with the parameter $h$ in
the kernel of the latter operator) by
\begin{equation}
\mathbf{T}_{h}=T_{c}P_{e}+T_{s}P_{o}, \label{TcPe+TsPo}%
\end{equation}
where $P_{e}f(x)=\big(f(x)+f(-x)\big)/2$ and $P_{o}%
f(x)=\big(f(x)-f(-x)\big)/2$ are projectors onto even and odd functions,
respectively. In this section we show that the operators $\mathbf{T}_{h}$ are
transmutations, summarize their properties and in Theorem
\ref{Th Transmutation of Powers} we show how they act on powers of $x$.

Let us notice that $\mathbf{K}(x,t;0)=K(x,t)$ and that the expression
\begin{multline*}
\mathbf{K}(x,t;h)-\mathbf{K}(x,-t;h)=K(x,t)-K(x,-t)-\\ \frac{h}{2}\int_{-t}%
^{t}\left(  K(x,s)-K(x,-s)\right)  \,ds=K(x,t)-K(x,-t)
\end{multline*}
does not depend on $h$. Thus, it is possible to compute $\mathbf{K}(x,t;h)$
for any $h$ by a given $\mathbf{K}(x,t;h_{1})$ for some particular value
$h_{1}$. We formulate this result as the following statement.

\begin{thm}
[\cite{CKT}]\label{Kh_and_Kh1} The integral kernels $\mathbf{K}(x,t;h)$ and
$\mathbf{K}(x,t;h_{1})$ are related by the expression
\begin{equation}
\mathbf{K}(x,t;h)=\frac{h-h_{1}}{2}+\mathbf{K}(x,t;h_{1})+\frac{h-h_{1}}%
{2}\int_{t}^{x}\big( \mathbf{K}(x,s;h_{1})-\mathbf{K}(x,-s;h_{1})\big) \,ds.
\label{KmainChangeOfH}%
\end{equation}

\end{thm}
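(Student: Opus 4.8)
The plan is to prove the identity by direct substitution, since everything reduces to the single structural fact already recorded just above the statement: the antisymmetric combination $\mathbf{K}(x,s;h_{1})-\mathbf{K}(x,-s;h_{1})=K(x,s)-K(x,-s)$ is independent of the parameter. First I would abbreviate $g(x,t):=\int_{t}^{x}\big(K(x,s)-K(x,-s)\big)\,ds$, so that the defining formula \eqref{Kmain} reads $\mathbf{K}(x,t;h)=\tfrac{h}{2}+K(x,t)+\tfrac{h}{2}\,g(x,t)$ and likewise $\mathbf{K}(x,t;h_{1})=\tfrac{h_{1}}{2}+K(x,t)+\tfrac{h_{1}}{2}\,g(x,t)$, both being immediate instances of \eqref{Kmain}.

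Subtracting these two expressions, the $K(x,t)$ terms cancel and I obtain $\mathbf{K}(x,t;h)-\mathbf{K}(x,t;h_{1})=\tfrac{h-h_{1}}{2}\big(1+g(x,t)\big)$. This is precisely the content of the left-hand side of \eqref{KmainChangeOfH} once it is rearranged so that $\mathbf{K}(x,t;h_{1})$ appears on the right.

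For the remaining integral on the right-hand side of \eqref{KmainChangeOfH}, I would replace the integrand $\mathbf{K}(x,s;h_{1})-\mathbf{K}(x,-s;h_{1})$ by $K(x,s)-K(x,-s)$ using the displayed antisymmetry relation that precedes the theorem; the integral then equals exactly $g(x,t)$. Hence the right-hand side of \eqref{KmainChangeOfH} becomes $\tfrac{h-h_{1}}{2}+\tfrac{h-h_{1}}{2}\,g(x,t)=\tfrac{h-h_{1}}{2}\big(1+g(x,t)\big)$, which coincides with the left-hand side computed above, so the two sides agree identically.

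There is essentially no analytic obstacle here: the only nontrivial ingredient beyond bookkeeping is the $h$-independence of the antisymmetric part of the kernel, and that has already been verified directly from \eqref{Kmain}. The one point I would keep an eye on is that the claim is for arbitrary $h$ and $h_{1}$, so I would leave $h_{1}$ symbolic throughout rather than specializing it; as a consistency check, setting $h_{1}=0$ gives $\mathbf{K}(x,t;0)=K(x,t)$ and reduces \eqref{KmainChangeOfH} back to the definition \eqref{Kmain} itself.
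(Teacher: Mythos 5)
Your proof is correct and follows essentially the same route as the paper: the key ingredient in both is the observation, stated immediately before the theorem, that the antisymmetric combination $\mathbf{K}(x,t;h)-\mathbf{K}(x,-t;h)=K(x,t)-K(x,-t)$ does not depend on $h$, after which the identity \eqref{KmainChangeOfH} reduces to direct substitution of the definition \eqref{Kmain}. Your write-up merely makes explicit the bookkeeping that the paper leaves implicit, so there is nothing to correct.
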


The operator $\mathbf{T}_{h}$ may be expressed in terms of another operator
$\mathbf{T}_{h_{1}}$ and in particular, in terms of the operator $T$. The
following proposition holds.

\begin{prop}
The operators $\mathbf{T}_{h_{1}}$ and $\mathbf{T}_{h_{2}}$ are related by the
expression
\begin{equation}
\label{Th2 via Th1}\mathbf{T}_{h_{2}}u=\mathbf{T}_{h_{1}}\bigg[u(x)+\frac
{h_{2}-h_{1}}2\int_{-x}^{x} u(t)\,dt\bigg]
\end{equation}
valid for any $u\in C[-a,a]$. In particular,
\begin{equation}
\label{Th via T}\mathbf{T}_{h}u=T\bigg[u(x)+\frac{h}2\int_{-x}^{x}
u(t)\,dt\bigg].
\end{equation}

\end{prop}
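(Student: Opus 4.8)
The plan is to verify \eqref{Th2 via Th1} by a direct computation carried out at the level of integral kernels, with Theorem~\ref{Kh_and_Kh1} supplying the relation between $\mathbf{K}(x,t;h_{2})$ and $\mathbf{K}(x,t;h_{1})$. The second identity \eqref{Th via T} then requires no separate argument: it is the special case $h_{1}=0$, $h_{2}=h$, because $\mathbf{K}(x,t;0)=K(x,t)$ forces $\mathbf{T}_{0}=T$.

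First I would abbreviate $c=(h_{2}-h_{1})/2$ and introduce the auxiliary operator $Pu(x)=\int_{-x}^{x}u(t)\,dt$, so that the bracketed argument on the right of \eqref{Th2 via Th1} is $u+cPu$. Expanding $\mathbf{T}_{h_{1}}[u+cPu]$ by the definition \eqref{Tmain} produces
\[
\mathbf{T}_{h_{1}}[u+cPu](x)=u(x)+\int_{-x}^{x}\mathbf{K}(x,t;h_{1})u(t)\,dt+c\int_{-x}^{x}u(t)\,dt+c\,I,
\]
where $I=\int_{-x}^{x}\mathbf{K}(x,t;h_{1})\bigl(\int_{-t}^{t}u(\tau)\,d\tau\bigr)\,dt$ is the only term requiring real work. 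Writing $\mathbf{T}_{h_{2}}u(x)=u(x)+\int_{-x}^{x}\mathbf{K}(x,t;h_{2})u(t)\,dt$ and substituting the kernel relation of Theorem~\ref{Kh_and_Kh1}, all the elementary terms cancel and the whole proposition collapses to the single identity
\[
I=\int_{-x}^{x}\Bigl(\int_{t}^{x}\bigl(\mathbf{K}(x,s;h_{1})-\mathbf{K}(x,-s;h_{1})\bigr)\,ds\Bigr)u(t)\,dt.
\]

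To prove this I would set $U(y)=\int_{0}^{y}u$, so that $\int_{-t}^{t}u=U(t)-U(-t)$, and fold the reflected piece by the substitution $t\mapsto-t$; this rewrites $I=\int_{-x}^{x}M(x,t)\,U(t)\,dt$ with $M(x,t)=\mathbf{K}(x,t;h_{1})-\mathbf{K}(x,-t;h_{1})$. On the right-hand side I would apply Fubini over the triangle $\{-x\le t\le s\le x\}$ to obtain $\int_{-x}^{x}M(x,s)\bigl(\int_{-x}^{s}u\bigr)\,ds=\int_{-x}^{x}M(x,s)\bigl(U(s)-U(-x)\bigr)\,ds$. The two sides then differ only by the boundary term $-U(-x)\int_{-x}^{x}M(x,s)\,ds$, which vanishes because $M(x,\cdot)$ is odd. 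This oddness — equivalently the nilpotency $P^{2}=0$, since $\int_{-x}^{x}\bigl(\int_{-s}^{s}u\bigr)\,ds=0$ — is the structural fact behind the whole identity, and it also gives the compact operator reading $\mathbf{T}_{h_{2}}=\mathbf{T}_{h_{1}}(\mathrm{Id}+cP)$.

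The main obstacle I anticipate is the bookkeeping inside the double integral $I$: reducing the inner limits $\int_{-t}^{t}$ to the antiderivative $U$, tracking the orientation when $x<0$, and interchanging the order of integration over the triangular region with exactly the limits that make the odd kernel $M$ appear in a form where $\int_{-x}^{x}M\,ds=0$ is available. Once the order of integration is interchanged cleanly the cancellation is immediate, and Theorem~\ref{Kh_and_Kh1} handles the remaining algebra.
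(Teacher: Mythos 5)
Your proof is correct, and it is essentially the paper's own computation run in the opposite logical direction, so a brief comparison is in order. The paper first proves the special case \eqref{Th via T}: it expands $\mathbf{T}_{h}u$ directly from the definition \eqref{Kmain}, interchanges the order of integration in the two parameter-dependent double integrals, and recombines them into $\frac{h}{2}\int_{-x}^{x}K(x,s)\int_{-s}^{s}u(t)\,dt\,ds$ --- which is exactly your identity for $I$ in the case $h_{1}=0$; your fold-by-reflection followed by one Fubini step plus the oddness of $M(x,\cdot)$ is the same manipulation, packaged so that the boundary-term cancellation is explicit. The paper then gets the general relation \eqref{Th2 via Th1} from \eqref{Th via T} by operator algebra, using precisely the nilpotency you remark on, $\int_{-x}^{x}\int_{-t}^{t}u(s)\,ds\,dt=0$ (i.e. $P^{2}=0$): in your notation, $\mathbf{T}_{h_{1}}(\mathrm{Id}+\tfrac{h_{2}-h_{1}}{2}P)=T(\mathrm{Id}+\tfrac{h_{1}}{2}P)(\mathrm{Id}+\tfrac{h_{2}-h_{1}}{2}P)=T(\mathrm{Id}+\tfrac{h_{2}}{2}P)=\mathbf{T}_{h_{2}}$. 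You instead prove \eqref{Th2 via Th1} directly, taking Theorem \ref{Kh_and_Kh1} as the key input, and recover \eqref{Th via T} as the case $h_{1}=0$, $h_{2}=h$ via $\mathbf{K}(x,t;0)=K(x,t)$, so for you the nilpotency appears only as a side remark rather than as a load-bearing step. The trade-off is minor: your route does the Fubini work once with the general kernel $\mathbf{K}(\cdot,\cdot;h_{1})$, which costs nothing extra since only the antisymmetry of $\mathbf{K}(x,t;h_{1})-\mathbf{K}(x,-t;h_{1})$ is used, while the paper keeps the computation at the level of the simplest kernel $K$ and promotes it to the whole family by the reusable identity $P^{2}=0$. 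Both arguments are complete; yours has the small additional merit that the integration limits are tracked cleanly (the paper's displayed intermediate limits contain misprints, e.g. $\int_{-t}^{x}K(x,s)\,ds$ where $\int_{-x}^{-t}K(x,s)\,ds$ is meant), and that it makes visible why the result holds for every pair $(h_{1},h_{2})$ at once.
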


\begin{proof}
Using formulas \eqref{Tmain} and \eqref{Kmain} we obtain
\begin{multline*}
\mathbf{T}_{h}u=Tu+\frac{h}{2}\int_{-x}^{x}u(t)\,dt+\frac{h}{2}\int_{-x}%
^{x}u(t)\int_{t}^{x}K(x,s)\,ds\,dt-\\ \frac{h}{2}\int_{-x}^{x}u(t)\int_{-t}%
^{x}K(x,s)\,ds\,dt,
\end{multline*}
and after changing the order of integration in the last two integrals, we
have
\begin{multline*}
T_{h}u=Tu+\frac{h}{2}\int_{-x}^{x}u(t)\,dt+\frac{h}{2}\int_{-x}^{x}%
K(x,s)\int_{-x}^{s}u(t)\,dt\,ds-\\ \frac{h}{2}\int_{-x}^{x}K(x,s)\int_{-x}%
^{-s}u(t)\,dt\,ds=\displaybreak[2]\\
Tu+\frac{h}{2}\int_{-x}^{x}u(t)\,dt+\frac{h}{2}\int_{-x}^{x}K(x,s)\bigg[\int
_{-x}^{0}+\int_{0}^{s}u(t)\,dt\bigg]\,ds-\displaybreak[2]\\
-\frac{h}{2}\int_{-x}^{x}K(x,s)\bigg[\int_{-x}^{0}-\int_{-s}^{0}%
u(t)\,dt\bigg]\,ds=\displaybreak[2]\\
Tu+\frac{h}{2}\int_{-x}^{x}u(t)\,dt+\frac{h}{2}\int_{-x}^{x}K(x,s)\int
_{-s}^{s}u(t)\,dt\,ds=T\bigg[u(x)+\frac{h}{2}\int_{-x}^{x}u(t)\,dt\bigg].
\end{multline*}
Since $\int_{-x}^{x}\int_{-t}^{t}u(s)\,ds\,dt=0$ for any function $u\in
C[-a,a]$, we have from \eqref{Th via T} that
\begin{multline*}
\mathbf{T}_{h_{1}}\bigg[u(x)+\int_{-x}^{x}u(t)\,dt\bigg]=T\bigg[u(x)+\frac
{h_{2}-h_{1}}{2}\int_{-x}^{x}u(t)\,dt+\\
\frac{h_{1}}{2}\int_{-x}^{x}\bigg(u(t)+\frac{h_{2}-h_{1}}{2}\int_{-t}%
^{t}u(s)\,ds\bigg)\,dt\bigg]=T\bigg[u(x)+\frac{h_{2}}{2}\int_{-x}%
^{x}u(t)\,dt\bigg]=\mathbf{T}_{h_{2}}u.
\end{multline*}
\vskip-1em
\end{proof}

Using \eqref{ICcos}--\eqref{Ts} and \eqref{TcPe+TsPo} it is possible to check
how the operators $\mathbf{T}_{h}$ act on solutions of (\ref{SLomega1}).

\begin{prop}
[\cite{KrT2012}]\label{ThMapsSolutions} The operator $\mathbf{T}_{h}$ maps a
solution $v$ of an equation $v^{\prime\prime}+\omega^{2}v=0$, where $\omega$
is a complex number, into a solution $u$ of the equation $u^{\prime\prime
}-q(x)u+\omega^{2}u=0$ with the following correspondence of the initial
values
\begin{equation}
u(0)=v(0),\qquad u^{\prime}(0)=v^{\prime}(0)+hv(0). \label{Th Initial Values}%
\end{equation}

\end{prop}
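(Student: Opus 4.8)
The plan is to exploit the factorization \eqref{TcPe+TsPo}, which reduces the action of $\mathbf{T}_{h}$ on an arbitrary solution of \eqref{SLomega1} to the separately understood actions of $T_{c}$ and $T_{s}$ on its even and odd parts. First I would fix a convenient fundamental system for \eqref{SLomega1}: every solution $v$ can be written as $v(x)=v(0)\cos\omega x+v^{\prime}(0)\,\frac{\sin\omega x}{\omega}$, with the convention $\frac{\sin\omega x}{\omega}=x$ at $\omega=0$, since the Wronskian of $\cos\omega x$ and $\frac{\sin\omega x}{\omega}$ equals $1$ and these two functions realize the initial data $(1,0)$ and $(0,1)$ at the origin. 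Because $\cos\omega x$ is even and $\frac{\sin\omega x}{\omega}$ is odd, the projectors act simply as $P_{e}v=v(0)\cos\omega x$ and $P_{o}v=v^{\prime}(0)\,\frac{\sin\omega x}{\omega}$.

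Next I would apply \eqref{TcPe+TsPo} together with the linearity of the integral operators $T_{c}$ and $T_{s}$ to obtain
\[
\mathbf{T}_{h}v=v(0)\,T_{c}[\cos\omega x]+v^{\prime}(0)\,T_{s}\Big[\tfrac{\sin\omega x}{\omega}\Big].
\]
By Theorem \ref{TcTsMapsSolutions} the two images are precisely $c(\omega,x;h)$ (via \eqref{c cos}) and $s(\omega,x;\infty)$ (via \eqref{s sin}), both of which solve the target equation \eqref{SLomega2}. Since \eqref{SLomega2} is linear and homogeneous, the combination $u:=\mathbf{T}_{h}v=v(0)\,c(\omega,x;h)+v^{\prime}(0)\,s(\omega,x;\infty)$ is again a solution of \eqref{SLomega2}; this establishes the asserted mapping property.

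Finally I would read off the initial values directly from \eqref{ICcos} and \eqref{ICsin}. Evaluating $u$ and $u^{\prime}$ at $x=0$ gives $u(0)=v(0)\cdot1+v^{\prime}(0)\cdot0=v(0)$ and $u^{\prime}(0)=v(0)\cdot h+v^{\prime}(0)\cdot1=v^{\prime}(0)+hv(0)$, which is exactly \eqref{Th Initial Values}. No step here presents a genuine obstacle; the argument is a clean reduction to the already proved Theorem \ref{TcTsMapsSolutions}. The only point requiring a little care is the degenerate value $\omega=0$, where the fundamental pair collapses to $\{1,x\}$, and one should confirm that \eqref{c cos} and \eqref{s sin} remain valid in this limit so that the same computation covers all complex $\omega$ uniformly.
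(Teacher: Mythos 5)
Your proof is correct and follows precisely the route the paper itself indicates: it derives the proposition from the decomposition $\mathbf{T}_{h}=T_{c}P_{e}+T_{s}P_{o}$ in \eqref{TcPe+TsPo} together with Theorem \ref{TcTsMapsSolutions} and the initial conditions \eqref{ICcos}, \eqref{ICsin}. Your write-up simply fills in the details (including the harmless $\omega=0$ degeneration) that the paper leaves to the reader.
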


\begin{rem}
Formulas \eqref{Th Initial Values} are valid for any function $v\in
C^{1}[-a,a]$.
\end{rem}

We know that the integral kernel of the transmutation operator $T$ is related
to the solution of the Goursat problem \eqref{GoursatH1}--\eqref{GoursatH2}. A
similar result holds for the operators $\mathbf{T}_{h}$.

\begin{thm}
[\cite{KrT2012}]\label{TmainGoursat} In order for the function $K(x,t;h)$ to
be the kernel of a transmutation operator acting as described in Proposition
\ref{ThMapsSolutions}, it is necessary and sufficient that
$H(u,v;h):=K(u+v,u-v;h)$ be a solution of the Goursat problem
\[
\frac{\partial^{2}H(u,v;h)}{\partial u\,\partial v}%
=q(u+v)H(u,v;h),\label{GoursatTh1}%
\]%
\[
H(u,0;h)=\frac{h}{2}+\frac{1}{2}\int_{0}^{u}q(s)\,ds,\qquad H(0,v;h)=\frac
{h}{2}.\label{GoursatTh2}%
\]
If the potential $q$ is continuously differentiable, the function $K(x,t;h)$
itself must be the solution of the Goursat problem
\begin{equation}
\left(  \frac{\partial^{2}}{\partial x^{2}}-q(x)\right)  K(x,t;h)=\frac
{\partial^{2}}{\partial t^{2}}K(x,t;h), \label{GoursatTk1}%
\end{equation}%
\begin{equation}
K(x,x;h)=\frac{h}{2}+\frac{1}{2}\int_{0}^{x}q(s)\,ds,\qquad K(x,-x;h)=\frac
{h}{2}. \label{GoursatTk2}%
\end{equation}

\end{thm}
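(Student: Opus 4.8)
The plan is to treat both directions at once by differentiating the candidate image $u(x)=v(x)+\int_{-x}^{x}K(x,t;h)v(t)\,dt$ of a solution $v$ of $v''+\omega^{2}v=0$ and reading off exactly which conditions on $K$ force $u$ to satisfy $u''-qu+\omega^{2}u=0$ with the initial data of Proposition \ref{ThMapsSolutions}. First I would compute $u'$ and $u''$ by the Leibniz rule, keeping careful track of the moving endpoints: differentiating the lower limit $-x$ contributes a factor $-1$ and an evaluation at $t=-x$, so the boundary terms involve $v(\pm x)$, $v'(\pm x)$, the diagonal values $K(x,\pm x;h)$, their total derivatives $\frac{d}{dx}K(x,\pm x;h)$, and an interior integral of $K_{xx}$.

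Next I would use $v''=-\omega^{2}v$ to rewrite $-\omega^{2}\int_{-x}^{x}Kv\,dt=\int_{-x}^{x}K(x,t;h)v''(t)\,dt$ and integrate by parts twice, moving both derivatives onto $K$; this produces boundary terms that cancel the $v'(\pm x)$ contributions and an interior integral of $K_{tt}$. Subtracting, the whole expression collapses to
\[
u''-qu+\omega^{2}u=\Bigl(2\tfrac{d}{dx}K(x,x;h)-q(x)\Bigr)v(x)+2\tfrac{d}{dx}K(x,-x;h)\,v(-x)+\int_{-x}^{x}\bigl(K_{xx}-K_{tt}-qK\bigr)v(t)\,dt .
\]
Thus $u$ solves the equation for every such $v$ precisely when $\frac{d}{dx}K(x,x;h)=\tfrac12 q(x)$, $\frac{d}{dx}K(x,-x;h)=0$ and $K_{xx}-qK=K_{tt}$. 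Integrating the first two relations and fixing the constant of integration from the initial-value requirement (the evaluation at $x=0$ of the first-derivative formula gives $u'(0)=v'(0)+2K(0,0;h)v(0)$, so matching the prescribed $u'(0)=v'(0)+hv(0)$ forces $K(0,0;h)=h/2$) gives $K(x,x;h)=\tfrac h2+\tfrac12\int_{0}^{x}q$ and $K(x,-x;h)=\tfrac h2$. Passing to the characteristic variables $\xi=(x+t)/2$, $\eta=(x-t)/2$ (the variables denoted $u,v$ in the statement, with $H(\xi,\eta;h)=K(\xi+\eta,\xi-\eta;h)$) converts $\partial_{x}^{2}-\partial_{t}^{2}$ into $\partial_{\xi}\partial_{\eta}$ and carries the diagonals $t=\pm x$ to $\eta=0$ and $\xi=0$, so these three conditions are exactly the Goursat problem for $H$; when $q\in C^{1}$ Marchenko's regularity makes $K$ twice continuously differentiable and the same relations read off as \eqref{GoursatTk1}--\eqref{GoursatTk2}.

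For the necessity I would make the extraction rigorous by testing with the exponential solutions $v=e^{i\omega t}$: the displayed identity then reads $P(x)e^{i\omega x}+Q(x)e^{-i\omega x}+\int_{-x}^{x}R(x,t)e^{i\omega t}\,dt=0$ for all $\omega$, where $P,Q,R$ are the three coefficients above. Letting $\omega\to\infty$ along the reals kills the integral by the Riemann--Lebesgue lemma while the two boundary exponentials are almost periodic and cannot tend to zero unless $P\equiv Q\equiv 0$; the remaining identity $\int_{-x}^{x}R(x,t)e^{i\omega t}\,dt=0$ for all $\omega$ then forces $R\equiv 0$ by injectivity of the Fourier transform. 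The sufficiency is the converse reading of the very same identity.

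The main obstacle I anticipate is the case of a merely continuous $q$, where $K$ need not be twice differentiable and both the integrations by parts and the interior PDE \eqref{GoursatTk1} lose their classical meaning. Here I would not differentiate $K$ at all but work throughout in characteristic coordinates with the equivalent integral form $H(\xi,\eta;h)=\tfrac h2+\tfrac12\int_{0}^{\xi}q(s)\,ds+\int_{0}^{\xi}\int_{0}^{\eta}q(\sigma+\tau)H(\sigma,\tau;h)\,d\tau\,d\sigma$, whose unique continuous solution (obtained by successive approximations) is the genuine content of the Goursat problem for $H$, and whose equivalence with the mapping property follows from the Volterra structure together with uniqueness of the transmutation kernel. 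As a cross-check one may instead derive the mapping property directly from $\mathbf{T}_{h}u=T\bigl[u+\tfrac h2\int_{-x}^{x}u\bigr]$ (equation \eqref{Th via T}) together with the observation that $w=v+\tfrac h2\int_{-x}^{x}v$ again solves $w''+\omega^{2}w=0$ with $w(0)=v(0)$, $w'(0)=v'(0)+hv(0)$, thereby reducing everything to the classical $h=0$ statement for $T$ and to the kernel relation of Theorem \ref{Kh_and_Kh1}.
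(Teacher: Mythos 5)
You should first note a structural point: this survey paper does not actually prove Theorem \ref{TmainGoursat} at all --- it quotes it from \cite{KrT2012} --- so there is no in-paper proof to match line by line. Judged on its own merits, your core computation is correct: Leibniz differentiation of $u=v+\int_{-x}^{x}K(x,t;h)v(t)\,dt$ plus two integrations by parts (using $v''=-\omega^{2}v$) does yield exactly the identity you display, with the $v'(\pm x)$ terms cancelling, and your extraction of $\frac{d}{dx}K(x,x;h)=\tfrac12 q$, $\frac{d}{dx}K(x,-x;h)=0$, $K_{xx}-qK=K_{tt}$, the normalization $K(0,0;h)=h/2$ from \eqref{Th Initial Values}, and the passage to characteristic variables are all right. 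Your closing ``cross-check'' (reduce to the classical $h=0$ kernel via \eqref{Th via T}, the inverse map $I-\tfrac h2\int_{-x}^{x}\cdot\,dt$, and the kernel relation of Theorem \ref{Kh_and_Kh1}) is in fact the route most consonant with the machinery this paper develops --- it is precisely how the paper proves Theorem \ref{Th Transmutation} --- so that paragraph is closer to the source's spirit than your main argument.

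There is, however, one genuine organizational gap. In the necessity direction you differentiate the \emph{unknown} kernel $K$ twice in order to obtain the identity to which you then apply Riemann--Lebesgue and Fourier injectivity; but the hypothesis of the theorem only says that $K$ is the kernel of a Volterra operator with the mapping property of Proposition \ref{ThMapsSolutions} --- it does not grant $K\in C^{2}$, so that derivation is not available. The repair uses pieces you already have, just in the opposite order: first prove \emph{sufficiency} (the Goursat solution $\mathbf{K}$, which has the needed regularity by Marchenko's theorem when $q\in C^{1}$, or which is handled through the integral form $H=\tfrac h2+\tfrac12\int_{0}^{u}q+\iint qH$ when $q$ is merely continuous, produces an operator with the mapping property); then note that the mapping property together with \eqref{Th Initial Values} determines the image of every $e^{i\omega x}$ uniquely as a solution of an initial value problem, so any kernel $K$ with the mapping property satisfies $\int_{-x}^{x}\bigl(K(x,t;h)-\mathbf{K}(x,t;h)\bigr)e^{i\omega t}\,dt=0$ for all real $\omega$, whence $K\equiv\mathbf{K}$ by injectivity of the Fourier transform, and necessity follows because $\mathbf{K}$ solves the Goursat problem by construction. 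This reordering (sufficiency first, necessity by uniqueness) is also what turns your admittedly vague sentence about ``the Volterra structure together with uniqueness of the transmutation kernel'' in the continuous-$q$ case into an actual argument.
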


Under some additional requirements on the potential $q$ the operators
$\mathbf{T}_{h}$ are transmutations in the sense of Definition
\ref{DefTransmut}. The following theorem generalizes the results obtained in
\cite{KrT2012}.

\begin{thm}
\label{Th Transmutation} Suppose the potential $q$ satisfies either of the
following two conditions.

\begin{itemize}
\item $q\in C^{1}[-a,a]$;

\item $q\in C[-a,a]$ and there exists a particular complex-valued solution $g$
of \eqref{SLhom} non-vanishing on $[-a,a]$.
\end{itemize}

Then the operator $\mathbf{T}_{h}$ given by \eqref{Tmain} satisfies the
equality
\begin{equation}
\left(  -\frac{d^{2}}{dx^{2}}+q(x)\right)  \mathbf{T}_{h}[u]=\mathbf{T}%
_{h}\left[  -\frac{d^{2}}{dx^{2}}(u)\right]  \label{ThTransm}%
\end{equation}
for any $u\in C^{2}[-a,a]$.
\end{thm}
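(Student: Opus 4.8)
The plan is to reduce the statement for an arbitrary $h$ to a single convenient value of the parameter and then to treat the two hypotheses separately. Introduce the auxiliary operator $Ju(x)=\int_{-x}^{x}u(t)\,dt$. A direct differentiation shows that $J$ commutes with $B=-\frac{d^{2}}{dx^{2}}$ on $C^{2}[-a,a]$: indeed $(Ju)'(x)=u(x)+u(-x)$ and $(Ju)''(x)=u'(x)-u'(-x)=\int_{-x}^{x}u''(t)\,dt=J(u'')(x)$, so $B(Ju)=J(Bu)$. By \eqref{Th2 via Th1} we have $\mathbf{T}_{h}=\mathbf{T}_{h_{0}}\circ\big(I+\tfrac{h-h_{0}}{2}J\big)$ for every $h_{0}$, and $I+\tfrac{h-h_{0}}{2}J$ maps $C^{2}[-a,a]$ into itself. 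Hence, once \eqref{ThTransm} is established for one value $h_{0}$, for a general $h$ and $u\in C^{2}[-a,a]$ we obtain, writing $A=-\frac{d^{2}}{dx^{2}}+q$,
\begin{align*}
A\mathbf{T}_{h}u &=A\mathbf{T}_{h_{0}}\big[(I+\tfrac{h-h_{0}}{2}J)u\big]=\mathbf{T}_{h_{0}}B\big[(I+\tfrac{h-h_{0}}{2}J)u\big]\\
&=\mathbf{T}_{h_{0}}\big[(I+\tfrac{h-h_{0}}{2}J)Bu\big]=\mathbf{T}_{h}Bu .
\end{align*}
Thus it suffices to prove \eqref{ThTransm} for a single conveniently chosen $h_{0}$ under each of the two hypotheses.

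Under the first hypothesis $q\in C^{1}[-a,a]$ I would argue directly, taking $h_{0}=h$. By Theorem \ref{TmainGoursat} the kernel $\mathbf{K}(x,t;h)$ is then a classical $C^{2}$ solution of the Goursat problem \eqref{GoursatTk1}--\eqref{GoursatTk2}. Writing $\mathbf{T}_{h}u=u+\int_{-x}^{x}\mathbf{K}(x,t;h)u(t)\,dt$ and differentiating twice under the integral sign (the endpoint contributions produce the boundary terms $\mathbf{K}(x,x;h)$, $\mathbf{K}(x,-x;h)$ and their total derivatives along the characteristics $t=\pm x$), I would compute $A\mathbf{T}_{h}u-\mathbf{T}_{h}Bu$, integrating by parts twice in $\mathbf{T}_{h}[-u'']=-u''-\int_{-x}^{x}\mathbf{K}u''\,dt$. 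The terms carrying $u'(\pm x)$ cancel automatically; the coefficients of $u(x)$ and $u(-x)$ vanish thanks to the boundary conditions \eqref{GoursatTk2} (which give $\frac{d}{dx}\mathbf{K}(x,x;h)=\tfrac12 q(x)$ and $\frac{d}{dx}\mathbf{K}(x,-x;h)=0$); and the remaining integral $\int_{-x}^{x}\big(-\mathbf{K}_{xx}+q\mathbf{K}+\mathbf{K}_{tt}\big)u\,dt$ vanishes identically by the Goursat equation \eqref{GoursatTk1}. This gives \eqref{ThTransm} for every $u\in C^{2}[-a,a]$.

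Under the second hypothesis $q\in C[-a,a]$ with a non-vanishing solution $g$ of \eqref{SLhom}, the kernel is in general only $C^{1}$ and the direct computation is unavailable, so I would exploit the recursive-integral structure together with an approximation argument. Normalizing $g$ so that $g(0)=1$ and setting $h_{0}=g'(0)$, I build $\{\varphi_{k}\}$ from $g$ via \eqref{Xn}, \eqref{Xtiln}, \eqref{phik}. Theorem \ref{Th Transmutation of Powers Tc and Ts} together with \eqref{TcPe+TsPo} then yields $\mathbf{T}_{h_{0}}[x^{k}]=\varphi_{k}$ for all $k\in\mathbb{N}_{0}$, while matching powers of $\lambda$ in Theorem \ref{ThGenSolSturmLiouville} forces the recursion $\big(\tfrac{d^{2}}{dx^{2}}-q\big)\varphi_{k}=k(k-1)\varphi_{k-2}$ (and $=0$ for $k=0,1$). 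Consequently $A\mathbf{T}_{h_{0}}[x^{k}]=A\varphi_{k}=-k(k-1)\varphi_{k-2}=\mathbf{T}_{h_{0}}[-k(k-1)x^{k-2}]=\mathbf{T}_{h_{0}}[Bx^{k}]$, so \eqref{ThTransm} holds on every monomial, hence by linearity on all polynomials. To reach an arbitrary $u\in C^{2}[-a,a]$ I would pick polynomials $p_{n}$ with $p_{n}\to u$ in the $C^{2}$-norm (approximate $u''$ uniformly by Weierstrass and integrate twice) and use that $\mathbf{T}_{h_{0}}$ is bounded on $C[-a,a]$.

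The step I expect to be the main obstacle is precisely this last passage to the limit: when $q$ is merely continuous there is no a priori reason for $\mathbf{T}_{h_{0}}u$ to be twice differentiable, so $A\mathbf{T}_{h_{0}}u$ is not obviously defined. The way around it is to read the second derivative off the right-hand side. From $A\mathbf{T}_{h_{0}}p_{n}=\mathbf{T}_{h_{0}}Bp_{n}$ and the continuity of $\mathbf{T}_{h_{0}}$ one gets that $\mathbf{T}_{h_{0}}Bp_{n}\to\mathbf{T}_{h_{0}}Bu$ and $q\,\mathbf{T}_{h_{0}}p_{n}\to q\,\mathbf{T}_{h_{0}}u$ uniformly, whence $(\mathbf{T}_{h_{0}}p_{n})''=-A\mathbf{T}_{h_{0}}p_{n}+q\,\mathbf{T}_{h_{0}}p_{n}$ converges uniformly, together with $\mathbf{T}_{h_{0}}p_{n}\to\mathbf{T}_{h_{0}}u$. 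The standard theorem on term-by-term differentiation then shows $\mathbf{T}_{h_{0}}u\in C^{2}[-a,a]$ with $(\mathbf{T}_{h_{0}}u)''=\lim_{n}(\mathbf{T}_{h_{0}}p_{n})''$, which rearranges to exactly $A\mathbf{T}_{h_{0}}u=\mathbf{T}_{h_{0}}Bu$. Combining this with the reduction of the first paragraph establishes \eqref{ThTransm} for all $h$ and completes the proof.
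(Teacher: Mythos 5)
Your proposal is correct, and its opening step coincides exactly with the paper's own proof: the paper's entire argument consists of citing \cite{KrT2012} for the two base cases (namely $q\in C^{1}[-a,a]$, and $q\in C[-a,a]$ with the parameter matched to the solution, $h=g'(0)$) and then performing precisely your reduction of a general $h$ to $h_{1}=g'(0)$ via \eqref{Th2 via Th1} together with the identity $\frac{d^{2}}{dx^{2}}\int_{-x}^{x}u(t)\,dt=\int_{-x}^{x}u''(t)\,dt$. Where you genuinely diverge is that you prove the base cases rather than cite them, and for the continuous case you take a different route from \cite{KrT2012}: there the matched-parameter case is obtained through the Darboux-transform machinery (the commutation relations \eqref{CommutT1dx}--\eqref{CommutT2dx} of Corollary \ref{Cor Commutation Relations} with the superpartner's transmutation operator), whereas you combine the power-mapping property $\mathbf{T}_{h_{0}}[x^{k}]=\varphi_{k}$ (Theorem \ref{Th Transmutation of Powers}) with the $L$-basis recursion $\bigl(\frac{d^{2}}{dx^{2}}-q\bigr)\varphi_{k}=k(k-1)\varphi_{k-2}$, verify \eqref{ThTransm} on polynomials, and pass to all of $C^{2}[-a,a]$ by density, with a uniform-convergence argument that simultaneously produces the otherwise non-obvious $C^{2}$-regularity of $\mathbf{T}_{h_{0}}u$. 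Your version buys self-containedness within the results quoted in this paper and exhibits the transmutation property as a consequence of the mapping property on powers; the paper's version buys brevity. Two points deserve to be made explicit in your write-up: (i) there is no circularity, since the proof of Theorem \ref{Th Transmutation of Powers} in \cite{CKT} rests on Marchenko's Theorem \ref{TcTsMapsSolutions} (valid for merely continuous $q$) and not on \eqref{ThTransm}; (ii) the \emph{standard theorem on term-by-term differentiation} you invoke needs convergence of the first derivatives $(\mathbf{T}_{h_{0}}p_{n})'$ at some point, which does follow from the uniform convergence of $\mathbf{T}_{h_{0}}p_{n}$ and of $(\mathbf{T}_{h_{0}}p_{n})''$ via the Taylor formula with integral remainder, but this small step should be spelled out.
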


\begin{proof}
In \cite{KrT2012} the theorem was proved for the case $q\in C^{1}[-a,a]$ and
for the case when the particular solution $g$ from the statement satisfies
conditions $g(0)=1$ and $g^{\prime}(0)=h$.

We may normalize the particular solution $g$ as $g(0)=1$. Suppose that
$g^{\prime}(0)=h_{1}$. We know already that \eqref{ThTransm} holds for the
operator $\mathbf{T}_{h_{1}}$. To finish the proof, we use \eqref{Th2 via Th1}
and obtain
\begin{multline*}
\left(  -\frac{d^{2}}{dx^{2}}+q(x)\right)  \mathbf{T}_{h}[u]=\left(
-\frac{d^{2}}{dx^{2}}+q(x)\right)  \mathbf{T}_{h_{1}}\bigg[u(x)+\frac{h-h_{1}%
}2\int_{-x}^{x} u(t)\,dt\bigg]\\
=-\mathbf{T}_{h_{1}}\bigg[u^{\prime\prime}(x)+\frac{h-h_{1}}2\frac{d^{2}%
}{dx^{2}}\int_{-x}^{x} u(t)\,dt\bigg]=\\
-\mathbf{T}_{h_{1}}\bigg[u^{\prime
\prime}(x)+\frac{h-h_{1}}2\int_{-x}^{x} u^{\prime\prime}%
(t)\,dt\bigg]=\mathbf{T}_{h}\left[  -\frac{d^{2}}{dx^{2}}(u)\right]  .
\end{multline*}
\vskip-1em
\end{proof}

\begin{rem}
As was pointed out in Remark \ref{RemarkNonVanish}, in the regular case the
non-vanishing solution $g$ of \eqref{SLhom} exists due to the alternation of
zeroes of two linearly independent solutions. Of course, it would be
interesting to prove that the operators $\mathbf{T}_{h}$ are transmutations in
the general case of complex-valued potentials $q\in C[-a,a]$ without any
additional assumption.
\end{rem}

Suppose now that a function $f$ is a solution of \eqref{SLhom}, non-vanishing
on $[-a,a]$ and normalized as $f(0)=1$. Let $h:=f^{\prime}(0)$ be some complex
constant. Define as before the system of functions $\{\varphi_{k}%
\}_{k=0}^{\infty}$ by this function $f$ and by \eqref{phik}. The following
theorem states that the operator $\mathbf{T}_{h}$ transmutes powers of $x$
into the functions $\varphi_{k}$.

\begin{thm}
[\cite{CKT}]\label{Th Transmutation of Powers} Let $q$ be a continuous complex
valued function of an independent real variable $x\in[-a,a]$, and $f$ be
a particular solution of \eqref{SLhom} such that $f\in C^{2}(-a,a)$ together
with $1/f$ are bounded on $[-a,a]$ and normalized as $f(0)=1$, and let
$h:=f^{\prime}(0)$, where $h$ is a complex number. Then the operator
\eqref{Tmain} with the kernel defined by \eqref{Kmain} transforms $x^{k}$ into
$\varphi_{k}(x)$ for any $k\in\mathbb{N}_{0}$.
\end{thm}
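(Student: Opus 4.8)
The plan is to deduce the statement from the already-established Theorem~\ref{Th Transmutation of Powers Tc and Ts} by means of the factorization \eqref{TcPe+TsPo}, namely $\mathbf{T}_{h}=T_{c}P_{e}+T_{s}P_{o}$, in which the operator $T_{c}$ carries precisely the parameter $h=f'(0)$ used to build the system $\{\varphi_{k}\}$. The one structural fact that makes everything fit together is a parity observation: since $(-x)^{k}=(-1)^{k}x^{k}$, the monomial $x^{k}$ is even for even $k$ and odd for odd $k$, so that $P_{e}[x^{k}]=x^{k},\ P_{o}[x^{k}]=0$ when $k$ is even, while $P_{e}[x^{k}]=0,\ P_{o}[x^{k}]=x^{k}$ when $k$ is odd.

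First I would substitute $x^{k}$ into \eqref{TcPe+TsPo}. For even $k$ (including $k=0$) the odd contribution drops out and $\mathbf{T}_{h}[x^{k}]=T_{c}[x^{k}]$; for odd $k$ the even contribution drops out and $\mathbf{T}_{h}[x^{k}]=T_{s}[x^{k}]$. Now Theorem~\ref{Th Transmutation of Powers Tc and Ts} asserts exactly that $T_{c}[x^{k}]=\varphi_{k}$ for even $k$ and $T_{s}[x^{k}]=\varphi_{k}$ for odd $k$, so combining the two parities gives $\mathbf{T}_{h}[x^{k}]=\varphi_{k}$ for every $k\in\mathbb{N}_{0}$. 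The genuine substance of the argument therefore lives in the earlier theorem and in the decomposition \eqref{TcPe+TsPo}; the present statement is essentially parity bookkeeping. The only points that need real attention are that the parameter $h$ in the $T_{c}$ of \eqref{TcPe+TsPo} is the same as $f'(0)$, and that the present hypotheses ($f\in C^{2}(-a,a)$ with $1/f$ bounded on $[-a,a]$) still guarantee the conclusions of Theorem~\ref{Th Transmutation of Powers Tc and Ts}: boundedness of $1/f$ keeps $f$ away from zero up to the endpoints, so the recursive integrals \eqref{Xn}, \eqref{Xtiln} and hence the functions \eqref{phik} are well defined on the closed interval and the cited theorem applies.

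As a self-contained cross-check (the route I would use if I wanted to avoid relying on \eqref{TcPe+TsPo}), I would argue through Proposition~\ref{ThMapsSolutions} and the SPPS representation \eqref{u1u2}. Writing $\lambda=-\omega^{2}$, the functions $\cos\omega x$ and $\omega^{-1}\sin\omega x$ have the power-series expansions $\sum_{k}\frac{\lambda^{k}}{(2k)!}x^{2k}$ and $\sum_{k}\frac{\lambda^{k}}{(2k+1)!}x^{2k+1}$, and they solve $v''+\omega^{2}v=0$ with initial data $(v,v')(0)=(1,0)$ and $(0,1)$. By Proposition~\ref{ThMapsSolutions} their $\mathbf{T}_{h}$-images solve $u''-qu+\omega^{2}u=0$ with initial data $(1,h)$ and $(0,1)$, which by Remark~\ref{RemInitialValues} are precisely $u_{1}$ and $u_{2}$ of \eqref{u1u2}. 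Applying the bounded operator $\mathbf{T}_{h}$ termwise to these series---legitimate because $\mathbf{T}_{h}$ is continuous on $C[-a,a]$ and the series converge uniformly by Theorem~\ref{ThGenSolSturmLiouville}---and equating the coefficients of $\lambda^{k}$ yields $\mathbf{T}_{h}[x^{2k}]=\varphi_{2k}$ and $\mathbf{T}_{h}[x^{2k+1}]=\varphi_{2k+1}$, the same conclusion. Here the only delicate step is the justification of termwise application and coefficient matching, which is exactly where the uniform convergence supplied by Theorem~\ref{ThGenSolSturmLiouville} is needed.
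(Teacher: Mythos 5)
Your main argument is correct and is essentially the paper's own route: the paper introduces $\mathbf{T}_{h}$ precisely through the decomposition \eqref{TcPe+TsPo} with $T_{c}$ carrying the parameter $h=f'(0)$, and the statement then follows from the parity of the monomials together with Theorem~\ref{Th Transmutation of Powers Tc and Ts}, exactly as you argue (your handling of the hypothesis on $1/f$ and your SPPS cross-check are sound but not needed beyond that).
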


Thus, we clarified that the system of functions $\{\varphi_{k}\}$ may be
obtained as the result of the Volterra integral operator acting on powers of
the independent variable. As was mentioned before, this offers an algorithm
for transmuting functions in the situation when $\mathbf{K}(x,t;h)$ is
unknown. Moreover, properties of the Volterra integral operator such as
boundedness and bounded invertibility in many functional spaces gives us a
tool to prove the completeness of the system of function $\{\varphi_{k}\}$ in
various situations.

\begin{ex}
\label{modelExample} Consider a function $k(x,t)=\frac{t-1}{2(x+1)}$ (later,
in Example \ref{DarbouxExample1} it is explained how it can be obtained). We
have
%\begin{gather*}
%(\partial_{x}^{2}-\partial_{t}^{2})k(x,t)=\frac{t-1}{(x+1)^{3}}=\frac
%{2}{(x+1)^{2}}\cdot\frac{t-1}{2(x+1)},\\
%k(x,-x)=\frac{-x-1}{2(x+1)}=\frac{-1}{2}\quad\text{and}\quad k(x,x)=\frac
%{x-1}{2(x+1)}=\frac{-1}{2}+\frac{1}{2}\int_{0}^{x}\frac{2}{(s+1)^{2}}\,ds,
%\end{gather*}
\begin{equation*}
(\partial_{x}^{2}-\partial_{t}^{2})k(x,t)=\frac{t-1}{(x+1)^{3}}=\frac
{2}{(x+1)^{2}}\cdot\frac{t-1}{2(x+1)},
\end{equation*}
$k(x,-x)=\frac{-x-1}{2(x+1)}=-\frac{1}{2}$ and $k(x,x)=\frac
{x-1}{2(x+1)}=-\frac{1}{2}+\frac{1}{2}\int_{0}^{x}\frac{2}{(s+1)^{2}}\,ds$,
thus the function $k(x,t)$ satisfies the Goursat problem
\eqref{GoursatTk1}--\eqref{GoursatTk2} with \linebreak $q(x)=2/(x+1)^{2}$ and $h=-1$ and
by Theorem \ref{TmainGoursat} is the kernel of the transmutation operator
$\mathbf{T}_{-1}$.

Consider the function $f=\mathbf{T}_{-1}[1]=\frac{1}{x+1}$ as a solution of
\eqref{SLhom} such that $f(0)=1$ and $f^{\prime}(0)=h=-1$, nonvanishing on any
$[-a,a]\subset(-1,1)$. The first 3 functions $\varphi_{k}$ are given by
\[
\varphi_{0}=f=\frac{1}{x+1},\quad\varphi_{1}=\frac{x^{3}+3x^{2}+3x}%
{3(x+1)},\quad\varphi_{2}=\frac{2x^{3}+3x^{2}}{3(x+1)}.
\]
It can be easily checked that
\begin{align*}
\mathbf{T}_{-1}x  &  =x+\int_{-x}^{x}\frac{(t-1)t}{2(x+1)}\,dt=\frac
{x^{3}+3x^{2}+3x}{3(x+1)} = \varphi_{1},\\
\mathbf{T}_{-1}x^{2}  &  =x^{2}+\int_{-x}^{x}\frac{(t-1)t^{2}}{2(x+1)}%
\,dt=\frac{2x^{3}+3x^{2}}{3(x+1)} = \varphi_{2}.
\end{align*}
We can calculate the kernel $K$ of the original operator $T$ by
\eqref{KmainChangeOfH}, it is given by
\[
K(x,t)=\frac{2x+2t+x^{2}-t^{2}}{4(x+1)}
\]
and we can check that $T[x]=\varphi_{1}$ and $T[1]=\frac{x^{3}+3x^{2}%
+3x+3}{3(x+1)}=\varphi_{0}+\varphi_{1}$ in accordance with Theorem
\ref{Th Transmutation of Powers K}.
\end{ex}

\section{Transmutation operators and Darboux transformed
equations\label{Sect Transmutations for Darboux}}

To construct the system of functions $\{\varphi_{k}\}_{k=0}^{\infty}$ we use
the half of the functions $\big\{X^{(k)},\widetilde{X}^{(k)}\big\}_{k=0}^{\infty}$.
What about the second half? Note that starting with the function $1/f$ we
obtain the same system of functions $\big\{X^{(k)},\widetilde{X}^{(k)}%
\big\}_{k=0}^{\infty}$ with the only change that $X_{f}^{(k)}$ becomes
$\widetilde{X}_{1/f}^{(k)}$ and $\widetilde{X}_{f}^{(k)}$ becomes
$X_{1/f}^{(k)}$. Hence the \textquotedblleft second half\textquotedblright\ of
the functions $\big\{  X^{(k)},\widetilde{X}^{(k)}\big\}  _{k=0}^{\infty}$
from \eqref{phik} is used. The function $1/f$ is continuous complex-valued and
non-vanishing, and is a solution of the equation $u^{\prime\prime}-q_{2}u=0$,
where $q_{2}=2\left(  \frac{f^{\prime}}{f}\right)  ^{2}-q$. The last equation
is known as the Darboux transformation of the original equation. The Darboux
transformation is closely related to the factorization of the Schr\"{o}dinger
equation, and nowadays it is used in dozens of works, see, e.g., \cite{Cies,
GHZh, Matveev, RS} in connection with solitons and integrable systems, e.g.,
\cite{BS, HV, NPS, PPS} and the review \cite{Rosu} of applications to quantum mechanics.

For the convenience denote the potential of the original equation by $q_{1}$
and the corresponding Sturm-Liouville operator by $A_{1}:=\frac{d^{2}}{dx^{2}%
}-q_{1}(x)$. Suppose a solution $f$ of the equation $A_{1}f=0$ is given such
that $f(x)\neq0,\ x\in[-a,a]$, it is normalized as $f(0)=1$ and
$h:=f^{\prime}(0)$ is some complex number. Denote the Darboux-transformed
operator by $A_{2}:=\frac{d^{2}}{dx^{2}}-q_{2}(x)$, where $q_{2}%
(x)=2\big(\frac{f^{\prime}(x)}{f(x)}\big)^{2}-q_{1}(x)$.

From the previous section we know that there exists a transmutation operator
$\mathbf{T}_{1;h}$ for the original equation (\ref{SLlambda}) with the
potential $q_{1}$ and such that
\begin{equation}
\mathbf{T}_{1;h}x^{k}=\varphi_{k},\quad k\in\mathbb{N}_{0}.\label{T1x^k}%
\end{equation}
The subindex \textquotedblleft$1$\textquotedblright\ in the notation
$\mathbf{T}_{1;h}$\ indicates that the transmutation operator corresponds to
$A_{1}$.

Similarly, there exists a transmutation operator $\mathbf{T}_{2;-h}$ for the
Darboux-transformed operator $A_{2}$ such that
\begin{equation}
\mathbf{T}_{2;-h}x^{k}=\psi_{k},\quad k\in\mathbb{N}_{0},\label{T2x^k}%
\end{equation}
where the family of functions $\{\psi_{k}\}_{k=0}^{\infty}$ is defined by
(\ref{psik}).

It is interesting to obtain some relations between the operators
$\mathbf{T}_{1;h}$ and $\mathbf{T}_{2;-h}$ and between their integral kernels
$\mathbf{K}_{1}$ and $\mathbf{K}_{2}$. In this section we explain how to
construct the integral kernel $\mathbf{K}_{2}$ by the known integral kernel
$\mathbf{K}_{1}$ and show that the operators $\mathbf{T}_{1;h}$ and
$\mathbf{T}_{2;-h}$ satisfy certain commutation relations with the operator of differentiation.

%Initially the Darboux transformation served for establishing a relation
%between the general solution $u$ of the equation $A_{1}u=\lambda u$ and the
%general solution $v$ of the equation $A_{2}v=\lambda v$ by a known particular
%solution $f$ of the equation $A_{1}u=0$. This relation is given by the formula
%$v(x)=u^{\prime}(x)-u(x)\frac{f^{\prime}(x)}{f(x)}$. Later on it was found
%that

We remind some well known facts about the Darboux transformation. First, $1/f$
is a solution of $A_{2}u=0$. Second, it is closely related to the
factorization of Sturm-Liouville and one-dimensional Schr\"{o}dinger
operators. Namely, we have
\begin{align}
A_{1}=\frac{d^{2}}{dx^{2}}-q_{1}(x)  &  =\Big(\partial_{x}+\frac{f^{\prime}%
}{f}\Big)\Big(\partial_{x}-\frac{f^{\prime}}{f}\Big)=\frac{1}{f}\partial
_{x}f^{2}\partial_{x}\frac{1}{f}\cdot,\label{A1factor}\\
A_{2}=\frac{d^{2}}{dx^{2}}-q_{2}(x)  &  =\Big(\partial_{x}-\frac{f^{\prime}%
}{f}\Big)\Big(\partial_{x}+\frac{f^{\prime}}{f}\Big)=f\partial_{x}\frac
{1}{f^{2}}\partial_{x}f\cdot. \label{A2factor}%
\end{align}
Suppose that $u$ is a solution of the equation $A_{1}u=\omega u$ for some
$\omega\in\mathbb{C}$. Then the function $v=\big(\partial_{x}-\frac{f^{\prime
}}{f}\big)u=\big(f\partial_{x}\frac{1}{f}\big)u$ is a solution of the equation
$A_{2}v=\omega v$, and vice versa, given a solution $v$ of $A_{2}v=\omega v$,
the function $u=\big(\partial_{x}+\frac{f^{\prime}}{f}\big)v=\big(\frac{1}%
{f}\partial_{x}f\big)v$ is a solution of $A_{1}u=\omega u$.

Suppose that the operator $\mathbf{T}_{1}:=\mathbf{T}_{1;h}$ which transmutes
the operator $A_{1}$ into the operator $B=d^{2}/dx^{2}$ and the powers $x^{k}$
into the functions $\varphi_{k}$ is known in the sense that its kernel
$\mathbf{K}_{1}(x,t;h)$ is given. As before $h=f^{\prime}(0)$. Then the
function $1/f$ is the non-vanishing solution of the equation $A_{2}u=0$
satisfying $1/f(0)=1$ and $(1/f)^{\prime}(0)=-h$. Hence we are looking for the
operator $\mathbf{T}_{2}:=\mathbf{T}_{2;-h}$ transmuting the operator $A_{2}$
into the operator $B$ and the powers $x^{k}$ into the functions $\psi_{k}$.

Let us explain the idea for obtaining the operator $\mathbf{T}_{2}$. We want
to find an operator transforming solutions of the equation $Bu+\omega^{2}u=0$
into solutions of the equation $A_{2}u+\omega^{2}u=0$, see the first diagram
below. Starting with a solution $\sigma$ of the equation $(\partial_{x}%
^{2}+\omega^{2})\sigma=0$, by application of $\mathbf{T}_{1}$ we get a
solution of $(A_{1}+\omega^{2})u=0$, and the expression $\big(f\partial
_{x}\frac{1}{f}\big)\mathbf{T}_{1}\sigma$ is a solution of $(A_{2}+\omega
^{2})v=0$. But the operator $\big(f\partial_{x}\frac{1}{f}\big)\mathbf{T}_{1}$
is unbounded and hence cannot coincide with the operator $\mathbf{T}_{2}$. In
order to find the required bounded operator we may consider the second copy of
the equation $(\partial_{x}^{2}+\omega^{2})u=0$, which is a result of the
Darboux transformation applied to $(\partial_{x}^{2}+\omega^{2})\sigma=0$ with
respect to the particular solution $g\equiv1$ and construct the operator
$\mathbf{T}_{2}$ by making the second diagram commutative. In order to obtain
a bounded operator $\mathbf{T}_{2}$, instead of using $f\partial_{x}\frac
{1}{f}$ for the last step, we will use the inverse of $\frac{1}{f}\partial
_{x}f$, i.e. $\frac{1}{f}\big(\int_{0}^{x}f(s)\cdot\,ds+C\big)$.
\[
\xymatrix@R+1pt@C+12pt@M+1pt{
\partial_{x}^{2}+\omega^{2} \ar[r]^(.45){\mathbf{T}_1} \ar[dr]_(.45){\mathbf{T}_2} & \partial_{x}^{2}-q_{1}+\omega^{2} \ar[d]^{f\partial_{x}\frac{1}{f}}\\
& \partial_{x}^{2}-q_{2}+\omega^{2}
}\qquad\quad
\xymatrix@R+1pt@C+12pt@M+1pt{
\partial_{x}^{2}+\omega^{2} \ar[r]^(.45){\mathbf{T}_1} & \partial_{x}^{2}-q_{1}+\omega^{2} \ar@<1ex>[d]^{\frac{1}{f}(\int
f\cdot+C)}\\
\partial_{x}^{2}+\omega^{2} \ar[u]^{\partial_{x}} \ar[r]^(.45){\mathbf{T}_2} & \partial_{x}^{2}-q_{2}+\omega^{2} \ar@<1ex>[u]^{\frac{1}{f}
\partial_{x}f}
}
\]
%\[
%\xymatrix@R+1pt@C+12pt@M+1pt{
%\partial_{x}^{2}+\omega^{2} \ar[r]^(.45){\mathbf{T}_1} \ar@/_/[dr]_(.45){\mathbf{T}_2} & \partial_{x}^{2}-q_{1}+\omega^{2} \ar[d]^{f\partial_{x}\frac{1}{f}}\\
%& \partial_{x}^{2}-q_{2}+\omega^{2}
%}\qquad\qquad
%\xymatrix@R+1pt@C+12pt@M+1pt{
%\partial_{x}^{2}+\omega^{2} \ar[r]^(.45){\mathbf{T}_1} & \partial_{x}^{2}-q_{1}+\omega^{2} \ar@<1ex>[d]^{\frac{1}{f}(\int
%f\cdot+C)}\\
%\partial_{x}^{2}+\omega^{2} \ar[u]^{\partial_{x}} \ar[r]^(.45){\mathbf{T}_2} & \partial_{x}^{2}-q_{2}+\omega^{2} \ar@<1ex>[u]^{\frac{1}{f}
%\partial_{x}f}
%}
%\]

That explains how to obtain the following theorem.
\begin{thm}
[\cite{KrT2012}]\label{Th T2Integral} The operator $T_{2}$, acting on
solutions $u$ of equations $(\partial_{x}^{2}+\omega^{2})u=0,\ \omega
\in\mathbb{C}$ by the rule
\begin{equation}
T_{2}[u](x)=\frac{1}{f(x)}\bigg(\int_{0}^{x}f(\eta)\mathbf{T}_{1}[u^{\prime
}](\eta)\,d\eta+u(0)\bigg) \label{T2sol}%
\end{equation}
coincides with the transmutation operator $\mathbf{T}_{2;-h}$.
\end{thm}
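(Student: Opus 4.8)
The plan is to identify $T_2$ with $\mathbf{T}_{2;-h}$ by checking that both operators send a given solution $u$ of $(\partial_x^2+\omega^2)u=0$ to the \emph{same} solution of $(A_2+\omega^2)v=0$, and then to invoke uniqueness for the initial value problem of a second-order linear ODE. Throughout I would work with the factorizations $A_1=L^*L$ and $A_2=LL^*$ from \eqref{A1factor}--\eqref{A2factor}, where $L=\partial_x-\frac{f'}{f}=f\partial_x\frac1f$ and $L^*=\partial_x+\frac{f'}{f}=\frac1f\partial_x f$, together with the initial-value correspondence of Proposition~\ref{ThMapsSolutions} applied to the known operator $\mathbf{T}_1=\mathbf{T}_{1;h}$ (parameter $h$).

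First I would note that if $u''+\omega^2u=0$ then $u'$ solves the same equation, so $w:=\mathbf{T}_1[u']$ is a well-defined solution of $(A_1+\omega^2)w=0$. The integral in \eqref{T2sol} then produces $v:=T_2[u]=\frac1f\bigl(\int_0^x f w\,d\eta+u(0)\bigr)$, and a direct computation gives $L^*v=\frac1f\partial_x(fv)=w$, hence $A_2v=LL^*v=Lw$. The remaining point is to verify that $Lw=-\omega^2 v$. Applying $L^*$ to $Lw+\omega^2v$ and using $L^*Lw=A_1w=-\omega^2w$ together with $L^*v=w$ yields $L^*(Lw+\omega^2v)=-\omega^2w+\omega^2w=0$; since the kernel of $L^*$ consists precisely of constant multiples of $1/f$, this means $Lw+\omega^2v=c/f$, i.e.\ $(A_2+\omega^2)v=c/f$, for some constant $c$.

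The main obstacle, and the real content of the argument, is to show that this constant $c$ vanishes. Here I would evaluate $c=f(0)\,(Lw+\omega^2v)\big|_{x=0}=w'(0)-h\,w(0)+\omega^2u(0)$, using $f(0)=1$, $f'(0)=h$ and $v(0)=u(0)$. Proposition~\ref{ThMapsSolutions} for $\mathbf{T}_{1;h}$ gives $w(0)=u'(0)$ and $w'(0)=u''(0)+h\,u'(0)$; combining these with $u''(0)=-\omega^2u(0)$ produces $c=(-\omega^2u(0)+h\,u'(0))-h\,u'(0)+\omega^2u(0)=0$. Thus $(A_2+\omega^2)v=0$, so $T_2$ indeed carries solutions to solutions.

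Finally I would compare initial data. Differentiating $v=\frac1f\bigl(\int_0^x fw\,d\eta+u(0)\bigr)$ gives $v(0)=u(0)$ and $v'(0)=w(0)-f'(0)u(0)=u'(0)-h\,u(0)$, which is exactly the correspondence \eqref{Th Initial Values} for $\mathbf{T}_{2;-h}$ with parameter $-h$. Hence $T_2[u]$ and $\mathbf{T}_{2;-h}[u]$ are two solutions of the same equation $(A_2+\omega^2)v=0$ sharing the values of $v(0)$ and $v'(0)$, so by uniqueness of the initial value problem they coincide for every $\omega\in\mathbb{C}$, which is the assertion. (To identify the two operators on all of $C[-a,a]$ rather than merely on solutions, one may differentiate in $\omega$ and set $\omega=0$ to obtain agreement on every power $x^k$, whence density of polynomials and continuity of the Volterra operator $\mathbf{T}_{2;-h}$ give equality on the whole space.)
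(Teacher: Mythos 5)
Your proposal is correct and follows essentially the same route the paper sketches via its second commutative diagram: $T_2$ is the composition of differentiation, $\mathbf{T}_{1;h}$, and the bounded inverse $\frac{1}{f}\bigl(\int_0^x f\cdot\,ds+C\bigr)$ of $\frac{1}{f}\partial_x f$, with the constant fixed by the initial data of Proposition \ref{ThMapsSolutions}. The paper (a survey) defers the details to \cite{KrT2012}, and your kernel-of-$L^*$ argument showing $(A_2+\omega^2)v=c/f$ with $c=0$, followed by ODE uniqueness, is a correct and complete filling-in of exactly that sketch.
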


Now we show that the operator $T_{2}$ can be written as a Volterra integral
operator and, as a consequence, extended by continuity to a wider class of
functions. To obtain simpler expression for the integral kernel $\mathbf{K}%
_{2}(x,t;-h)$ we have to suppose that the original integral kernel
$\mathbf{K}_{1}(x,t;h)$ is known in the larger domain than required by
definition \eqref{Tmain}. Namely, suppose that the function $\mathbf{K}%
_{1}(x,t;h)$ is known and is continuously differentiable in the domain
$\bar{\Pi}:\ -a\leq x\leq a,-a\leq t\leq a$. We refer the reader to
\cite{KrT2012} for further details.

\begin{thm}
[\cite{KrT2012}]\label{Th T2Volterra} The operator $T_{2}$ admits a
representation as the Volterra integral operator
\begin{equation}
T_{2}[u](x)=u(x)+\int_{-x}^{x}\mathbf{K}_{2}(x,t;-h)u(t)\,dt, \label{T2}%
\end{equation}
with the kernel
\begin{equation}
\mathbf{K}_{2}(x,t;-h)=-\frac{1}{f(x)}\bigg(\int_{-t}^{x}\partial
_{t}\mathbf{K}_{1}(s,t;h)f(s)\,ds+\frac{h}{2}f(-t)\bigg). \label{K2}%
\end{equation}
Such representation is valid for any function $u\in C^{1}[-a,a]$.
\end{thm}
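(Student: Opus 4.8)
The plan is to start from the defining formula \eqref{T2sol} for $T_2$, insert the Volterra representation \eqref{Tmain} of $\mathbf{T}_1$ (with kernel $\mathbf{K}_1(\cdot,\cdot;h)$), and reorganize the resulting expression into a single Volterra integral over $[-x,x]$; reading off the coefficient of $u(t)$ will then identify the kernel, reducing the whole theorem to one scalar identity that I handle separately. Throughout I would take $x\ge 0$ (the case $x<0$ is entirely analogous after tracking signs of the limits), and I use that $u\in C^{1}[-a,a]$, so that all integrations by parts below are legitimate.

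First I would substitute $\mathbf{T}_1[u'](\eta)=u'(\eta)+\int_{-\eta}^{\eta}\mathbf{K}_1(\eta,t;h)u'(t)\,dt$ into \eqref{T2sol}. The part coming from $u'(\eta)$ gives, after one integration by parts and using $f(0)=1$,
\[
\frac{1}{f(x)}\Big(\int_0^x f(\eta)u'(\eta)\,d\eta+u(0)\Big)=u(x)-\frac{1}{f(x)}\int_0^x f'(t)u(t)\,dt ,
\]
which already supplies the identity part $u(x)$ of \eqref{T2}. In the remaining double integral I would first integrate by parts in $t$ inside the inner integral, evaluating the boundary terms by the Goursat conditions \eqref{GoursatTk2}, i.e. $\mathbf{K}_1(\eta,\eta;h)=\tfrac h2+\tfrac12\int_0^\eta q_1$ and $\mathbf{K}_1(\eta,-\eta;h)=\tfrac h2$; this turns $\int_{-\eta}^\eta \mathbf{K}_1 u'\,dt$ into boundary terms plus $-\int_{-\eta}^\eta\partial_t\mathbf{K}_1(\eta,t;h)u(t)\,dt$. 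Then I would apply Fubini: the region $\{0\le\eta\le x,\ -\eta\le t\le\eta\}$ becomes $\{-x\le t\le x,\ |t|\le\eta\le x\}$, so every contribution becomes an integral over $t\in[-x,x]$ against $u(t)$, whose coefficient is the sought kernel $\mathbf{K}_2(x,t;-h)$.

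Collecting the coefficient of $u(t)$ (including the $-f'(t)/f(x)$ term from the first step), I expect to recover \eqref{K2} immediately for $t\in[-x,0]$, where only the boundary term $-\tfrac h2 f(-t)$ and the tail $-\int_{-t}^x f(\eta)\partial_t\mathbf{K}_1(\eta,t;h)\,d\eta$ survive. For $t\in[0,x]$ the comparison with \eqref{K2}, after writing $\int_{-t}^x=\int_{-t}^t+\int_t^x$ and using $\mathbf{K}_1(t,-t;h)=h/2$, reduces to the single identity
\begin{equation*}
\int_{-t}^t f(s)\,\partial_t\mathbf{K}_1(s,t;h)\,ds=f'(t)-f(t)\,\mathbf{K}_1(t,t;h)-f(-t)\,\mathbf{K}_1(t,-t;h).\tag{$\ast$}
\end{equation*}
I expect $(\ast)$ to be the main obstacle: its left-hand side differentiates the kernel transversally to the variable of integration, so it cannot be reduced by a single integration by parts.

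To prove $(\ast)$ I would call its two sides $L(t)$ and $R(t)$, check $L(0)=R(0)=0$ (using $f(0)=1$, $f'(0)=h$, $\mathbf{K}_1(0,0;h)=h/2$), and show $L'(t)=R'(t)$. Differentiating $L$ by the Leibniz rule yields $\int_{-t}^t f\,\partial_t^2\mathbf{K}_1\,ds$ plus boundary terms; here I invoke the Goursat equation \eqref{GoursatTk1} to replace $\partial_t^2\mathbf{K}_1$ by $\partial_x^2\mathbf{K}_1-q_1\mathbf{K}_1$, integrate by parts twice in $s$, and use $f''=q_1 f$, so that the two potential terms cancel and only boundary terms remain. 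The crucial simplification is that $\mathbf{K}_1(x,-x;h)\equiv h/2$ is constant, whence $\frac{d}{dt}\mathbf{K}_1(-t,t;h)=0$ and the corresponding boundary group vanishes; combined with $\frac{d}{dt}\mathbf{K}_1(t,t;h)=\tfrac12 q_1(t)$ this collapses $L'(t)$ to $\tfrac12 q_1 f-f'\mathbf{K}_1(t,t;h)+\tfrac h2 f'(-t)$, which is precisely $R'(t)$ computed directly from the right-hand side of $(\ast)$ (again using $f''=q_1f$ and the boundary values). This proves $(\ast)$ under the smoothness for which \eqref{GoursatTk1} holds (e.g. $q_1\in C^1$, which makes $\mathbf{K}_1\in C^2(\bar\Pi)$), and hence confirms \eqref{K2}; the passage from the action \eqref{T2sol} on solutions to the Volterra form \eqref{T2} on all of $C^1[-a,a]$ is then justified by the continuity of both operators.
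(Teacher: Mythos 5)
Your derivation is correct in substance and follows the route of the cited source: the present paper gives no proof of Theorem \ref{Th T2Volterra} (it defers to \cite{KrT2012}), and the argument there is precisely this direct computation --- substitute \eqref{Tmain} into \eqref{T2sol}, integrate by parts in $t$ using the boundary values \eqref{GoursatTk2}, and change the order of integration. I checked your bookkeeping: for $t\in[-x,0]$ the coefficient of $u(t)$ is \eqref{K2} at once, while for $t\in[0,x]$ the discrepancy between the Fubini limit $\int_{|t|}^{x}$ and the limit $\int_{-t}^{x}$ in \eqref{K2} is exactly your identity $(\ast)$, and your proof of $(\ast)$ is sound: $L(0)=R(0)=0$ and both derivatives collapse to $\tfrac12 q_{1}f-f'\,\mathbf{K}_{1}(t,t;h)+\tfrac h2 f'(-t)$. (Integrating $(\ast)$ once shows it is equivalent to the cleaner statement $\int_{-t}^{t}f(s)\mathbf{K}_{1}(s,t;h)\,ds=f(t)-1$, a ``transposed'' companion of $\mathbf{T}_{1}[1]=f$, which one can verify against Example \ref{modelExample}.) One simplification you missed: your computation never uses that $u$ solves $(\partial_x^2+\omega^2)u=0$, so it establishes the equality of \eqref{T2sol} and \eqref{T2} directly for every $u\in C^{1}[-a,a]$; the closing density-and-continuity remark is superfluous (and would otherwise require density in the $C^{1}$ norm, not just in $C[-a,a]$).

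There are, however, two hypothesis-level gaps, both concentrated in the region $|s|<t$. First, your proof of $(\ast)$ invokes the hyperbolic form \eqref{GoursatTk1} and second derivatives of $\mathbf{K}_{1}$, which by Theorem \ref{TmainGoursat} are guaranteed only when $q_{1}\in C^{1}$; the theorem is asserted under the standing assumptions that $q_{1}$ is merely continuous and $\mathbf{K}_{1}$ is only $C^{1}(\bar\Pi)$. You flag this but do not close it. It can be closed without mollification: rewrite $(\ast)$ in the characteristic variables of \eqref{GoursatH1}, where the Goursat equation involves only the mixed derivative $\partial_{u}\partial_{v}H=q(u+v)H$; that equation holds for merely continuous $q_{1}$, and the same two-step differentiation (now using $f''=q_{1}f$ and integration by parts along the lines $u\mapsto(u,u-t)$) yields $\Phi'\equiv f'$ for $\Phi(t)=\int_{-t}^{t}f(s)\mathbf{K}_{1}(s,t;h)\,ds$, which is $(\ast)$ in integrated form. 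Second, the values of $\mathbf{K}_{1}(s,t;h)$ and its derivatives at points with $|s|<t$ --- which enter both \eqref{K2} for $t>0$ and your $L(t)$ --- lie outside the natural domain $\{|t|\le|s|\}$ of a transmutation kernel; there the symbol can only mean the canonical extension, i.e.\ the solution of the Goursat problem in characteristic variables on the whole square, which is what the sentence preceding the theorem (``$\mathbf{K}_{1}$ is known and continuously differentiable in $\bar\Pi$'') presupposes. Your argument tacitly assumes this extension still satisfies the differential equation and the characteristic data there; that is true for the canonical extension, but it must be said, because for an arbitrary smooth extension the right-hand side of \eqref{K2} changes while the left-hand side does not, so \eqref{K2} would simply be false.
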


By Theorems \ref{Th T2Integral} and \ref{Th T2Volterra} the Volterra operators
$T_{2}$ and $\mathbf{T}_{2}$ coincide on the set of finite linear combinations
of solutions of the equations $(\partial_{x}^{2}+\omega^{2})u=0,\ \omega
\in\mathbb{C}$. Since this set is dense in $C[-a,a]$, by continuity of $T_{2}$
and $\mathbf{T}_{2}$ we obtain the following corollaries.

\begin{cor}
[\cite{KrT2012}]The operator $T_{2}$ given by \eqref{T2} with the kernel
\eqref{K2} coincides with $\mathbf{T}_{2}$ on $C[-a,a]$.
\end{cor}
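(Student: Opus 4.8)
The plan is to prove the corollary by the elementary principle that two \emph{continuous} linear operators on $C[-a,a]$ which agree on a \emph{dense} subset must coincide everywhere. The genuine content has already been packaged into the two preceding theorems, so the argument splits into three routine checks — continuity of both operators, agreement on a spanning family, and density of that family — of which only the last requires real work.

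First I would record that both operators are bounded in the uniform norm. The operator $\mathbf{T}_{2}=\mathbf{T}_{2;-h}$ belongs to the parametrized family \eqref{Tmain} associated with the Darboux-transformed operator $A_{2}$; its integral kernel is the solution of the Goursat problem of Theorem \ref{TmainGoursat} for the continuous potential $q_{2}=2\bigl(f'/f\bigr)^{2}-q_{1}$, hence is continuous, so $\mathbf{T}_{2}$ is a bounded Volterra operator on $C[-a,a]$. For $T_{2}$, Theorem \ref{Th T2Volterra} supplies the Volterra representation \eqref{T2} with kernel \eqref{K2}; since $f$ and $1/f$ are bounded on $[-a,a]$ and $\mathbf{K}_{1}(x,t;h)$ is continuously differentiable on the closed square $\bar\Pi$, the kernel $\mathbf{K}_{2}(x,t;-h)$ is continuous, so $T_{2}$ is likewise bounded. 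Next I would invoke Theorem \ref{Th T2Integral}, which states that $T_{2}$ and $\mathbf{T}_{2;-h}$ produce the same function on every solution $u$ of $(\partial_{x}^{2}+\omega^{2})u=0$ for each $\omega\in\mathbb{C}$; by linearity they therefore agree on the linear span $S$ of all such solutions, i.e. on $S=\operatorname{span}\{\cos\omega x,\ \sin\omega x:\omega\in\mathbb{C}\}$.

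The one step needing an actual argument is the density of $S$ in $C[-a,a]$, and this is where I would concentrate. I would verify that $S$ is a complex subalgebra of $C[-a,a]$: it is closed under multiplication by the product-to-sum identities (for instance $\cos\omega_{1}x\cos\omega_{2}x=\tfrac12[\cos(\omega_{1}+\omega_{2})x+\cos(\omega_{1}-\omega_{2})x]$, together with the analogous formulas for $\sin\cos$ and $\sin\sin$, all valid for complex frequencies); it contains the constants, since $\omega=0$ yields the function $1$; it separates the points of $[-a,a]$, because $\sin\omega x$ is strictly monotone, hence injective, for any real $\omega$ with $0<\omega<\pi/(2a)$; and it is invariant under complex conjugation, since for real $x$ one has $\overline{\cos\omega x}=\cos\bar\omega x\in S$ and $\overline{\sin\omega x}=\sin\bar\omega x\in S$. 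The complex Stone--Weierstrass theorem then gives $\overline{S}=C[-a,a]$.

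Finally, since $T_{2}$ and $\mathbf{T}_{2}$ are both continuous on $C[-a,a]$ and coincide on the dense subset $S$, they coincide on all of $C[-a,a]$, which is exactly the assertion of the corollary. The main obstacle, as the phrasing preceding the statement already signals, is precisely the density step; the continuity and the agreement on $S$ are immediate bookkeeping from Theorems \ref{Th T2Integral} and \ref{Th T2Volterra}.
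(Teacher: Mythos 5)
Your proposal is correct and follows essentially the same route as the paper: both argue that $T_{2}$ and $\mathbf{T}_{2}$ are continuous Volterra operators on $C[-a,a]$ that agree (by Theorems \ref{Th T2Integral} and \ref{Th T2Volterra}) on the linear span of solutions of $(\partial_{x}^{2}+\omega^{2})u=0$, $\omega\in\mathbb{C}$, and that this span is dense. The only difference is that the paper simply asserts the density, while you substantiate it with a Stone--Weierstrass argument --- a welcome but inessential elaboration of the same proof.
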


\begin{cor}
[\cite{KrT2012}]The operator $T_{2}$ given by \eqref{T2sol} coincides with
$\mathbf{T}_{2}$ on $C^{1}[-a,a]$.
\end{cor}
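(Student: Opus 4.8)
The plan is to prove the statement by the standard density-and-continuity principle already announced in the paragraph preceding it: exhibit a dense set on which $T_2$ and $\mathbf{T}_2$ are known to agree, check that both operators are continuous, and conclude that they agree everywhere. The substantive content is entirely packaged in the two preceding theorems; the corollary itself is a soft extension argument.

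First I would record the agreement on the model solutions. By Theorem~\ref{Th T2Integral} the operator $T_2$ given by \eqref{T2sol} coincides with $\mathbf{T}_{2;-h}=\mathbf{T}_2$ on every solution $u$ of $(\partial_x^2+\omega^2)u=0$, $\omega\in\mathbb{C}$. By linearity of both operators they then agree on the span $\mathcal{E}$ of all such solutions, that is, on all finite linear combinations of the exponentials $e^{i\omega x}$, $\omega\in\mathbb{C}$.

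Next I would establish that $\mathcal{E}$ is dense in $C[-a,a]$. Writing $s=i\omega$, the set $\{e^{sx}:s\in\mathbb{C}\}$ is closed under multiplication, since $e^{sx}e^{tx}=e^{(s+t)x}$, so $\mathcal{E}$ is a subalgebra of $C[-a,a]$; it contains the constants (take $s=0$), separates the points of $[-a,a]$, and is invariant under complex conjugation (as $\overline{e^{sx}}=e^{\bar s x}$). The complex Stone--Weierstrass theorem then gives $\overline{\mathcal{E}}=C[-a,a]$. Together with continuity this yields the first corollary: $\mathbf{T}_2$ is a transmutation and hence, by \eqref{Tmain}--\eqref{Kmain} applied to $A_2$, a Volterra integral operator with continuous kernel, so it is bounded on $C[-a,a]$; the Volterra representation \eqref{T2} of $T_2$ with kernel \eqref{K2} is bounded on $C[-a,a]$ as well, since \eqref{K2} is continuous under the running hypotheses on $\mathbf{K}_1$ and $f$. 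Two bounded operators agreeing on the dense set $\mathcal{E}$ agree on all of $C[-a,a]$.

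Finally, for the present (second) corollary, which concerns the differential expression \eqref{T2sol} on $C^1[-a,a]$, the cleanest route is to chain the identities rather than redo a density argument in the stronger $C^1$ topology. For $u\in C^1[-a,a]$, Theorem~\ref{Th T2Volterra} asserts that \eqref{T2sol} equals the Volterra expression \eqref{T2}, and the first corollary, applied to the inclusion $C^1[-a,a]\subset C[-a,a]$, asserts that \eqref{T2} equals $\mathbf{T}_2[u]$; composing the two gives the claim. I expect the only genuinely delicate points to be the density of $\mathcal{E}$ and the verification that both Volterra operators are bounded in the relevant norm; once these are in place the extension is automatic, and the $C^1$ statement costs nothing beyond the already-proved representation of Theorem~\ref{Th T2Volterra}.
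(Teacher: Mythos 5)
Your proposal is correct and follows essentially the same route as the paper: the paper likewise deduces both corollaries by observing that Theorems \ref{Th T2Integral} and \ref{Th T2Volterra} force $T_{2}$ and $\mathbf{T}_{2}$ to agree on the set of finite linear combinations of solutions of $(\partial_{x}^{2}+\omega^{2})u=0$, which is dense in $C[-a,a]$, and then invoking the continuity of the two Volterra operators. Your only additions are the explicit Stone--Weierstrass justification of the density and the explicit chaining of \eqref{T2sol} through \eqref{T2} to $\mathbf{T}_{2}$ for the $C^{1}[-a,a]$ statement, both of which the paper leaves implicit.
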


Operator $A_{1}$ is the Darboux transformation of the operator $A_{2}$ with
respect to the solution $1/f$, hence we may obtain another relation between
the operators $\mathbf{T}_{1}$ and $\mathbf{T}_{2}$.

\begin{cor}
[\cite{KrT2012}]For any function $u\in C^{1}[-a,a]$ the equality
\begin{equation}
\mathbf{T}_{1}[u](x)=f(x)\bigg(\int_{0}^{x}\frac{1}{f(\eta)}\mathbf{T}%
_{2}[u^{\prime}](\eta)\,d\eta+u(0)\bigg) \label{T1sol}%
\end{equation}
is valid.
\end{cor}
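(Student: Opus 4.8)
The plan is to exploit the involutive nature of the Darboux transformation: the identity \eqref{T1sol} is nothing but Theorem \ref{Th T2Integral} read with the roles of the two Schr\"{o}dinger operators interchanged. First I would record that $A_{1}$ really is the Darboux transform of $A_{2}$. Setting $g:=1/f$, the function $g$ is a non-vanishing solution of $A_{2}g=0$ (as already noted), it is normalized by $g(0)=1$, and from $g'/g=-f'/f$ one gets $g'(0)=-h$ together with
\[
2\Big(\frac{g'}{g}\Big)^{2}-q_{2}=2\Big(\frac{f'}{f}\Big)^{2}-\Big(2\Big(\frac{f'}{f}\Big)^{2}-q_{1}\Big)=q_{1}.
\]
Since $f\in C^{2}$ is non-vanishing, so is $g=1/f$, and $q_{2}$ is continuous, so the pair $(A_{2},g,-h)$ satisfies exactly the hypotheses that $(A_{1},f,h)$ satisfies in Theorem \ref{Th T2Integral}.

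Next I would apply Theorem \ref{Th T2Integral} verbatim to this swapped data. Acting on a solution $u$ of $(\partial_{x}^{2}+\omega^{2})u=0$, the theorem produces the operator
\[
u\longmapsto\frac{1}{g(x)}\Big(\int_{0}^{x}g(\eta)\,\mathbf{T}_{2}[u'](\eta)\,d\eta+u(0)\Big),
\]
which coincides with the transmutation operator for the Darboux transform of $A_{2}$ with respect to $g$. The point that requires care—indeed the one genuinely delicate step—is the bookkeeping of the parameters. The transmutation attached to $A_{2}$ and to the solution $g$ with $g'(0)=-h$ is $\mathbf{T}_{2;-h}=\mathbf{T}_{2}$, which is precisely the operator appearing inside the integral; and the Darboux-transform transmutation carries the parameter $-g'(0)=h$, so it equals $\mathbf{T}_{1;h}=\mathbf{T}_{1}$. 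Substituting $g=1/f$ (hence $1/g=f$) rewrites the displayed operator as the right-hand side of \eqref{T1sol} and identifies it with $\mathbf{T}_{1}[u]$, proving the identity for every $u$ in the span of the exponential-type solutions of $(\partial_{x}^{2}+\omega^{2})u=0$.

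Finally I would extend from this dense set to all of $C^{1}[-a,a]$ by continuity. Both sides of \eqref{T1sol} define continuous operators on $C^{1}[-a,a]$: the left-hand side because $\mathbf{T}_{1}$ is a Volterra operator (Theorem \ref{Th T2Volterra} applied to the swapped pair), and the right-hand side manifestly so, being assembled from multiplication by the bounded functions $f$ and $1/f$, integration, and the continuous operator $\mathbf{T}_{2}$. Because finite linear combinations of solutions of $(\partial_{x}^{2}+\omega^{2})u=0$, $\omega\in\mathbb{C}$, are dense in $C^{1}[-a,a]$, the equality propagates by continuity exactly as in the two preceding corollaries. Once the involutivity $q_{1}=2(g'/g)^{2}-q_{2}$ and the matching of initial data are in place, everything but the parameter tracking is routine; that sign/parameter check in the swapped application of Theorem \ref{Th T2Integral} is where I expect the only real obstacle to lie.
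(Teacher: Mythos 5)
Your proposal is correct and follows essentially the same route as the paper: the paper's own justification is precisely the observation that $A_{1}$ is the Darboux transformation of $A_{2}$ with respect to the solution $1/f$ (with $g(0)=1$, $g'(0)=-h$), so that Theorem \ref{Th T2Integral} applied to the swapped pair yields \eqref{T1sol} on solutions of $(\partial_{x}^{2}+\omega^{2})u=0$, and the extension to $C^{1}[-a,a]$ proceeds by the same density-and-continuity argument used for the two preceding corollaries. Your explicit verification of $2(g'/g)^{2}-q_{2}=q_{1}$ and of the parameter bookkeeping $\mathbf{T}_{1;-g'(0)}=\mathbf{T}_{1;h}$ simply makes precise what the paper leaves implicit.
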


From the second commutative diagram at the beginning of this subsection we may
deduce some commutation relations between the operators $\mathbf{T}_{1}$,
$\mathbf{T}_{2}$ and $d/dx$. The proof immediately follows from \eqref{T2sol}
and \eqref{T1sol}.

\begin{cor}
[\cite{KrT2012}]\label{Cor Commutation Relations}The following operator
equalities hold on $C^{1}[-a,a]$:
\begin{align}
\partial_{x}f\mathbf{T}_{2}  &  =f\mathbf{T}_{1}\partial_{x}\label{CommutT1dx}%
\\
\partial_{x}\frac{1}{f}\mathbf{T}_{1}  &  =\frac{1}{f}\mathbf{T}_{2}%
\partial_{x}. \label{CommutT2dx}%
\end{align}

\end{cor}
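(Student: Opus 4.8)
The plan is to derive the two commutation relations \eqref{CommutT1dx} and \eqref{CommutT2dx} directly from the integral representations \eqref{T2sol} and \eqref{T1sol} established in the preceding theorem and corollary, exactly as the statement announces. The strategy rests on the observation that both formulas already encode how $\mathbf{T}_1$ and $\mathbf{T}_2$ intertwine with the first-order factorization operators $\tfrac{1}{f}\partial_x f$ and $f\partial_x\tfrac{1}{f}$, so the commutation relations amount to algebraically inverting those first-order operators inside the integral expressions.

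First I would prove \eqref{CommutT1dx}. I start from the representation \eqref{T2sol},
\[
\mathbf{T}_{2}[u](x)=\frac{1}{f(x)}\left(\int_{0}^{x}f(\eta)\,\mathbf{T}_{1}[u']( \eta)\,d\eta+u(0)\right),
\]
valid for $u\in C^1[-a,a]$. Multiplying by $f(x)$ and differentiating both sides with respect to $x$, the fundamental theorem of calculus eliminates the integral: $\partial_x\bigl(f\mathbf{T}_2[u]\bigr)(x)=f(x)\,\mathbf{T}_1[u'](x)$. Since $u'$ is the result of applying $\partial_x$ to $u$, this reads $\partial_x f\mathbf{T}_2=f\mathbf{T}_1\partial_x$ as an operator identity on $C^1[-a,a]$, which is precisely \eqref{CommutT1dx}. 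The only mild care needed here is that differentiating under the constant $u(0)$ kills that term, and that $\mathbf{T}_1[u']$ is continuous so that $f\cdot\mathbf{T}_1[u']$ is a legitimate integrand; both are immediate from the mapping properties already established.

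The relation \eqref{CommutT2dx} is symmetric and follows the same way from \eqref{T1sol},
\[
\mathbf{T}_{1}[u](x)=f(x)\left(\int_{0}^{x}\frac{1}{f(\eta)}\,\mathbf{T}_{2}[u']( \eta)\,d\eta+u(0)\right);
\]
dividing by $f(x)$, differentiating in $x$, and again discarding the $u(0)$ term yields $\partial_x\bigl(\tfrac{1}{f}\mathbf{T}_1[u]\bigr)=\tfrac{1}{f}\mathbf{T}_2[u']$, i.e. $\partial_x\tfrac{1}{f}\mathbf{T}_1=\tfrac{1}{f}\mathbf{T}_2\partial_x$. I would remark that the two corollaries feeding these formulas are genuinely dual: $A_1$ is the Darboux transform of $A_2$ with respect to $1/f$, so \eqref{T1sol} is obtained from \eqref{T2sol} by interchanging the roles of $f$ and $1/f$ and of the two operators, which is exactly why the two commutation relations come out as mirror images.

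I do not anticipate a serious obstacle, since the heavy analytic work is carried by Theorems \ref{Th T2Integral}--\ref{Th T2Volterra} and their corollaries; the present statement is an elementary consequence of differentiating a Volterra-type primitive. The one point to state carefully is the domain: each identity holds on $C^1[-a,a]$ because that is where \eqref{T2sol} and \eqref{T1sol} are valid and where $u'$ is available as a continuous argument for the inner transmutation. Thus the main step is simply \emph{recognizing} that each representation formula is itself the integrated form of the desired commutation relation, so that one line of differentiation recovers it.
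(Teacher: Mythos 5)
Your proposal is correct and follows the paper's own route: the paper proves this corollary by noting that it ``immediately follows from \eqref{T2sol} and \eqref{T1sol}'', which is exactly your one-line differentiation of each representation via the fundamental theorem of calculus. You have merely spelled out the details (multiplying/dividing by $f$, differentiating, discarding the constant $u(0)$) that the paper leaves implicit.
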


In \cite{KMoT} the following notion of generalized derivatives was introduced.
Consider a function $g$ assuming that both $f$ and $g$ possess the derivatives
of all orders up to the order $n$ on the segment $[-a,a]$. Then in $[-a,a]$
the following generalized derivatives are defined
\begin{align*}
\gamma_{0}(g)(x)  &  =g(x),\\
\gamma_{k}(g)(x)  &  =\big(f^{2}(x)\big)^{(-1)^{k-1}}\big(\gamma
_{k-1}(g)\big)^{\prime}(x)
\end{align*}
for $k=1,2,\ldots,n$.

Let a function $u$ be defined by the equality
\[
g=\frac{1}{f}\mathbf{T}_{1}u,
\]
and assume that $u\in C^{n}[-a,a]$. Note that below we do not necessarily
require that the functions $f$ and $g$ be from $C^{n}[-a,a]$. With the use of
\eqref{CommutT1dx} and \eqref{CommutT2dx} we have
\begin{align*}
\gamma_{1}(g)  &  =f^{2}\cdot\Big(\frac{1}{f}\mathbf{T}_{1}u\Big)^{\prime
}=f^{2}\cdot\frac{1}{f}\mathbf{T}_{2}u^{\prime}=f\mathbf{T}_{2}u^{\prime},\\
\gamma_{2}(g)  &  =\frac{1}{f^{2}}\cdot\Big(f\mathbf{T}_{2}u^{\prime
}\Big)^{\prime}=\frac{1}{f^{2}}\cdot f\mathbf{T}_{1}u^{\prime\prime}=\frac
{1}{f}\mathbf{T}_{1}u^{\prime\prime}.
\end{align*}
By induction we obtain the following corollary.

\begin{cor}
[\cite{KrT2012}]\label{GeneralDerivTransm} Let $u\in C^{n}[-a,a]$ and
$g=\frac{1}{f}\mathbf{T}_{1}u$. Then
\[
\gamma_{k}(g)=f\mathbf{T}_{2}u^{(k)}\qquad\text{if\ }k\text{\ is odd,}\ k\leq
n,
\]
and
\[
\gamma_{k}(g)=\frac{1}{f}\mathbf{T}_{1}u^{(k)}\qquad\text{if\ }k\text{\ is
even,}\ k\leq n.
\]

\end{cor}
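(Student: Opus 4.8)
The plan is to argue by induction on $k$, using the two commutation relations from Corollary \ref{Cor Commutation Relations} as the engine that advances the index by one at each step, while alternating between the two operators $\mathbf{T}_{1}$ and $\mathbf{T}_{2}$ and between the factors $f$ and $1/f$. The base case is the hypothesis itself: for $k=0$ we have $\gamma_{0}(g)=g=\frac{1}{f}\mathbf{T}_{1}u=\frac{1}{f}\mathbf{T}_{1}u^{(0)}$, which matches the ``$k$ even'' formula. The worked computations for $k=1$ and $k=2$ displayed just before the statement are exactly the first two inductive steps, so the task is to show that the same mechanism propagates.

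First I would treat the inductive step passing from an even index $k$ to the odd index $k+1$. Assuming $\gamma_{k}(g)=\frac{1}{f}\mathbf{T}_{1}u^{(k)}$, the definition of the generalized derivative gives $\gamma_{k+1}(g)=\bigl(f^{2}\bigr)^{(-1)^{k}}\bigl(\gamma_{k}(g)\bigr)'=f^{2}\bigl(\frac{1}{f}\mathbf{T}_{1}u^{(k)}\bigr)'$, since $k$ is even. Now I apply \eqref{CommutT2dx}, namely $\partial_{x}\frac{1}{f}\mathbf{T}_{1}=\frac{1}{f}\mathbf{T}_{2}\partial_{x}$, to rewrite $\bigl(\frac{1}{f}\mathbf{T}_{1}u^{(k)}\bigr)'=\frac{1}{f}\mathbf{T}_{2}u^{(k+1)}$, and multiplying by $f^{2}$ yields $\gamma_{k+1}(g)=f\mathbf{T}_{2}u^{(k+1)}$, which is the asserted ``$k$ odd'' formula. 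For the step from an odd index $k$ to the even index $k+1$, I start from $\gamma_{k}(g)=f\mathbf{T}_{2}u^{(k)}$ and compute $\gamma_{k+1}(g)=\bigl(f^{2}\bigr)^{(-1)^{k-1}}\bigl(f\mathbf{T}_{2}u^{(k)}\bigr)'=\frac{1}{f^{2}}\bigl(f\mathbf{T}_{2}u^{(k)}\bigr)'$, using that $k$ is odd. This time I invoke \eqref{CommutT1dx}, which I rewrite as $\partial_{x}f\mathbf{T}_{2}=f\mathbf{T}_{1}\partial_{x}$, to get $\bigl(f\mathbf{T}_{2}u^{(k)}\bigr)'=f\mathbf{T}_{1}u^{(k+1)}$, so that $\gamma_{k+1}(g)=\frac{1}{f^{2}}\cdot f\mathbf{T}_{1}u^{(k+1)}=\frac{1}{f}\mathbf{T}_{1}u^{(k+1)}$, completing the induction.

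The only delicate point is the interpretation and applicability of the commutation relations, which are stated as operator equalities on $C^{1}[-a,a]$ in Corollary \ref{Cor Commutation Relations}. To apply \eqref{CommutT2dx} at step $k$ I need $u^{(k)}\in C^{1}[-a,a]$, i.e. $u^{(k+1)}$ to exist and be continuous, and likewise for \eqref{CommutT1dx}; this is precisely guaranteed by the standing hypothesis $u\in C^{n}[-a,a]$ as long as $k+1\le n$, which is exactly the range $k\le n$ appearing in the statement. Thus the expected main obstacle is purely a matter of bookkeeping the regularity: one must verify at each stage that the function being differentiated lies in the domain where the relevant commutation identity holds, and that the output $\mathbf{T}_{2}u^{(k+1)}$ or $\mathbf{T}_{1}u^{(k+1)}$ still makes sense as a continuous function so that the next generalized derivative can be formed. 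I would note explicitly, as the text already emphasizes, that no differentiability of $f$ or $g$ beyond what $f$ inherits as a solution of \eqref{SLhom} is required, since all the differentiation is transferred onto $u$ through the commutation relations rather than being performed on $g$ directly.
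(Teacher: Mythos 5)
Your proof is correct and follows essentially the same route as the paper: the paper's own argument is precisely this induction driven by the commutation relations \eqref{CommutT1dx} and \eqref{CommutT2dx}, with the $k=1$ and $k=2$ cases displayed as the pattern and the rest left to induction. The only blemish is notational: in your odd-to-even step the exponent from the definition of $\gamma_{k+1}$ should be $(-1)^{k}$ rather than $(-1)^{k-1}$, but the value $1/f^{2}$ you actually use is the correct one, so the argument is unaffected.
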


\begin{ex}
\label{DarbouxExample1} We start with the operator $A_{0}=d^{2}/dx^{2}$. We
have to pick up such a solution $f$ of the equation $A_{0}f=0$ that
$f^{\prime}/f\neq0$. This is in order to obtain an operator $A_{1}\neq A_{0}$
as a result of the Darboux transformation of $A_{0}$. For such solution $f$
consider, e.g., $f_{0}(x)=x+1$. Both $f_{0}$ and $1/f_{0}$ are bounded on any
segment $[-a,a]\subset(-1;1)$ and the Darboux transformed operator has the
form $A_{1}=\frac{d^{2}}{dx^{2}}-\frac{2}{(x+1)^{2}}$.

The transmutation operator $T$ for $A_{0}$ is obviously an identity operator
and $K_{0}(x,t;0)=0$. Since $f_{0}^{\prime}(0)=1$, we look for the
parametrized operator $\mathbf{T}_{0;1}$. Its kernel is given by
\eqref{KmainChangeOfH}: $\mathbf{K}_{0}(x,t;1)=1/2$. From Theorem
\ref{Th T2Volterra} we obtain the transmutation kernel for the operator
$A_{1}$
\begin{equation}
\mathbf{K}_{1}(x,t;-1)=-\frac{1}{x+1}\cdot\frac{1-t}{2}=\frac{t-1}{2(x+1)},
\label{ExampleK1}%
\end{equation}
the kernel from Example \ref{modelExample}.

To obtain a less trivial example consider again the operator $A_{1}%
=\frac{d^{2}}{dx^{2}}-\frac{2}{(x+1)^{2}}$ and the function $f_{1}%
(x)=(x+1)^{2}$ as a solution of $A_{1}f=0$. Since $h=f_{1}^{\prime}(0)=2$, we
compute $\mathbf{K}_{1}(x,t;2)$ from \eqref{ExampleK1} using
\eqref{KmainChangeOfH}
\[
\mathbf{K}_{1}(x,t;2)=\frac{3x^{2}+6x+4-3t^{2}+2t}{4(x+1)}.
\]
The Darboux transformation of the operator $A_{1}$ with respect to the
solution $f_{1}$ is the operator $A_{2}=\frac{d^{2}}{dx^{2}}-\frac
{6}{(x+1)^{2}}$ and by Theorem \ref{Th T2Volterra} the transmutation operator
$\mathbf{T}_{2;-2}$ for $A_{2}$ is given by the Volterra integral operator
\eqref{Tmain} with the kernel
\begin{multline*}
\mathbf{K}_{2}(x,t;-2)=-\frac{1}{(x+1)^{2}}\bigg(\int_{-t}^{x}\frac
{-3t+1}{2(s+1)}(s+1)^{2}\,ds+(1-t)^{2}\bigg)=\\ \frac{(3t-1)(x+1)^{2}%
-3(t-1)^{2}(t+1)}{4(x+1)^{2}}.
\end{multline*}

This procedure may be continued iteratively. Consider the operators
\[
A_{n}:=\frac{d^{2}}{dx^{2}}-\frac{n(n+1)}{(x+1)^{2}}.
\]
The function $f_{n}(x)=(x+1)^{n+1}$ is a solution of the equation $A_{n}f=0$. The Darboux
transformation of the operator $A_{n}$ with respect to the solution $f_{n}$ is
the operator
\[
\frac{d^{2}}{dx^{2}}-2\Big(\frac{f_{n}^{\prime}(x)}{f_{n}(x)}\Big)^{2}%
+\frac{n(n+1)}{(x+1)^{2}}=\frac{d^{2}}{dx^{2}}-\frac{(n+1)(n+2)}{(x+1)^{2}},
\]
i.e., exactly the operator $A_{n+1}$. If we know $\mathbf{K}_{n}(x,t;-n)$ for
the operator $A_{n}$, by \eqref{KmainChangeOfH} we compute the kernel
$\mathbf{K}_{n}(x,t;n+1)$ corresponding to the solution $f_{n}(x)$ and by
Theorem \ref{Th T2Volterra} we may calculate the kernel $\mathbf{K}%
_{n+1}(x,t;-n-1)$. Careful analysis shows that we have to integrate only
polynomials in all integrals involved, so the described procedure can be
performed up to any fixed $n$.
\end{ex}

\begin{ex}
Consider the Schr\"{o}dinger equation
\begin{equation}
u^{\prime\prime}+2\sech^{2}(x)\,u=u. \label{ExampleSoliton}%
\end{equation}
This equation appears in soliton theory and as an example of a reflectionless
potential in the one-dimensional quantum scattering theory (see, e.g.
\cite{Lamb}). Equation \eqref{ExampleSoliton} can be obtained as a result of
the Darboux transformation of the equation $u^{\prime\prime}=u$ with respect
to the solution $f(x)=\cosh x$. The transmutation operator for the operator
$A_{1}=\partial_{x}^{2}-1$ was calculated in \cite[Example 3]{CKT}. Its kernel
is given by the expression
\[
\mathbf{K}_{1}(x,t;0)=-\frac{1}{2}\frac{\sqrt{x^{2}-t^{2}}I_{1}(\sqrt
{x^{2}-t^{2}})}{x-t},
\]
where $I_{1}$ is the modified Bessel function of the first kind. Hence from
Theorem \ref{Th T2Volterra} we obtain the transmutation kernel for the
operator $A_{2}=\partial_{x}^{2}+2\sech^{2}x-1$
\[
\mathbf{K}_{2}(x,t;0)=\frac{1}{2\cosh(x)}\int_{-t}^{x}\bigg(  \frac
{I_{0}(\sqrt{s^{2}-t^{2}})t}{s-t}+\frac{\sqrt{s^{2}-t^{2}}I_{1}(\sqrt
{s^{2}-t^{2}})}{(s-t)^{2}}\bigg)  \cosh s\,ds.
\]

\end{ex}

\section{Transmutation operator for the one-dimensional Dirac equation with a
Lorentz scalar potential\label{Sect Transmutation Dirac}}

One-dimensional Dirac equations with Lorentz scalar potentials are widely
studied (see, for example, \cite{Casahorran, Chen, Hiller, Ho, JP, KP, KhR,
NT, RV, Rukeng} and \cite{NPS} for intertwining techniques for them).

According to \cite{NT} the Dirac equation in one space dimension with a
Lorentz scalar potential can be written as
\begin{align}
(\partial_{x}+m+S(x))\Psi_{1}  &  =E\Psi_{2},\label{Dirac1}\\
(-\partial_{x}+m+S(x))\Psi_{2}  &  =E\Psi_{1}, \label{Dirac2}%
\end{align}
where $m$ ($m>0$) is the mass and $S(x)$ is a Lorentz scalar. Denote
$\eta=m+S$ and write the system \eqref{Dirac1}, \eqref{Dirac2} in a matrix
form as
\[%
\begin{pmatrix}
\partial_{x}+\eta & 0\\
0 & \partial_{x}-\eta
\end{pmatrix}%
\begin{pmatrix}
\Psi_{1}\\
\Psi_{2}%
\end{pmatrix}
=E%
\begin{pmatrix}
0 & 1\\
-1 & 0
\end{pmatrix}%
\begin{pmatrix}
\Psi_{1}\\
\Psi_{2}%
\end{pmatrix}
. \label{DiracMatrix}%
\]
In order to apply the results on the transmutation operators and
factorizations \eqref{A1factor}, \eqref{A2factor} we consider a function $f$
such that
\[
\frac{f^{\prime}(x)}{f(x)}=-\eta=-m-S(x).
\]
We can take $f(x)=\exp\left(  -\int_{0}^{x}(m+S(s))\,ds\right)  $, then
$f(0)=1 $ and $f$ does not vanish. Suppose the operators $\mathbf{T}_{1}$ and
$\mathbf{T}_{2}$ are transmutations for the operators $A_{1}=\big(\partial
_{x}+\frac{f^{\prime}}{f}\big)\big(\partial_{x}-\frac{f^{\prime}}{f}\big)$ and
$A_{2}=\big(\partial_{x}-\frac{f^{\prime}}{f}\big)\big(\partial_{x}%
+\frac{f^{\prime}}{f}\big)$ respectively (corresponding to functions $f$ and
$1/f$ in the sense of Proposition \ref{ThMapsSolutions}). As was shown in
\cite{KrT2012} with the use of commutation relations \eqref{CommutT1dx} and
\eqref{CommutT2dx}, the operator
\[
\mathbf{T}=%
\begin{pmatrix}
\mathbf{T}_{1} & 0\\
0 & \mathbf{T}_{2}%
\end{pmatrix}
\]
transmutes any solution $%
\begin{pmatrix}
u_{1}\\
u_{2}%
\end{pmatrix}
$ of the system
\begin{align}
u_{1}^{\prime}  &  =Eu_{2}\label{DiracTriv1}\\
u_{2}^{\prime}  &  =-Eu_{1} \label{DiracTriv2}%
\end{align}
into the solution $%
\begin{pmatrix}
\Psi_{1}\\
\Psi_{2}%
\end{pmatrix}
$ of the system \eqref{Dirac1},\eqref{Dirac2} with the initial conditions
$\Psi_{1}(0)=u_{1}(0)$, $\Psi_{2}(0)=u_{2}(0)$. And vice versa if $\begin{pmatrix}
\Psi_{1}\\
\Psi_{2}%
\end{pmatrix}$ is a solution of the system \eqref{Dirac1},\eqref{Dirac2}, then the
operator $\Bigl(%
\begin{smallmatrix}
\mathbf{T}_{1}^{-1} & 0\\
0 & \mathbf{T}_{2}^{-1}%
\end{smallmatrix}
\Bigr)$ transmutes it into the solution $
\begin{pmatrix}
u_{1}\\
u_{2}%
\end{pmatrix}$ of \eqref{DiracTriv1},\eqref{DiracTriv2} such that $u_{1}(0)=\Psi
_{1}(0)$, $u_{2}(0)=\Psi_{2}(0)$.

Consider two systems of functions $\{\varphi_{k}\}_{k=0}^{\infty}$ and
$\{\psi_{k}\}_{k=0}^{\infty}$ constructed from the function $f$ by
\eqref{phik} and \eqref{psik}. The general solution of the system
\eqref{DiracTriv1},\eqref{DiracTriv2} is given by
\begin{align*}
u_{1}  &  = C_{1}v_{1}+C_{2}v_{2}\\
u_{2}  &  = C_{2}v_{1}-C_{1}v_{2},
\end{align*}
where $C_{1}$ and $C_{2}$ are arbitrary constants and
\begin{align*}
v_{1}(x)  &  =\cos Ex = \sum_{k=0}^{\infty}\frac{(-1)^kE^{2k}}{(2k)!}x^{2k},\\
v_{2}(x)  &  =\sin Ex = \sum_{k=0}^{\infty}\frac{(-1)^kE^{2k+1}}{(2k+1)!}x^{2k+1}.
\end{align*}
From \eqref{T1x^k} and \eqref{T2x^k} we see that the general solution of the
one-dimensional Dirac system \eqref{Dirac1},\eqref{Dirac2} has the form
\begin{align*}
\Psi_{1}  &  =C_{1}\sum_{k=0}^{\infty}\frac{(-1)^kE^{2k}}{(2k)!}\varphi_{2k}+
C_{2}\sum_{k=0}^{\infty}\frac{(-1)^kE^{2k+1}}{(2k+1)!}\varphi_{2k+1},\\
\Psi_{2}  &  =C_{2}\sum_{k=0}^{\infty}\frac{(-1)^kE^{2k}}{(2k)!}\psi_{2k}- C_{1}%
\sum_{k=0}^{\infty}\frac{(-1)^kE^{2k+1}}{(2k+1)!}\psi_{2k+1}.
\end{align*}

\begin{rem}
It is possible to consider the two- or three-dimensional Dirac system and to
construct the transmutation operator for it under some conditions on the
potential. But the techniques involved, such as bicomplex numbers,
pseudoanalytic function theory, Vekua equation and formal powers go well
beyond the scope of the present article. We refer interested readers to the
recent paper \cite{CKM}.
\end{rem}


\begin{thebibliography}{99}                                                                                               %

\bibitem {BS} V. G. Bagrov and B. F. Samsonov, \emph{Darboux
transformation, factorization, and supersymmetry in one-dimensional quantum
mechanics}. Teoret. Mat. Fiz. \textbf{104} (1995), no. 2, 356--367 (in Russian);
translation in Theoret. and Math. Phys. \textbf{104} (1995), no. 2, 1051--1060. 

\bibitem {BC} C. A. Balanis,  \emph{Advanced Engineering Electromagnetics}.
John Wiley \& Sons, 1989.

\bibitem {Gilbert} H. Begehr and R. Gilbert, \emph{Transformations,
transmutations and kernel functions, vol. 1--2}. Longman Scientific \&
Technical, Harlow, 1992. 

\bibitem {Bellman} R. Bellman,  \emph{Perturbation techniques in
mathematics, engineering and physics}. Dover Publications, 2003.

\bibitem {BenAmara}J. Ben Amara and A. A. Shkalikov, \emph{A
Sturm-Liouville problem with physical and spectral parameters in boundary
conditions}. Mathematical Notes \textbf{66} (1999), no. 2, 127--134.

\bibitem {Breh}L. M. Brekhovskikh,  \emph{Waves in layered media}. New
York, Academic Press, 1960.

\bibitem {CamposKr}H. Campos and V. V. Kravchenko,  \emph{A finite-sum
representation for solutions for the Jacobi operator}. Journal of Difference
Equations and Applications \textbf{17} (2011) No. 4, 567--575.

\bibitem {CKM}H. Campos, V. V. Kravchenko and L. Mendez, \emph{Complete
families of solutions for the Dirac equation: an application of bicomplex
pseudoanalytic function theory and transmutation operators}. To appear in the
Advances in the Applied Clifford Algebras (2012), available from arxiv.org,
\texttt{arXiv:1111.4198}.

\bibitem {CKT} H. Campos, V. V. Kravchenko and S. Torba,
\emph{Transmutations, L-bases and complete families of solutions of the stationary
Schr\"{o}dinger equation in the plane}. J. Math. Anal. Appl. \textbf{389} (2012), No. 2,
1222--1238. 

\bibitem {Carroll} R. W. Carroll, \emph{Transmutation theory and
applications}. Mathematics Studies, Vol. 117, North-Holland, 1985. 

\bibitem {Casahorran} J. Casahorr\'{a}n, \emph{Solving smultaneously
Dirac and Ricatti equations}. Journal of Nonlinear Mathematical Physics \textbf{5} (1985),
 No. 4, 371--382. 

\bibitem {CKKO2009} R. Castillo, K. V. Khmelnytskaya, V. V. Kravchenko and
H. Oviedo,  \emph{Efficient calculation of the reflectance and transmittance of
finite inhomogeneous layers}. J. Opt. A: Pure and Applied Optics \textbf{11} (2009), 065707.

\bibitem {CKOR}R. Castillo R, V. V. Kravchenko, H. Oviedo and V. S. Rabinovich,
\emph{Dispersion equation and eigenvalues for quantum wells using spectral
parameter power series}. J. Math. Phys., \textbf{52} (2011), 043522 (10 pp.)

\bibitem {Chanane2008}B. Chanane,  \emph{Sturm-Liouville problems with
parameter dependent potential and boundary conditions}. J. Comput. Appl. Math.
\textbf{212} (2008), 282--290.

\bibitem {Chen} C.-Y. Chen, \emph{Exact solutions of the Dirac
equation with scalar and vector Hartmann potentials}. Physics Letters A. \textbf{339} (2005),
283--287. 

\bibitem {Ch}A. H. Cherin,  \emph{An introduction to Optical Fibers}.
McGraw-Hill, 1983.

\bibitem {WCC}W. C. Chew,  \emph{Waves and fields in inhomogeneous media}.
Van Nostrand Reinhold, New York, 1990.

\bibitem {Cies} J. L. Cie\'{s}li\'{n}ski, \emph{Algebraic construction
of the Darboux matrix revisited}. J. Phys. A: Math. Theor. \textbf{42} (2009), 
404003.

\bibitem {CodeBrowne2005}W. J. Code and P. J. Browne, 
\emph{Sturm-Liouville problems with boundary conditions depending quadratically on
the eigenparameter}. J. Math. Anal. Appl. \textbf{309} (2005), 729--742.

\bibitem {CoskunBayram2005}H. Co\c{s}kun and N. Bayram, 
\emph{Asymptotics of eigenvalues for regular Sturm-Liouville problems with
eigenvalue parameter in the boundary condition}. J. Math. Anal. Appl. 306 (2005), no.
2, 548--566.

\bibitem {Delsarte1}J. Delsarte,  \emph{Sur une extension de la formule de
Taylor}. J Math. Pures et Appl. \textbf{17} (1938), 213--230.

\bibitem {Delsarte2}J. Delsarte,  \emph{Sur certaines transformations
fonctionnelles relatives aux \'{e}quations lin\'{e}aires aux d\'{e}riv\'{e}es
partielles du second ordre}. C. R. Acad. Sc. \textbf{206} (1938), 178--182.

\bibitem {DelsarteLions1956}J. Delsarte and J. L. Lions, 
\emph{Transmutations d'op\'{e}rateurs diff\'{e}rentiels dans le domaine complexe}.
Comment. Math. Helv. \textbf{32} (1956), 113--128.

\bibitem {Fage} M. K. Fage and N. I. Nagnibida. \emph{The problem of
equivalence of ordinary linear differential operators}. Novosibirsk: Nauka,
1987 (in Russian). 

\bibitem {FM}L. B. Felsen and N. Marcuvitz, \emph{Radiation and Scattering
of Waves}. IEEE Press, New York, 1994.

\bibitem {Flugge}S. Fl\"{u}gge,  \emph{Practical Quantum Mechanics}.
Berlin: Springer-Verlag, 1994.

\bibitem {Fulton77}Ch. T.  Fulton,  \emph{Two-point boundary value
problems with eigenvalue parameter contained in the boundary conditions}. Proc.
Roy. Soc. Edinburgh Sect. A \textbf{77} (1977), no. 3--4, 293--308. 

\bibitem {Garab} P. R. Garabedian, \emph{Partial differential
equations}. New York--London: John Willey and Sons, 1964. 

\bibitem {GHZh} C. Gu, H. Hu, and Z. Zhou, \emph{Darboux
Transformations in Integrable Systems}, Springer-Verlag, Berlin, 2005. 

\bibitem {Hall}R. L. Hall,  \emph{Square-well representations for
potentials in quantum mechanics}. J. Math. Phys. \textbf{33} (1992), 3472--3476.

\bibitem {Harrison}P. Harrison,  \emph{Quantum Wells, Wires and Dots:
Theoretical and Computational Physics of Semiconductor Nanostructures}.
Chichester: Wiley, 2010.

\bibitem {HV} A. D. Hemery and A. P. Veselov, \emph{Whittaker-Hill
equation and semifinite-gap Schr\"{o}dinger operators}. J. Math. Phys. \textbf{51} (2010),
 072108; \texttt{doi:10.1063/1.3455367}. 

\bibitem {Hiller} J. R. Hiller, \emph{Solution of the one-dimensional
Dirac equation with a linear scalar potential}. Am. J. Phys. \textbf{70(5)} (2002), 522--524. 

\bibitem {Ho} C.-L. Ho, \emph{Quasi-exact solvability of Dirac
equation with Lorentz scalar potential}. Ann. Physics \textbf{321} (2006), No. 9,
2170--2182. 

\bibitem {JP} R. Jackiw and S.-Y. Pi, \emph{Persistence of zero modes
in a gauged Dirac model for bilayer graphene}. Phys. Rev. B \textbf{78} (2008), 
132104. 

\bibitem {KP} N. Kevlishvili, G. Piranishvili, \emph{Klein paradox in
modified Dirac and Salpeter equations}. Fizika \textbf{9} (2003), No. 3,4, 57--61. 

\bibitem {KKRosu} K. V. Khmelnytskaya, V. V. Kravchenko and H.
C. Rosu, \emph{Eigenvalue problems, spectral parameter power series, and modern
applications}. Submitted, available at \texttt{arXiv:1112.1633}.

\bibitem {KhR} K. V. Khmelnytskaya and H. C. Rosu, \emph{An
amplitude-phase (Ermakov--Lewis) approach for the Jackiw--Pi model of bilayer
graphene}. J. Phys. A: Math. Theor. \textbf{42} (2009),  042004.

\bibitem {KiraRosu2010}K. V. Khmelnytskaya and H. C. Rosu,  \emph{A new
series representation for Hill's discriminant}. Annals of Physics \textbf{325} (2010),
2512--2521.

\bibitem {KostenkoTeschl}A. Kostenko and G. Teschl,  \emph{On the
singular Weyl--Titchmarsh function of perturbed spherical Schr\"{o}dinger
operators}. J. Differential Equations \textbf{250} (2011), 3701--3739.

\bibitem {KrCV08} V. V. Kravchenko, \emph{A representation for
solutions of the Sturm-Liouville equation}. Complex Variables and Elliptic
Equations \textbf{53} (2008),  775--789. 

\bibitem {APFT} V. V. Kravchenko, \emph{Applied pseudoanalytic
function theory}. Basel: Birkh\"{a}user, Series: Frontiers in Mathematics,
2009. 

\bibitem {KrCMA2011} V. V. Kravchenko, \emph{On the completeness of
systems of recursive integrals}. Communications in Mathematical Analysis, Conf.
\textbf{03} (2011), 172--176. 

\bibitem {KMoT} V. V. Kravchenko, S. Morelos and S. Tremblay,
\emph{Complete systems of recursive integrals and Taylor series for solutions of
Sturm-Liouville equations}.  Mathematical Methods in the Applied Sciences,
Published online, \texttt{doi: 10.1002/mma.1596}.

\bibitem {KrPorter2010} V. V. Kravchenko and R. M. Porter,
\emph{Spectral parameter power series for Sturm-Liouville problems}. Mathematical
Methods in the Applied Sciences \textbf{33} (2010), 459--468. 

\bibitem {KrT2012} V. V. Kravchenko and S. Torba, 
\emph{Transmutations for Darboux transformed operators with applications}. J. Phys.
A: Math. Theor. \textbf{45} (2012), \# 075201 (21 pp.).

\bibitem {KV2011} V. V. Kravchenko and U. Velasco-Garc\'{\i}a, 
\emph{Dispersion equation and eigenvalues for the Zakharov-Shabat system using
spectral parameter power series}. J. Math. Phys. \textbf{52} (2011), 063517.

\bibitem {Lamb} G. L. Lamb, \emph{Elements of soliton theory}. John
Wiley \& Sons, New York, 1980. 

\bibitem {LevitanInverse} B. M. Levitan, \emph{Inverse Sturm-Liouville
problems}. VSP, Zeist, 1987. 

\bibitem {Lions57} J. L. Lions, \emph{Solutions \'{e}l\'{e}mentaires
de certains op\'{e}rateurs diff\'{e}rentiels \`{a} coefficients variables}.
Journ. de Math. \textbf{36} (1957), Fasc 1, 57--64. 

\bibitem {Marchenko} V. A. Marchenko, \emph{Sturm-Liouville operators
and applications}. Basel: Birkh\"{a}user, 1986. 

\bibitem {Matveev} V. Matveev and M. Salle, \emph{Darboux
transformations and solitons}. New York, Springer, 1991. 

\bibitem {MC}H. Medwin and C. S. Clay,  \emph{Fundamentals of Oceanic
Acoustics}. Academic Press, Boston, San Diego, New York, 1997.

\bibitem {NT} Y. Nogami and F. M. Toyama, \emph{Supersymmetry aspects of
the Dirac equation in one dimension with a Lorentz scalar potential}. Physical
Review A. \textbf{47} (1993),  no. 3, 1708--1714. 

\bibitem {NPS} L. M. Nieto, A. A. Pecheritsin and B. F.
Samsonov, \emph{Intertwining technique for the one-dimensional stationary Dirac
equation}, Annals of Physics \textbf{305} (2003), 151--189. 

\bibitem {OR}O. A. Obrezanova and V. S. Rabinovich, \emph{Acoustic field,
generated by moving source in stratified waveguides}. Wave Motion \textbf{27} (1998), 155--167.

\bibitem {PPS} A. A. Pecheritsin, A. M. Pupasov and B. F.
Samsonov, \emph{Singular matrix Darboux transformations in the inverse-scattering
method}, J. Phys. A: Math. Theor. \textbf{44} (2011), 205305. 

\bibitem {RS} C. Rogers and W. K. Schief, \emph{Backlund and Darboux
transformations: geometry and modern applications in soliton theory}. Cambridge
University Press, 2002. 

\bibitem {Rosu} H. Rosu, \emph{Short survey of Darboux
transformations}, Proceedings of \textquotedblleft Symmetries in Quantum
Mechanics and Quantum Optics\textquotedblright, Burgos, Spain, 1999, 301--315.

\bibitem {RV} R. K. Roychoudhory and Y. P. Varshni, \emph{Shifted 1/N
expansion and scalar potential in the Dirac equation}. J. Phys. A: Math. Gen.
\textbf{20} (1987),  L1083--L1087. 

\bibitem {Sitnik} S. M. Sitnik, \emph{Transmutations and applications:
a survey}. \texttt{arXiv:1012.3741v1 [math.CA]}, originally published in the book:
\textquotedblleft \emph{Advances in Modern Analysis and Mathematical
Modeling}\textquotedblright\ Editors: Yu.F.Korobeinik, A.G.Kusraev,
Vladikavkaz: Vladikavkaz Scientific Center of the Russian Academy of Sciences
and Republic of North Ossetia--Alania, 2008, 226--293. 

\bibitem {Rukeng} R. Su, Yu Zhong and S. Hu, \emph{Solutions of Dirac
equation with one-dimensional scalarlike potential}. Chinese Phys.Lett. \textbf{8} (1991),
 no.3, 114--117. 

\bibitem {Trimeche} K. Trimeche. \emph{Transmutation operators and
mean-periodic functions associated with differential operators}. London:
Harwood Academic Publishers, 1988. 

\bibitem {Walter}J. Walter,  \emph{Regular eigenvalue problems with
eigenvalue parameter in the boundary condition}. Math. Z. \textbf{133} (1973), 301--312. 
\end{thebibliography}
\end{document}